\def\BibTeX{{\rm B\kern-.05em{\sc i\kern-.025em b}\kern-.08em
    T\kern-.1667em\lower.7ex\hbox{E}\kern-.125emX}}
\def\endthebibliography{%
	\def\@noitemerr{\@latex@warning{Empty `thebibliography' environment}}%
	\endlist
}
\definecolor{gray}{rgb}{0.5,0.5,0.5}
\newcommand{\eat}[1]{}
\newcommand{\prob}{TIC\xspace}%_{k,r,s,f}$}
\newcommand{\probno}{TONIC\xspace}%_{k,r,s,f}$}
\newcommand{\fx}{$f(\cdot)$\xspace}
\newtheorem{definition}{Definition}
\newtheorem{remark}{Remark}
\newtheorem{lemma}{Lemma}
\newtheorem{example}{Example}
\newtheorem{theorem}{Theorem}
\newtheorem{corollary}{Corollary}
\newtheorem{problem}{Problem}
\newif\if@restonecol
\DeclareMathOperator*{\avg}{avg}
\newcommand{\rev}[1]{\textcolor[rgb]{0,0,0}{#1}}
\def\BibTeX{{\rm B\kern-.05em{\sc i\kern-.025em b}\kern-.08em
    T\kern-.1667em\lower.7ex\hbox{E}\kern-.125emX}}
\begin{document}

\title{Finding Top-r Influential Communities under Aggregation Functions\\
\thanks{\textsuperscript{*}Equal contribution}}
\makeatletter
\newcommand{\linebreakand}{%
  \end{@IEEEauthorhalign}
  \hfill\mbox{}\par
  \mbox{}\hfill\begin{@IEEEauthorhalign}
}
\makeatother
\author{{You Peng\textsuperscript{*}, Song Bian\textsuperscript{*}, Rui Li, Sibo Wang, Jeffrey Xu Yu}

\vspace{1.6mm}\\
\fontsize{10}{10}
\selectfont\itshape The Chinese University of Hong Kong\\
\fontsize{9}{9} \selectfont\ttfamily\upshape
\{ypeng,sbian,lirui,swang,yu\}@se.cuhk.edu.hk
}

\maketitle

%\thispagestyle{fancy} % IEEE模板在\maketitle后会自动声明\thispagestyle{plain}，
% 导致第一页什么都没有。所以得把plain更改为fancy
%\lhead{} % 页眉左，需要东西的话就在{}内添加
%\chead{} % 页眉中
%\rhead{} % 页眉右
%\lfoot{} % 页眉左
%\cfoot{\thepage} % 页眉中
%\rfoot{} %页眉右，\thepage 表示当前页码
%\renewcommand{\headrulewidth}{0pt} %改为0pt即可去掉页眉下面的横线
%\renewcommand{\footrulewidth}{0pt} %改为0pt即可去掉页脚上面的横线
%\pagestyle{fancy}
%\cfoot{\thepage}

\begin{abstract}
Community search is a problem that seeks cohesive and connected subgraphs in a graph that satisfy certain topology constraints, e.g., degree constraints. The majority of existing works focus exclusively on the topology and ignore the nodes' influence in the communities. To tackle this deficiency, influential community search is further proposed to include the node's influence. Each node has a weight, namely influence value, in the influential community search problem to represent its network influence. The influence value of a community is produced by an aggregated function, e.g., $max$, $min$, $avg$, and $sum$, over the influence values of the nodes in the same community. The objective of the influential community search problem is to locate the top-$r$ communities with the highest influence values while satisfying the topology constraints. Existing studies on influential community search have several limitations: {\em (i)} they focus exclusively on simple aggregation functions such as $min$, which may fall short of certain requirements in many real-world scenarios, and {\em (ii)} they impose no limitation on the size of the community, whereas most real-world scenarios do. This motivates us to conduct a new study to fill this gap.

We consider the problem of identifying the top-$r$ influential communities with/without size constraints while using more complicated aggregation functions such as $sum$ or $avg$. We give a theoretical analysis demonstrating the hardness of the problems and propose efficient and effective heuristic solutions for our top-$r$ influential community search problems. Extensive experiments on real large graphs demonstrate that our proposed solution is significantly more efficient than baseline solutions.
%The influential community search aims to 

%Recently, community search over influence value-based attributed graphs has attracted much attention. A community model that could not only report a dense subgraph but the subgraph also has high influence value has been proposed. However, the influence value of the subgraph depends on the minimum influence value of the vertex in the subgraph. In this paper, we investigate the problem of finding influential community under various aggregation functions with possible size-bounded. We show the hardness of the problem of discovering size-bounded influential community under different aggregation functions, and propose some efficient algorithms to report top-$r$ influential community. Extensive experiments over real massive networks demonstrate that our approaches are efficient.
\end{abstract}

%\begin{IEEEkeywords}
%component, formatting, style, styling, insert
%\end{IEEEkeywords} 
\section{Introduction}

In reality, graph data becomes increasingly complicated and diverse. The vertex of the graph is filled with relevant information~\cite{fang2019survey,fang2016vldb,fang2017vldb,bian2020efficient,hao2021distributed,yang2021huge,peng2019towards,peng2020answering,peng2021answering,peng2021dlq}. The information could be obtained from either raw data, e.g., H-index and income, or from the topological structure of the graph, e.g., PageRank, Closeness, Degree, and Betweenness. For instance, Twitter could be abstracted into a graph, with each vertex representing a user, and each edge representing whether two individuals follow each other. Then, the information of each vertex could be represented by their influence values. 
Meanwhile, numerous networks have a community structure. The community structure has a wide range of applications in a variety of disciplines, including social network mining~\cite{lwdy17,fang2019densest}, biology analysis~\cite{bv06}, and financial markets~\cite{djdl09,qiu2018real}. Thus, extracting community structure from a graph is a fundamental problem in graph mining. 

A substantial body of previous works have concentrated exclusively on discovering cohesive subgraphs from a large graph, ignoring the attribute of each vertex. Considering this, some works~\cite{lqym15, bclz17} investigate a new community model based on the concept of $k$-core~\cite{seid83,wu2015core}, which is utilized to locate top-$r$ $k$-influential communities over massive graphs. Due to the existence of additional cohesive requirements, the new model is extended to include additional cohesiveness metrics, e.g., $k$-truss~\cite{cohen08}. \rev{As mentioned in~\cite{lqym15, bclz17}, the existing influential communities are mainly based on the $k$-core model.}

Although existing models and approaches are practical and effective, the influence value of community is determined by the minimum value of vertices in their models. This assumption limits their applications. Therefore, the existing model is not always capable of satisfying the users' requirements. Thus, we aim to investigate the top-$r$ $k$-influential community, whose influence value is determined by real-world applications. Based on this intuition, we study a generic influential community model in this paper. The influential community should be constrained by the following criteria: (1) it is a connected subgraph; (2) each vertex of the subgraph has at least $k$ neighbors; and (3) there does not exist a supergraph, such that the influence value of the supergraph is the same as that of it. Additionally, the supergraph satisfies (1) and (2). The influence value of the community, on the other hand, should be determined by various aggregation functions, e.g., $avg$, $sum$, over the entire community, rather than by the vertex's minimum influence value. Additionally, the community search results should be non-overlapping.

\noindent \textbf{Applications.}~Some real-life scenarios are listed as follows to demonstrate our motivations.

%Consider the following examples.

\noindent \textit{\underline{(1) Engagement.}} It is common for team members' engagement~\cite{zhang2016engagement,chitnis2013preventing,malliaros2013stay,wu2013arrival} to be determined by the number of friends in the same group. Also, the ability of each member is different. When the team encountered a financial crisis, it was forced to lay off several members. The leader wishes to reduce the size of the squad while maintaining its strength. Then, we could abstract the relationship in the team as a graph and assign each node an influence value as their ability. By identifying the top-$r$ $k$-influential community, we could determine who should be laid off.

\rev{In this application, $max$ could retain the most critical members. By using $weight$ $density$, a highly connected and influential community could be reserved. $Balanced$ $density$ is a variant of $density$, which requires the lay-off members are also highly connected.}

\noindent \textit{\underline{(2) Group Recommendation.}} In a social network, a user may choose to search groups with similar interests in social networks~\cite{amer2009group,cao2018attentive,kim2010group}. For instance, a user could search for keywords on Facebook\footnote{https://www.facebook.com/} or Twitter\footnote{https://twitter.com/} to discover several communities with similar interests. We may assign a similar value to each user's influence in such a social network. The user then seeks out a community with the maximum influence value. The influence value of the community is determined by the average of its members.

\noindent \textit{\underline{(3) Influential Research Groups Identification.}} Mining a research community has been studied for more than two decades~\cite{hcqty14, bclz17, lqym15, yc21, hl17,chen2022answering,feng2022towards,yuan2022efficient}. To locate influential research groups in a research network, e.g., DBLP\footnote{https://dblp.uni-trier.de/}, we could use the influence value of each vertex as the H-index and extract a community with a maximum influence value. Nevertheless, it is worth noting that a significant number of freshly graduated students have joined the group as new professors recently.

\rev{In this application, $min$ and $avg$ could be used to discover a group of highly cited researchers. Nevertheless, they are suitable for different citation metrics, e.g., $i$-$10$ index, $G$ index. It could be seen from the case study in Section VI.C that $G$-$index$ is suitable for $avg$, while $i$-$10$ index is appropriate for $min$. As for $sum$, it could discover high-quality research community with more diversity.}

Inspired by the aforementioned scenarios, it is necessary to investigate the top-$r$ $k$-influential community search under various aggregation functions with or without size constraint, which could solve many real-life issues.  

\rev{
    In local community detection, a goodness metric is usually used to measure whether a subgraph forms a community. The existing goodness metrics for local community detection can be categorized into three classes. The first class optimizes the internal denseness of a subgraph, i.e., the set of nodes in a community should be densely connected with each other.
    Such metrics include the classic density definition~\cite{saha2010dense}, edge-surplus~\cite{tsourakakis2013denser}, and minimum degree\cite{sozio2010community}. The second class optimizes both the internal denseness and the external sparseness. That is, the set of nodes in the community are not only densely connected with each other, but also sparsely connected with the nodes that are not in the community. Such metrics include subgraph modularity~\cite{luo2008exploring}, density-isolation~\cite{lang2007finding}, and external conductance~\cite{andersen2006local}. The local modularity measures the sharpness of the community boundary and belongs to the third class~\cite{clauset2005finding}. Using this metric, the set of nodes in the boundary of the community are highly connected to the nodes in the community but sparsely connected to the nodes outside the community.
}

%\begin{example}
%We re-consider Figure~\ref{fig:example} again. This graph denotes the relationship between students, and each student is assigned with weight. The weight stands for the ability of student. The university decides to form some teams to attend the International Collegiate Programming Contest (ACM-ICPC). The rule of contest indicates that one team is made up of $3$ team members and there is only one computer could be used by the team. There are several choices of the university, the first choice is selecting a group of students, the lowest ability of student is maximal among all combinaions. Thus, . And the second choice is that selecting three students, and the sum of their ability is maximal.
%\end{example}

\begin{table}[!t]
	\centering
	\begin{small}
		\renewcommand{\arraystretch}{1.2}
		\caption{Aggregation Functions under $k$-core Model}
		\vspace{-2mm}
		\label{table:metrics}
		\begin{tabular}{|c|c|c|} \hline
			\cellcolor{gray!25}\textbf{Aggregation functions} & \cellcolor{gray!25}\textbf{Formulas} $f(H)$ & \cellcolor{gray!25}\textbf{Hardness} \\ \hline
			Minimum & $\min_{v \in H} w(v)$ & P \\ \hline
			Maximum & $\max_{v \in H} w(v)$ & P \\ \hline
			Sum & $w(H) = \sum_{v \in H} w(v)$ & P \\ \hline
			Sum-surplus & $w(H) + \alpha |H|$ & P \\ \hline
			Average & $w(H) / |H|$ & NP-hard \\ \hline
			Weight Density & $w(H) - \beta |H|$ & NP-hard \\ \hline
			Balanced Density & $\frac{w(H)}{w(H) - w(V \setminus H)}$ & NP-hard \\ \hline
		\end{tabular}
		\vspace{-2mm}
	\end{small}
\end{table}

\noindent \textbf{Challenges and Contributions.}~The purpose of this paper is to investigate the problem of determining the top-$r$ $k$-influential community over a massive graph using various aggregation functions, e.g., $avg$, $sum$. Table~\ref{table:metrics} lists a collection of commonly used aggregation functions\footnote{The NP-hardness of Weight Density and Balanced Density is given in Appendix of our full version https://bit.ly/3Fa6YdW .}. We would primarily discuss the impact of aggregation functions on the top-$r$ $k$-influential community search. However, in this study, we disregard the procedure of computing the weight of each vertex. 

We have demonstrated that individuals occasionally like to select some influential communities with no overlaps or to identify several influential communities with size constraints. The top-$r$ non-overlapping $k$-influential community search problem is investigated with or without size constraint.

By examining the top-$r$ $k$-influential community search problem under various aggregation functions, we claim that the problem could be solved in polynomial time under some different aggregation functions, e.g., $min$, $max$. We develop a global search algorithm. Then, an improved algorithm is proposed for the problem if the aggregation function is $sum$. However, problems under some aggregation functions, e.g., $avg$, are NP-hard. Unless P = NP, they cannot be addressed in polynomial time. Thus, there are no solutions to these problems that are approximated by the constant-factor. 

When the aggregation function is $avg$ or $sum$, problems with size constrained are NP-hard. In light of this, we propose several efficient heuristic algorithms for the NP-hard problems based on local search. The main contributions of our paper are summarized as follows:

%Inspired by the above-mentioned cases, we decide to study the influential community search under various aggregation functions. In addition, we also want to report influential communities that are size-constrained and non-overlapping regarding the above cases. 
%
%Despite there are exact algorithms for influential community search if the aggregation function is $\min$~\cite{bclz17, lqym15}, we show that the problem would NP-hard under some aggregation functions. Furthermore, it is also non-trivial to report the influential community when the community is size-constrained.

%In this paper, we aim to address the above-mentioned challenges in this paper. 

\begin{itemize}
	\item \textit{Various Aggregation Functions.}~We extend the original influential community model to various aggregation functions. We analyze the hardness of the problem under different aggregation functions and propose efficient approaches for the influential community search problem.
	\item \textit{Size-Constrained Influential Community.}~We advocate a cohesive subgraph model: size-constrained influential community. We analyze the hardness of the new cohesive subgraph model, and propose some efficient heuristic algorithms. Additionally, we extend our approach to the top-$r$ non-overlapping $k$-influential community search problems.
	\item \textit{Efficiency and Effectiveness.}~Extensive experiments on real networks demonstrate the efficiency of our techniques. In addition, a case study on a real dataset demonstrates the effectiveness of our model and algorithms. 
\end{itemize}

%\noindent \textbf{Roadmap.}
\vspace{1mm}
\noindent \textbf{Roadmap}.
The rest of the paper is organized as follows.
Section~\ref{sec:pre} formally defines the problem.
Section~\ref{sec:hardness} provides hardness analysis of the top-$r$ non-overlapping $k$-influential community search problems with or without size constraint.
Solutions to top-$r$ size-unconstrained (constrained) $k$-influential community problem are proposed in Section~\ref{sec:unconstrained} and~\ref{sec:size_constrained}, respectively.
%Section~\ref{sect:undirectedCycle} introduces some special cases for this problem, 
 followed by the empirical study in Section~\ref{sec:exp}.
Section~\ref{sec:related} surveys important related work.
%Section~\ref{sec:futureWorks} introduces the future works of this problem.
Section~\ref{sec:conclusion} concludes the paper.
\section{preliminaries}\label{sec:pre}

In this section, we will begin by providing some basic background. Following that, we define the problems that will be discussed in this paper.

\begin{table}[!t]
	\centering
	\begin{small}
		\renewcommand{\arraystretch}{1.2}
		\caption{Notation Table}
		\vspace{-2mm}
		\label{table:Notation Table}
		\begin{tabular}{|p{0.68in}|p{2.268in}|} \hline
		    \cellcolor{gray!25}\textbf{Notations} & \cellcolor{gray!25}\textbf{Description}\\ \hline
%			\textbf{Notation} & \textbf{Description} \\ \hline
			$G(V, E, w)$ & a weighted and undirected graph, where $V$ is the set of vertices, $E$ is the set of edges, and $w$ is a weighted function \\ \hline
			$G[H]$ & the subgraph induced by $H$ \\ \hline 
			$n$ (m) & the number of nodes (edges) in $G$ \\ \hline
			%$m$ & the number of edges in $G$ \\ \hline
			$k$ & the degree constraint for subgraph \\ \hline
			$s$ & the size constraint for subgraph \\ \hline
			$f$ & the aggregation function \\ \hline
			$g(\cdot)$ & the objective function of problem \\ \hline
			$N(u,G)$ & the set of neighbors of vertex $u$ in $G$ \\ \hline
			$N(u,H)$ & the set of neighbors of vertex $u$ in $G[H]$ \\ \hline
			$d(u,G)$ & the degree of vertex $u$ in $G$ \\ \hline
			$d(u,H)$ & the degree of vertex $u$ in $G[H]$ \\ \hline
			$\delta(G) / \delta(H)$ & the minimum degree of $G/G[H]$ \\ \hline
			$f(G)/f(H)$ & the influence value of $G/G[H]$ \\ \hline
		\end{tabular}
		\vspace{-2mm}
	\end{small}
\end{table}

\subsection{Problem Definitions}

Let $G(V,E,w)$ be an undirected and weighted graph, where $V$ denotes a set of vertices, $E \subseteq V \times V$ indicates a set of edges, and $w$ is a weighted function that assigns each vertex $u \in V$ with a non-negative weight value. Throughout this work, we refer to $w(u, G)$ as the weight of vertex $u$ in $G$. The weight assigned to each vertex could reflect the centrality of each vertex such as Pagerank, Betweenness, Closeness, or other attributes~\cite{lqym15}. Moreover, $N(u, G) = \{v \in V| (u,v) \in E \}$ denotes the set of neighbors of vertices $u \in V$ in $G$. The degree of $u$ is $d(u, G) = |N(u, G)|$. When the context is clear, we omit graph $G$ in the notation. Table~\ref{table:Notation Table} lists the notations used in this paper.

This paper is mostly concerned with the $k$-core model. Let $H$ be a subset vertices of $V$, which implies that $H \subseteq V$. The induced subgraph, denoted by $G[H] = (V_H, E_H, w)$, is a $k$-core if it conforms to the following definition:

\begin{definition}[$k$-core]
\label{def:kcore}
Given a graph $G = (V,E,w)$, a subgraph $G[H] = (V_H, E_H, w)$ is a $k$-core of $G$, if $G[H]$ satisfies the following constraints: 
\begin{enumerate}
    \item \textbf{Cohesive:}~For any $u \in V_H$, $d(u) \geq k$. 
    \item \textbf{Maximal:}~$H$ is maximal, i.e., for any vertex set $H' \supset H$, $G[H']$ is not a $k$-core.
\end{enumerate}
\end{definition}

\eat{In Algorithm~\ref{alg:k_core}, we initialize the result $H_k$ with $G$ (Line~\ref{k_core:init}). Then, $k$-core could be obtained by removing vertices from the graph if the degree of a vertex is less than $k$ (Lines~\ref{k_core:while} and~\ref{k_core:remove}). 
The time complexity of Algorithm~\ref{alg:k_core} is $O(m)$~\cite{bz03}.

\begin{algorithm}[!htbp]
    $H_{k} \leftarrow G$\;
    \label{k_core:init}
    \While{$\exists$ $u \in H_{k}$ $\land$ $d(u, H_{k}) < k$}{
    \label{k_core:while}
		$H_{k} \leftarrow H_{k} \setminus \{u\}$\;
		    \label{k_core:remove}
	}
	\Return $H_{k}$\;
	\caption{\textsc{$k$-core}($G$, $k$)}
	\label{alg:k_core}
\end{algorithm}

\eat{\begin{algorithm}[!htbp]
	\caption{\textsc{Compute $k$-core}}
	\label{alg:k_core}
	\KwIn{A graph $G=(V,E,w)$, degree constraint $k$}
	\KwOut{The $k$-core of $G$}
	$H^{k} \leftarrow G$\;
	\While{there exists $u \in H^{k}$ and $d(u, H^{k}) < k$}{
		$H^{k} \leftarrow H^{k} \setminus \{u\}$\;
	}
	\Return $H^{k}$\;
\end{algorithm}}
}

In this paper, we aim to identify the influential communities in large networks. The influential community is a cohesive subgraph whose cohesiveness is based on $k$-core. The influential community has an influence value. Before introducing the concept of influential community, we refer to previous works~\cite{lqym15, bclz17} that define the influence value of an induced subgraph. The definition is given below:

\begin{definition}($f(G[H])$).
\label{def:inf_value}
Let $G[H]$ be a subgraph of $G$, and $f$ denotes an aggregation function. The influence value of subgraph $G[H]$ is denoted by $f(G[H])$, or simply $f(H)$ when the context is clear.
%where $f(G[H]) = \sigma(H)$.
\end{definition}

Nevertheless, previous works~\cite{lqym15, bclz17} concentrate exclusively on the influential community, where the influence value of a community is based on the minimum weight of the vertices in it. On the contrary, a user is more likely to find an influential community whose influence value is determined by various aggregation functions to solve the issues mentioned above. As a result, in contrast to previous research, we provide a general definition of influential community here.

\begin{definition}[$k$-Influential Community]
\label{def:inf_com}
Given an undirected and weighted graph $G=(V,E,w)$, a vertex set $H \subseteq V$ and an aggregation function $f$, the induced subgraph $G[H] = (V_H, E_H, w)$ is a $k$-influential community if 
\begin{enumerate}
\item \textbf{Cohesive:}~For any $u \in V_H$, \rev{$d(u, H) \geq k$}. 
\item \textbf{Connected:}~$G[H]$ is a connected subgraph. 
\item \textbf{Maximal:}~There is no other vertex set $H' \supset H$, such that induced subgraph $G[H'] = (V_{H'}, E_{H'}, w)$ satisfies 1) and 2), and $f(H') = f(H)$.
\end{enumerate}
\end{definition}

%\begin{example}
%TODO
%\end{example}

Additionally, in certain real-life scenarios, we require a community with a limited size. Thus, we define the size-constrained influential community below.

\begin{definition}[Size-Constrained $k$-Influential Community]
Given a weighted and undirected graph $G=(V,E,w)$, the degree constraint $k$ and the size constraint $s$. A size-constrained $k$-influential community \rev{$G[H]=(V_{H}, E_{H}, w)$} is a $k$-influential community with $|V_{H}| \leq s$.
%but also the number of vertices satisfies that $|V_{H}| \leq s$. \Rui{ is a $k$-influential community with $|V_{H}| \leq s$.}
\end{definition}

We principally focus on the following two influential community search problems in this paper:

%%% The following is local influential community

% \begin{definition}[$(k, \tau)$-Local Influential Community]
% Given an undirected and weighted graph $G=(V,E,w)$, a vertex set $H \subseteq V$ and an aggregated function $\sigma$, the induced subgraph $G[H] = (V_H, E_H,w)$ is a $(k,\tau)$-local influential community if $G[H]$ is a connected subgraph and for any $u \in V_H$ satisfies the following constraints: (1) $d(u) \geq k$ (2) $\sigma(w(N(u))) \geq \tau$ and (3) $|H|$ is maximized.
% \end{definition}

\begin{problem}[Top-$r$ size-constrained $k$-Influential Community]
\label{prb:overlap}
Given a weighted and undirected graph $G=(V,E,w)$, the degree constraint $k$, an integer $r$, the size constraint $s$ and an aggregation function $f$, the problem is to find \underline{T}op-$r$ size-constrained $k$-\underline{I}nfluential \underline{C}ommunity (\prob) with the highest influence value under the aggregation function $f$.
\end{problem}

If the size constraint is not emphasized, where we set the size constraint $s=|V|$, the problem would be size-unconstrainted. The following example demonstrates how aggregation functions impact the top-$r$ $k$-influential community search problem, as well as the difference between size-constrained problem and size-unconstrained problem.

\begin{figure}[!htbp]
	\vspace{-2mm}
	\centering
	\includegraphics[height=45mm]{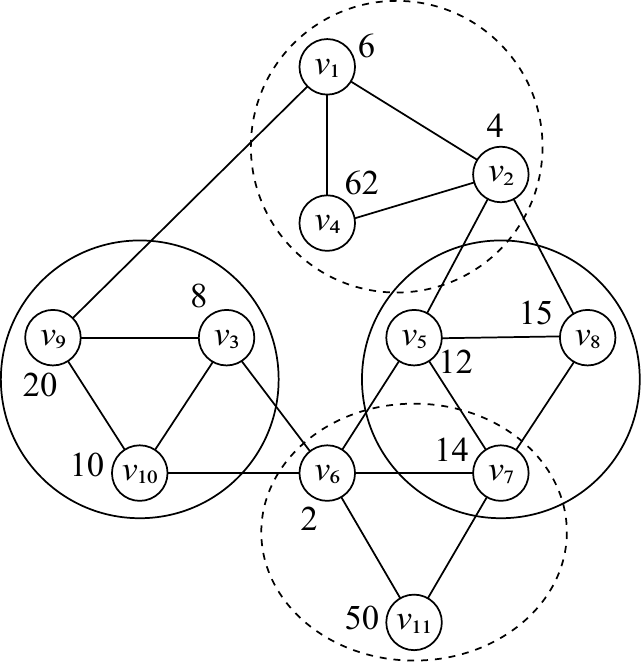}
	%	\vspace{-2mm}
	\caption{An example network}
	\label{fig:example}
	\vspace{-2mm}
\end{figure}

\begin{example}
As shown in Figure~\ref{fig:example}, if the aggregation function is sum and $k=2$, the top-$2$ $k$-influential community are $\{v_1, v_2, \dots, v_{11}\}$ and $\{v_1, v_2, v_4, \dots, v_{11}\}$. However, when the aggregation function is avg and $k=2$, the top-$2$ $k$-influential community are $\{v_1, v_2, v_4\}$ and $\{v_6, v_7, v_{11}\}$.
If we change the aggregation function to min but maintain $k = 2$, then the top-$2$ $k$-influential community become $\{v_5, v_7, v_8\}$ and $\{v_3, v_9, v_{10}\}$.

The above illustrates the difference under different aggregation functions. Following that, we impose size constraint on the subgraph. We set $f$ as sum, $k=2$, and $s=4$, then $\{v_3, v_6, v_9, v_{10}\}$ is a size-constrained $k$-influential community with influence value $40$. Although another community, $\{v_1, v_2, \dots, v_{11}\}$, has a higher influence value $203$, it is not retrieved due to the community's size being larger than $4$.
\end{example}

To avoid redundancy in results to \prob \ problem, some works~\cite{lqym15, bclz17} study the top-$r$ non-contained $k$-influential community without size constraint when the aggregation function is $min$. The non-containment constraint, on the other hand, does not work well if the aggregation function is not $min$.

In Figure~\ref{fig:example} for instance, we assume that $k=2$ and the aggregation function \fx $=$ $avg$. We could obtain that $\{v_6, v_7, v_{11}\}$, $\{v_5, v_6, v_7\}$, and $\{v_5, v_7, v_8\}$ are all $k$-influential community. The problem is that these communities have overlaps with each other, which is not permitted in certain real-world scenarios. We propose the definition of non-overlapping community search based on this. The definition is given below:

\begin{definition}[Non-overlapping]
\label{def:non_overlap}
Given a weighted and undirected graph $G=(V,E,w)$, the degree constraint $k$, and integer $r$ and an aggregation function $f$. We suppose that the result of top-$r$ $k$-influential community search problem is $\{ H_1$, $H_2$, $\dots$, $H_{r} \}$. For any two communities $H_{i}$ and $H_{j}$, if $H_{i} \cap H_{j} = \emptyset$, we refer the result to non-overlapping.
\end{definition}

%\begin{figure}[!t]
%	\vspace{-2mm}
%	\centering
%	\includegraphics[height=30mm]{figures/non_overlap-eps-converted-to.pdf}
%	%	\vspace{-2mm}
%	\caption{Graph $G$}
%	\label{fig:non overlap}
%	\vspace{-2mm}
%\end{figure}

\begin{example}\label{exp:nonoverlap}
As shown in Figure~\ref{fig:example}, we assume that $k=2$ and the aggregation function \fx $=$ $\avg$. Following that, we aim to extract top-$3$ non-overlapping $k$-influential communities. The results are $\{v_{1}, v_{2}, v_{4}\}$, $\{v_{6}, v_{7}, v_{11}\}$, and $\{v_{3}, v_{9}, v_{10}\}$. The influence value of each community is $24$, $22$ and $38/3$, respectively. There is no overlap between any two communities in the result.
\end{example}

Example~\ref{exp:nonoverlap} is used to illustrate Definition~\ref{def:non_overlap}. In fact, a non-overlapping constraint could avoid result overlaps. Thus, we give the definition of top-$r$ non-overlapping $k$-influential community search problem below:

\begin{problem}(Top-$r$ Non-overlapping Size-Constrained $k$-Influential Community).
\label{prb:nonoverlap}
Given a weighted and undirected graph $G=(V,E,w)$, the degree constraint $k$, an integer $r$, the size constraint $s$, and an aggregation function $f$, the problem is to find \underline{TO}p-$r$ \underline{N}on-overlapping size-constrained $k$-\underline{I}nfluential community (\probno) with the highest influence value under the aggregation function $f(\cdot)$.
\end{problem}

Unfortunately, regardless of whether \fx $=$ $avg$ or $sum$, the top-$r$ size-constrained $k$-influential community search problem is NP-hard. The theoretical analysis would be presented in Section~\ref{sec:hardness}.

\section{Problem Hardness}\label{sec:hardness}

We provide hardness analysis of top-$r$ $k$-influential community search problems with or without size constraint in this section. We focus on two aggregation functions: $sum$ and $avg$, since \fx $=$ $min$ has been investigated by previous works~\cite{lqym15, bclz17}. Additionally, the algorithms discussed in the preceding studies could simply be extended to the cases when \fx $=$ $max$. 

The hardness of problems can be different with various aggregation functions. When \fx $=$ $min$ or $max$, previous works have shown that it could be solved by polynomial-time algorithms for the top-$r$ $k$-influential community search problem. It is $true$ that the problem could also be solved in polynomial time when \fx $=$ $sum$. While \fx $=$ $avg$, the problem is NP-hard. When the size constraint is considered, the top-$r$ $k$-influential community search problem is NP-hard, no matter what the aggregation function is. The hardness analysis is listed below.
 
\textbf{Top-$r$} \textbf{$k$-influential} \textbf{Community} \textbf{Search.} We provide polynomial-time algorithms for the top-$r$ $k$-influential community search problem if $f(\cdot)$ $=$ $sum$. The algorithms would be presented in Section~\ref{sec:unconstrained} and we would analyze the correctness of the algorithms.

Nonetheless, when $f(\cdot)$ $=$ $avg$, the top-$r$ $k$-influential community problem is NP-hard. It could not be solved in polynomial time unless P = NP. We demonstrate the hardness of the top-$r$ $k$-influential community search problem by reducing an NP-complete problem, the decision version of the maximum clique search problem, to our problem. The decision version of the maximum clique search problem is to determine whether a graph $G$ contains a clique of size $k$.

\begin{theorem}\label{thm:avg_topr}
When \fx $=$ $avg$, the top-$r$ $k$-influential community search problem is NP-hard.
\end{theorem}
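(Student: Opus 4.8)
The plan is to reduce the decision version of maximum clique (which the excerpt already recalls is NP-complete) to our problem with $f(\cdot)=\avg$: given a graph $G=(V,E)$ and an integer $q$, we must decide whether $G$ contains a clique on $q$ vertices. The intuition driving the reduction is that, under $\avg$, a community is rewarded for being \emph{small}, whereas the $k$-core requirement forces it to be locally dense; the smallest subgraphs meeting a degree-$q$ constraint are exactly $(q+1)$-cliques, so a minimum-size community in the constructed instance will pinpoint a clique of $G$ whenever one exists.

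Concretely, from $(G,q)$ I would build a weighted graph $G'=(V',E',w)$ in polynomial time by adding a single apex vertex $x$ joined to every vertex of $V$: set $V'=V\cup\{x\}$, $E'=E\cup\{(x,v):v\in V\}$, $w(x)=1$, and $w(v)=0$ for all $v\in V$. I would then invoke the top-$r$ $k$-influential community algorithm on $G'$ with degree constraint $k:=q$ and $r:=1$, and inspect the influence value of the community it returns.

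For correctness I would argue both directions. $(\Rightarrow)$ If $C\subseteq V$ is a clique of size $q$ in $G$, then in $G'[\{x\}\cup C]$ the vertex $x$ has degree $q$ and each $v\in C$ has degree $(q-1)+1=q$, so $\{x\}\cup C$ induces a connected $q$-core whose average weight is $1/(q+1)$; since every vertex of $V$ has weight $0\ne 1/(q+1)$, adjoining any further vertex strictly decreases the average, so no $q$-core supergraph shares its influence value — hence $\{x\}\cup C$ is maximal and is a $k$-influential community of value $1/(q+1)$. $(\Leftarrow)$ Let $H$ be any $k$-influential community of $G'$. If $x\notin H$ then $H\subseteq V$ and its average weight is $0$. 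If $x\in H$, write $H=\{x\}\cup S$; since $x$ must have degree $\ge k=q$ we get $|S|\ge q$, so the average weight of $H$ equals $1/(|S|+1)\le 1/(q+1)$, with equality only when $|S|=q$. But if $|S|=q$, the $q$-core condition forces $d_G(v,S)\ge q-1$ for every $v\in S$, while $|S|=q$ gives $d_G(v,S)\le q-1$, so $G[S]$ is a clique on $q$ vertices. Hence no community of $G'$ has average weight exceeding $1/(q+1)$, and this value is attained if and only if $G$ has a $q$-clique; reading off the top-$1$ influence value therefore decides the max-clique instance, and as the construction is polynomial the problem is NP-hard.

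The step that needs the most care — and what I expect to be the main obstacle — is the \emph{maximal} clause of Definition~\ref{def:inf_com}: the argument hinges on $\{x\}\cup C$ admitting no influence-value-preserving extension, which is exactly why the vertices of $V$ are given weight $0$ (strictly below the target average $1/(q+1)$), and dually on the fact that neither a larger apex-community nor any apex-free subgraph can tie that value. One should also dispose of the degenerate case where $G'$ has an empty $q$-core: then $G$ has no $q$-clique (a $q$-clique of $G$ together with $x$ would be a nonempty $q$-core of $G'$), the algorithm returns nothing, and we answer ``no''.
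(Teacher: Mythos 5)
Your proof is correct and follows essentially the same reduction as the paper: an apex vertex of positive weight joined to all zero-weight vertices of the clique instance, with the existence of the clique decided by whether the top-$1$ average influence value attains $1/(q+1)$. Your write-up is in fact tighter than the paper's sketch, since you verify both directions and the maximality clause of Definition~3 explicitly, and you avoid the paper's off-by-one slip (it speaks of a $(k-1)$-clique where the arithmetic $(w_c + k\cdot 0)/(k+1)$ corresponds to a $k$-clique plus the apex).
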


\begin{proof}
Given a graph $G=(V,E,w)$, we assign each vertex $v_i \in V$ with weight $0$. Then, we build another graph $G'=(V',E',w')$ by adding a new vertex $u$ that connects all vertices in $V$. We set the weight of the new vertex $u$ as $w_c$. Suppose that there exists a polynomial-time algorithm to address the top-$r$ $k$-influential community search problem. Then, we could determine whether there exists a ($k$-$1$)-clique since the influence value of top-$1$ $k$-influential community is $(w_c + k \cdot 0) / (k+1)$ if there exists a ($k$-$1$)-clique in graph $G$. Notably, adding any new vertex (or vertices) into such a clique would only increase the denominator of the influence value. However, the decision version of the maximum clique search problem is NP-complete. It is a contradiction. Thus, the top-$r$ $k$-influential community search problem is NP-hard, when \fx $=$ $avg$.
%Since the possible average value of maximum $k$-influential community is $w_c$, and the average value of $(n-k)$-th $k$-influential community is at most $(k \cdot w_c + 0) / (k+1)$. (the possible average value could be $w_c$, $(n \cdot w_c + 0) / (n+1)$, $((n-1) \cdot w_c + 0) / n$, $((n-2) \cdot w_c + 0) / (n-1)$, \dots, $(k \cdot w_c + 0) / (k+1)$). If there exists a polynomial-time algorithm to output top-$(n - k + 1)$ $k$-influential community, then it is also a feasible solution to decision version of maximum clique problem. Therefore, top-$r$ $k$-influential community problem is NP-hard, when \fx $=$ $\avg$.
\end{proof}

%\Song{There are several proofs I think we could add to this paper: (1) there is no PTAS for avg problem, and avg problem is APX-hard unless P=NP. (2) the objective function of avg problem $\max\{k-\delta(H), 0\} \cdot w(H)$ is non-monotonic and not submodular.}

%\Rui{Given the NP-hardness of the problem, one may settle for the second best to find an efficient approximate algorithm with a bounded approximation ratio. However, as shown below, the problem under function $avg$ is neither submodular nor monotonic, thus disables the adoption of a well known greedy heuristic (cite sth.). In fact, this problem is \textbf{not} even APX-hard, which means that no constant approximation ratio bounded polynomial-time algorithm exits for it. }

%\noindent \textbf{Remark.}~

The objective function is $g(H) = \mathds{1}_{\delta(H) \geq k} \cdot f(H)$\footnote{Generally, $\mathds{1}_{\delta(H) \geq k} = 1$ if $\delta(H) \geq k$ holds. Otherwise, $\mathds{1}_{\delta(H) \geq k} = 0$.} to denote the objective function of the top-$r$ $k$-influential community search problem, where $\delta(H)$ stands for the minimum degree of the subgraph $H$ and $f(H)$ is the aggregation function. Then we could obtain the following theorems:

\begin{theorem}\label{thm:avg_submodular}
If the aggregation function \fx $=$ $avg$, the objective function $g(\cdot)$ of the $k$-influential community search problem, is neither \textit{submodular} nor \textit{monotonic}.
\end{theorem}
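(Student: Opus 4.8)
The plan is to exhibit explicit counterexamples witnessing the failure of each property separately, using very small graphs so the verification is by direct computation. Recall that $g(H) = \mathds{1}_{\delta(H) \geq k} \cdot f(H)$ with $f(H) = w(H)/|H|$, so $g$ takes the value $0$ on every vertex set that is not a valid $k$-core and the average weight otherwise. The non-monotonicity is almost immediate: take any $k$-core $H$ that is a valid community, so $g(H) = \avg(H) > 0$, and then remove a single vertex to obtain $H' \subset H$ with $\delta(H') < k$; then $g(H') = 0 < g(H)$, so $g$ is not monotone non-decreasing. Conversely, starting from a valid community $H$ containing a low-weight vertex and adjoining a high-weight vertex $u$ (wired so that the $k$-core condition is preserved) gives $H'' \supset H$ with $g(H'') = \avg(H'') > \avg(H) = g(H)$ when $w(u)$ is large, so $g$ is not monotone non-increasing either. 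Hence $g$ is not monotonic in either direction; I would phrase the statement as "not monotone" and cover both directions to be safe.

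For non-submodularity, recall $g$ is submodular iff $g(A \cup \{x\}) - g(A) \geq g(B \cup \{x\}) - g(B)$ for all $A \subseteq B$ and $x \notin B$ (equivalently the diminishing-returns condition). The strategy is to pick $A \subseteq B$ and a vertex $x$ such that adding $x$ to the smaller set $A$ yields a set that still fails the $k$-core test (so the marginal gain is $0$, or even negative if $A$ itself was a valid community), while adding $x$ to the larger set $B$ completes a valid $k$-core whose average weight jumps up — making the marginal gain on $B$ strictly larger than on $A$, which directly violates the diminishing-returns inequality. Concretely, for $k = 2$: let $B$ be a path $v_1 v_2 v_3$ (not yet a $2$-core), let $A = \{v_1, v_2\}$, and let $x$ be a vertex adjacent to both $v_1$ and $v_3$; then $B \cup \{x\}$ is a $4$-cycle, a valid $2$-core, so $g(B \cup \{x\}) - g(B) = \avg(\{v_1,v_2,v_3,x\}) - 0 > 0$, whereas $A \cup \{x\} = \{v_1, v_2, x\}$ still is not a $2$-core (vertex $v_2$ has degree $1$ there), so its marginal gain is $0$. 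With $A \subseteq B$ this violates submodularity. One can also, for a cleaner example, arrange weights so that even the "averaging" behavior alone breaks submodularity without relying on the indicator — adding a heavy vertex to a small light set raises the average more than adding it to a large light set — but the indicator-based example is the most transparent.

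The main obstacle is bookkeeping rather than conceptual: I must make sure the chosen graphs genuinely satisfy $A \subseteq B$, that the vertex $x$ is chosen with exactly the right adjacencies so that the $k$-core status flips the way I claim, and that in the monotonicity argument the supergraph $H''$ I build is actually still a $k$-core after adding $u$ (which is easy to guarantee by connecting $u$ to at least $k$ vertices of $H$). A secondary subtlety is the precise definition of "submodular" being used — set-function submodularity via the diminishing-returns inequality is the natural one here, and I would state it explicitly before giving the counterexample so there is no ambiguity. Since only a single counterexample for each property is needed, no general argument is required, and the whole proof reduces to displaying one small weighted graph (indeed the running example of Figure~\ref{fig:example} can likely be reused with an appropriate choice of $A$, $B$, $x$, and $k=2$) and evaluating $g$ on four vertex sets.
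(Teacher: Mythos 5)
Your submodularity counterexample is sound and is essentially the paper's route (the paper checks $g(A)+g(B)\geq g(A\cup B)+g(A\cap B)$ directly on sets from Figure~\ref{fig:example}, e.g.\ $A=\{v_5\}$, $B=\{v_6,v_7\}$ with $0 < 14/3$, whereas you use the equivalent diminishing-returns form on a custom path-plus-one-vertex graph; both are legitimate, and your indicator-flip construction verifies correctly: $\{v_1,v_2,x\}$ is not a $2$-core while the $4$-cycle is).

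The gap is in the monotonicity half: both of your examples have the form $A\subset B$ with $g(A)<g(B)$, so they only rule out $g$ being monotone \emph{non-increasing}. Your first example is mislabeled: removing a vertex from a valid community $H$ to get $H'\subset H$ with $g(H')=0<g(H)$ is perfectly consistent with $g$ being monotone non-decreasing ($A\subseteq B \Rightarrow g(A)\leq g(B)$), so it does not establish ``not monotone non-decreasing,'' and your second example (adjoining a heavy vertex so the average rises) again gives a superset with a \emph{larger} value. As written, the conclusion ``not monotonic in either direction'' is unsupported: you never exhibit $A\subset B$ with $g(A)>g(B)$, and the indicator trick can never produce one, since zeroing out the smaller set only helps non-decreasingness. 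What is needed is precisely the averaging phenomenon the paper uses: enlarge a valid community by a low-weight vertex while keeping minimum degree at least $k$, so the average drops --- in Figure~\ref{fig:example} with $k=2$, $A=\{v_6,v_7,v_8\}$ and $B=\{v_5,v_6,v_7,v_8\}$ give $g(A)=7>g(B)=22/4$. Adding one such example (trivial to arrange also in your custom graph by attaching a zero-weight vertex with two neighbors in the cycle) closes the gap; without it the theorem's ``nor monotonic'' claim, under the standard non-decreasing reading, is not proved.
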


\begin{proof}
For two arbitrary vertex sets $A$ and $B$, if $g(\cdot)$ is \textit{submodular}, it must hold that $g(A) + g(B) \geq g(A \cup B) + g(A \cap B)$. We reconsider Figure~\ref{fig:example}, if $k=2$, $A = \{v_5\}$ and $B=\{v_6, v_7\}$. Then, $g(A) + g(B) = 0 < g(A \cup B) + g(A \cap B) = 14 / 3$.

In terms of monotonic analysis, Figure~\ref{fig:example} is used as an example again. If $k=2$, $A = \{v_5\}$ and $B = \{v_5, v_6, v_7\}$, then $g(A) = 0 < g(B) = 14/3$. Nevertheless, if $k$ keeps unchanged, and let $A = \{v_6, v_7, v_8 \}$, $B = \{ v_5, v_6, v_7, v_8\}$. We could deduce that $g(A) = 7 > g(B) = 22/4$. This indicates that the objective function is not \textit{monotonic}.
\end{proof}

%\begin{definition}
%%[$MSMD_k$]
%Given an undirected and unweighted graph $G=(V,E)$ and an integer $k$, the goal is to find a subset $H \subseteq V$ such that $\delta(H) \geq k$ and $|H|$ is minimized.
%\end{definition}

%\Rui{In the following theorem we demonstrate that top-$r$ $k$-influential community search problem under function $avg$ is \textbf{not} APX-hard. To prove this, we introduce the Minimum Subgraph of Minimum Degree $\ge k$ (MSMD$_{k}$) problem aiming to find a smallest subgraph $H$ of $G$ with $\delta(H) \ge d$.}

We demonstrated that the top-$r$ $k$-influential community search problem is NP-hard when \fx $=$ $avg$. Additionally, we aim to demonstrate that no constant-factor approximated solutions exist for this problem. Before presenting the theoretical analysis, we introduce the \underline{M}imimum \underline{S}ubgraph of \underline{M}inimum \underline{D}egree $\geq k$ (MSMD$_{k}$). The objective of MSMD$_{k}$ problem is to identify a subset $H$ of the vertex set $V$ such that $|H|$ is minimized and $\delta(H) \geq k$.

\begin{theorem}
When \fx $=$ $avg$, there does not exist any constant-factor approximated approaches for the top-$r$ $k$-influential community search problem. 
%thus it is \Rui{not} APX-hard.
\end{theorem}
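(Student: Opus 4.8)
The plan is to give a gap-preserving reduction from the \textsc{Minimum Subgraph of Minimum Degree} $\geq k$ problem (MSMD$_k$), which we introduced above and which is known to admit no constant-factor polynomial-time approximation (under standard complexity assumptions; see the cited hardness result), to the top-$1$ $k$-influential community search problem under $avg$. Since the top-$r$ problem for any fixed $r$ subsumes the case $r=1$, this is enough. The reduction reuses the hub gadget from the proof of Theorem~\ref{thm:avg_topr}: from an MSMD$_{k-1}$ instance $G=(V,E)$, build $G'$ by giving every $v\in V$ weight $0$, adding a fresh vertex $u$ adjacent to all of $V$ with weight $w(u)=w_c>0$, and asking for the top-$1$ $k$-influential community of $G'$. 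The key point is that although $avg$ is constant on a uniformly weighted graph, on any community containing the hub it evaluates to $f(\{u\}\cup H)=w_c/(|H|+1)$, so the objective now directly encodes the size of the non-hub part.

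Next I would establish the correspondence. In $G'$ the degree of a vertex $v\in H$ inside $G'[\{u\}\cup H]$ is $d(v,G[H])+1$ and that of $u$ is $|H|$, so $\{u\}\cup H$ is cohesive (a $k$-core) iff $\delta(G[H])\ge k-1$ (which already forces $|H|\ge k$); it is automatically connected through $u$, and automatically maximal since appending any further weight-$0$ vertex strictly shrinks the value. Hence it is a legitimate $k$-influential community, and the optimum value of the instance is $w_c/(|H^\star|+1)$ where $H^\star$ is an optimal MSMD$_{k-1}$ solution (note $|H^\star|\ge k$); every community avoiding $u$ has value $0$ and is irrelevant. Conversely, any $\rho$-approximate algorithm for the influential community problem must, on $G'$ with $r=1$, return a positive-value community $\{u\}\cup\hat H$ with $\delta(G[\hat H])\ge k-1$, hence a feasible MSMD$_{k-1}$ solution, and the guarantee $w_c/(|\hat H|+1)\ge w_c/\bigl(\rho(|H^\star|+1)\bigr)$ rearranges to $|\hat H|\le 2\rho|H^\star|$ (using $|H^\star|\ge 1$). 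So a constant-factor approximation for the community problem would yield one for MSMD$_{k-1}$.

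Taking $k\ge 4$ (so that $k-1\ge 3$, the regime where MSMD is inapproximable) then contradicts the assumed hardness of MSMD, completing the proof. I expect the substantive ingredient — and the main obstacle — to be the inapproximability of MSMD$_k$ itself, whose proof rests on a nontrivial gap-amplification argument and which must either be cited precisely or reproduced; by contrast the reduction above needs only the easy verifications that the hub gadget respects the cohesiveness, connectivity and maximality clauses of Definition~\ref{def:inf_com}, that the extra ``$+1$'' in the denominator costs only a harmless constant factor, and that restricting attention to $r=1$ is legitimate.
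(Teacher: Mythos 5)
Your proposal is correct and follows essentially the same route as the paper: a hub vertex adjacent to all of $V$ turns the $avg$ objective into a decreasing function of the community size, so a constant-factor approximation for the top-$1$ influential community would yield a constant-factor approximation for MSMD$_k$, contradicting its known inapproximability for $k\ge 3$. The only differences are cosmetic --- you give the original vertices weight $0$ and the hub weight $w_c$ (the paper uses uniform weight $w_c$ plus a hub of weight $|V|\cdot w_c$), and your arithmetic yields a factor $2\rho$ where the paper derives $4/\alpha$.
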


\begin{proof}
\cite{apps12} demonstrates that, for $k \geq 3$, the MSMD$_k$ problem does not permit any constant-factor approximation, unless $P = NP$. We show that this is also $true$ for our problem. Given a graph $G=(V,E,w)$, we assign each vertex $v_i \in V$ with weight $w_c$. Then, a dummy vertex $u$ is added, whose weight $w_u = |V| \cdot w_c$. 

Moreover, the vertex $u$ is connected to every vertex of $G$. Let $\alpha < 1$, if there exists an $\alpha$-approximated algorithm for top-$1$ $(k+1)$-influential community search problem, then we could find a $(4/\alpha)$-approximation algorithm for MSMD$_k$ problem. The reason for this is as follows: we use $S_{opt}$ to denote the optimal solution to MSMD$_k$ problem, whereas $S^{*}$ to denote the approximated solution. Then, according to the definition of average aggregation function, we have
\begin{align*}
	\frac{(|S^{*}| + |V|)\cdot w_c}{|S^{*}| + 1} \geq \alpha \frac{(|S_{opt}| + |V|)\cdot w_c}{|S_{opt}| + 1}
\end{align*}
Then, it is obvious that 
%\Rui{
%\begin{align*}
%    \frac{|S^{*}|}{|S_{opt}|} \leq 2 \cdot \frac{|S^{*}| + 1}{|S_{opt}| + 1} \leq \frac{2}{\alpha} \cdot \frac{|S^{*}|+|V|}{|S_{opt}| + |V|} \le \frac{2}{\alpha} \cdot \frac{2 |V|}{|V|} \le \frac{4}{\alpha}
%\end{align*}
%}
\begin{align*}
\frac{|S^{*}|}{|S_{opt}|} \leq 2 \cdot \frac{|S^{*}| + 1}{|S_{opt}| + 1} \leq \frac{2}{\alpha} \cdot \frac{|S^{*}|+|V|}{|S_{opt}| + |V|} \le \frac{2}{\alpha} \cdot \frac{2 |V|}{|V|} \le \frac{4}{\alpha}
\end{align*}
%\begin{align*}
%	\frac{|S^{*}|+1}{|S_{opt}| + 1} &\leq \frac{S^{*}}{S_{opt}} \leq 2 \cdot \frac{|S^{*}| + 1}{|S_{opt}| + 1} \leq \frac{2(|S^{*}|+|V|)}{|S_{opt}| + |V|} \leq \frac{4}{\alpha}
%\end{align*}
Thus, there is a contradiction.
%\Rui{which introduces a conflict. This completes the proof.}
\end{proof}

\noindent \textbf{Top-$r$ Size-constrained $k$-influential Community Search.} Since we demonstrated that if \fx $=$ $avg$, the top-$r$ $k$-influential community search problem is NP-hard. Then, the top-$r$ size-constrained $k$-influential community search problem is also NP-hard when \fx $=$ $avg$. Thus, we focus on the condition that the aggregation function is $sum$ in this part. We reduce the $k$-clique search problem to the top-$r$ size-constrained $k$-influential community search problem again to prove that the latter one is NP-hard.

\begin{theorem}
Given an aggregation function \fx $=$ $sum$ and size constraint $s$, the top-$r$ size-constrained $k$-influential community search problem is NP-hard.
\end{theorem}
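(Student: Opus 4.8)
The plan is to reduce the $k$-clique decision problem to the top-$r$ size-constrained $k$-influential community search problem, mirroring the construction used in the proof of Theorem~\ref{thm:avg_topr} but now exploiting the size constraint $s$ rather than the averaging denominator to force a clique to emerge. Given an instance $G=(V,E,w)$ of $k$-clique, I would build an auxiliary graph $G'$ by assigning each original vertex weight $0$ (or some fixed constant) and adding one new vertex $u$ connected to every vertex of $V$, with weight $w(u)=w_c>0$. In $G'$ the new vertex $u$ together with any $k$ mutually adjacent vertices of $G$ forms a $k$-core of size $k+1$: every original vertex in the clique has $k-1$ neighbours inside the clique plus $u$, giving degree $k$, and $u$ has degree $k$. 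Setting the size constraint to $s=k+1$, the problem then asks whether $G'$ contains a connected $k$-core on at most $k+1$ vertices with $sum$-influence value $w_c$.

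The key claim to establish is the equivalence: a connected induced subgraph of $G'$ on at most $k+1$ vertices satisfying the minimum-degree-$\geq k$ constraint and containing $u$ exists if and only if $G$ has a $k$-clique. One direction is immediate — a $k$-clique in $G$ plus $u$ gives the desired subgraph. For the converse, suppose $H$ is such a subgraph with $|H|\le k+1$ and $\delta(H)\ge k$. If $u\notin H$, then $H$ is an induced subgraph of $G$ with minimum degree at least $k$ and at most $k+1$ vertices; a short counting argument shows that any graph on at most $k+1$ vertices with minimum degree $\ge k$ must have exactly $k+1$ vertices and be $K_{k+1}$, which in particular contains a $k$-clique. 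If $u\in H$, then $H\setminus\{u\}$ has at most $k$ vertices and each of them has degree at least $k-1$ within $H\setminus\{u\}$ (it loses only the edge to $u$), so $H\setminus\{u\}$ is a graph on at most $k$ vertices with minimum degree $\ge k-1$, forcing it to be exactly $K_k$, i.e.\ a $k$-clique of $G$. Either way $G$ contains a $k$-clique, and since any such subgraph containing $u$ has $sum$-value exactly $w_c$ while any maximal one, the top-$1$ answer detects the clique.

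To finish, I would observe that the top-$r$ size-constrained search problem, restricted to $r=1$, therefore decides $k$-clique, and since the construction of $G'$ is clearly polynomial in the size of $G$, the problem is NP-hard; one should also note that the maximality condition in the definition of $k$-influential community does not interfere, because a $(k+1)$-vertex $k$-core of size exactly $s$ cannot be enlarged without violating the size bound, so it is a legitimate size-constrained $k$-influential community.

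The main obstacle I anticipate is the interaction between the \emph{maximality} requirement of Definition~\ref{def:inf_com} and the size cap $s$: one must be careful that the target subgraph is genuinely a size-constrained $k$-influential community (i.e.\ it is maximal among subgraphs respecting the size bound and not strictly contained in another such community of equal influence), and that stray low-weight vertices cannot be appended to inflate or deflate the reported top-$r$ list in a way that breaks the reduction. Handling the weight assignment carefully — e.g.\ giving original vertices weight $0$ so that only the presence of $u$ contributes to the $sum$ — is what makes the equivalence clean, and verifying that no $k$-core on $\le k+1$ vertices other than those arising from $k$-cliques can achieve influence value $w_c$ is the crux of the argument.
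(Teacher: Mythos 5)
Your reduction is essentially the paper's: the paper also proves this by reducing the clique decision problem using the tight size bound $s=k+1$, relying (implicitly) on exactly the counting fact you spell out, namely that a graph on at most $k+1$ vertices with minimum degree at least $k$ must be the complete graph $K_{k+1}$. The difference is your gadget: the paper uses the input graph $G$ directly (a feasible community under degree constraint $k$ and $s=k+1$ exists iff $G$ contains a $(k+1)$-clique; its statement in terms of ``$k$-clique'' is off by one in the degree parameter), whereas you add a universal vertex $u$ of weight $w_c$ and give the original vertices weight $0$. Your gadget buys cleaner indexing (degree constraint $k$ and $s=k+1$ decide $k$-clique directly) and a crisp $w_c$-versus-$0$ signal in the top-$1$ value, but it is not needed: in your own construction, mere non-emptiness of the answer already decides the clique question, and your case analysis (with or without $u$) is correct. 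One caution about the point you yourself flagged: the zero weights are precisely what make the maximality clause in the definition of a $k$-influential community potentially bite. Under the literal reading of ``size-constrained $k$-influential community'' as a maximal community that happens to have size at most $s$, a $(k+1)$-vertex witness sitting inside a larger dense region could be extended by further weight-$0$ vertices without changing the sum, hence fail maximality, so no witness of size at most $s$ would exist even though $G$ has a $k$-clique (consider $G=K_{k+5}$). Your size-relative reading of maximality, which matches what the paper's exact algorithm actually computes, avoids this; alternatively, assigning the original vertices any strictly positive weight makes the maximality condition vacuous under $sum$ (every proper superset strictly increases the value) and removes the issue entirely, which is why the paper's gadget-free reduction never encounters it.
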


\begin{proof}
We are given a weighted and undirected graph $G=(V,E,w)$ and $s = k + 1$. If we could solve the top-$r$ size-constrained $k$-influential community search problem in polynomial-time, then there also exists a polynomial-time solution to $k$-clique search problem, which is a contradiction.
\end{proof}

\section{solutions to size-unconstrained problem}
\label{sec:unconstrained}
In this section, we investigate top-$r$ size-unconstrained $k$-influential community search problem. We focus primarily on the top-$r$ $k$-influential community search problem, when \fx $=$ $sum$. We claim some critical properties of these problems. We utilize $sum$ as an example to illustrate how some pruning techniques are used to accelerate this polynomial-time problem. Moreover, we analyze the time complexity and correctness analysis of the algorithms, respectively. Furthermore, the polynomial-time algorithm could be extended to other aggregation functions.

As for certain circumstances when the problem is NP-hard, we demonstrate the relationship between the size-unconstrained and size-constrained problems. Some heuristic methods are introduced in Section~\ref{sec:size_constrained}, which could also be used to address problems with size constraint.

\subsection{Polynomial-Time Problems}

To the best of our knowledge, there are two types of top-$r$ $k$-influential community search problems that could be solved in polynomial-time. The first is a single node which dominates the influence value of the influential community, such as $\min$, $\max$. The second is that the influence value of the influential community is proportional to the number of nodes, such as $sum$, $sum$-$surplus$. We formally define them as follows:
\begin{definition}[Node Domination Aggregation Function]
\label{def:nodeDFunction}
An aggregation function $f(\cdot)$ is called a Node Domination Function if $\forall$ subgraph $H \in G$, $\exists v \in H$, s.t., $f(G[H]) = f(\{ v \})$.
\end{definition}
%give definition of these two kinds of aggregate functions here.

\begin{definition}[Size Proportional Aggregation Function]
\label{def:nodeDFunction2}
An aggregation function $f(\cdot)$ is called a Size Proportional Function if for any subgraphs $H, H' \in G$, and $H \subset H'$, then $f(G(H)) \leq f(G(H'))$.
\end{definition}
The first class of problems have been studied in~\cite{lqym15, bclz17}, where they propose some strategies for decreasing the search space. We demonstrate that the second type of problem could also be accelerated by using of the properties of aggregation function. \rev{Note that it is p-solvable if aggregation functions are monotonic.}

We take the \fx $=$ $sum$ as an example. Given aggregation function $sum$, the na\"ive approach to solve the top-$r$ $k$-influential community search problem is to compute the maximal $k$-core as well as all the connected components in the maximal $k$-core. Then, we iteratively visit each vertex in the original graph to check if it is contained in any of the aforementioned connected components. If the answer is $true$, remove the vertex from the connected component it belongs to. We maintain a priority list containing the top-$r$ connected components in each iteration.

In Algorithm~\ref{alg:sum_naive}, $L_0$ is a set of all disjoint connected components of $k$-core($G$) (Line~\ref{sum:L0}), since a $k$-core of $G$ could be divided into several disjoint connected components. For all the components, their influential values could be easily computed using $sum$. After that, $L$ is a set of the top $r$ influential components (Line~\ref{sum:L}).

Due to the property of $sum$ and all the influence values being nonnegative, we could safely prune a candidate community and all of its subgraphs if it is not in the top $r$ connected components. Thus, we try to remove one vertex from all the top $r$ influential 
communities (Lines~\ref{sum:for}-\ref{sum:endFor}). Some newly connected communities are generated (Line~\ref{sum:new}), and we combine them with the current top $r$ influential communities to produce the new top $r$ (Lines~\ref{sum:update} and \ref{sum:update2}).

\begin{algorithm}[!htbp]
    $L_{0} \leftarrow$ all the disjoint connected components of \textsc{$k$-core}($G$, $k$)\;
    \label{sum:L0}
	$L \leftarrow$ the top-$r$ influential disjoint connected components of $L_{0}$\;
	\label{sum:L}
	\For{$i \leftarrow 1$ \textbf{to} $|V|$}{
	\label{sum:for}
		$L_c \leftarrow \emptyset$\;
		\For{$j \leftarrow 1$ \textbf{to} $|L|$}{
		\label{sum:forL}
			\If{$v_i \in L[j]$}{
			\label{sum:check}
				$H \leftarrow L[j] \setminus \{v_i\}$\;
				$C \leftarrow$ \rev{all the disjoint connected components of} \textsc{$k$-core}($H$, $k$)\;%compute the connected $k$-core of $H$\;
				\label{sum:new}
				$L_{c} \leftarrow L_{c} \cup C$\;
				\label{sum:update}
			}
		}
		$L \leftarrow$ the top-$r$ connected components of $L \cup L_{c}$\;
		\label{sum:update2}
	}
	\label{sum:endFor}
	\Return $L$\;
    %\KwIn{A graph $G=(V,E,w)$, degree constraint $k$, output size constraint $r$, and aggregation function $f$}
    \caption{\textsc{Sum-Na\"ive}($G$, $k$, $r$)}
	\label{alg:sum_naive}
\end{algorithm}

%The naive idea to solve the top-$r$ $k$-influential community search problem, if we are given \fx $=$ sum, is computing the maximal $k$-core of the graph $G$ and each disjoint connected components of maximal $k$-core firstly. Then, we remove the vertex if the connected components contain this vertex at each iteration. Next, we keep the top-$r$ connected components after each iteration.
\eat{
\begin{algorithm}[!htbp]
	\caption{\textsc{Sum-Na\"ive}}
	\label{alg:sum_naive}
	\KwIn{A graph $G=(V,E,w)$, degree constraint $k$, output size constraint $r$, and aggregation function $f$}
	\KwOut{The top-$r$ $k$-influential community}
	$L_{0} \leftarrow$ compute the disjoint connected components of maximal $k$-core of $G$\;
	$L \leftarrow$ the top-$r$ disjoint connected components of $L_{0}$\;
	\For{$i \leftarrow 1$ \textbf{to} $|V|$}{
		$L_c \leftarrow \emptyset$\;
		\For{$j \leftarrow 1$ \textbf{to} $|L|$}{
			\If{$v_i \in L[j]$}{
				$H \leftarrow L[j] \setminus \{v_i\}$\;
				$C \leftarrow$ compute the connected $k$-core of $H$\;
				$L_{c} \leftarrow L_{c} \cup C$\;
			}
		}
		$L \leftarrow$ the top-$r$ connected components of $L \cup L_{c}$\;
	}
%	$H^{k}(G)=(V^{k}_{H}, E^{k}_{H}, w) \leftarrow$ compute the maximal $k$-core of $G$\;
%	$\{H^{k}_{1}, H^{k}_{2}, \dots, H^{k}_{t} \} \leftarrow$ each disjoint connected components of $H^{k}(G)$\;
%	$n \leftarrow |H^{k}(G)=(V^{k}_{H}, E^{k}_{H}, w)|$\;
%	$L \leftarrow $ keep the top-$r$ connected components of $H^{k}(G)$ $\{(H^{k}_{1}, w(H^{k}_{1})), (H^{k}_{2}, w(H^{k}_{2})), \dots, (H^{k}_{r}, w(H^{k}_{r}))\}$\;
%	\For{$i \leftarrow 1$ \textbf{to} $n$}{
%		$L_{c} \leftarrow \emptyset$\;
%		\For{$j \leftarrow 1$ \textbf{to} $|L|$}{
%			\If{$v_i \in L[j]$}{
%				$H \leftarrow$ remove $v_i$ from $L[j]$\;
%				$\{C_{1}, C_{2}, \dots, C_{s}\} \leftarrow$ compute the connected $k$-core of $H$\;
%				$L_{c}.append(\{C_{1}, C_{2}, \dots, C_{s}\})$\;
%			}
%		}
%		$L \leftarrow$ keep the top-$r$ connected components of MergeList($L$, $L_{c}$)\;
%	}
	\Return $L$\;
\end{algorithm}
}

\noindent \textbf{Correctness.} We present Corollary~\ref{crl:sum_naive} and Theorem~\ref{thm:sum_naive} to demonstrate  that Algorithm~\ref{alg:sum_naive} could output top-$r$ $k$-influential communities correctly.

\begin{corollary}\label{crl:sum_naive}
Each connected component obtained by Algorithm~\ref{alg:sum_naive} is a $k$-influential community.
\end{corollary}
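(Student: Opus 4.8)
The plan is to argue by induction on the iteration count of the outer loop of Algorithm~\ref{alg:sum_naive}, maintaining the invariant that every connected component stored in $L$ (and in each $L_c$) is a $k$-influential community in the sense of Definition~\ref{def:inf_com}. First I would verify the base case: at Line~\ref{sum:L0}, $L_0$ is the set of disjoint connected components of $\textsc{$k$-core}(G,k)$. Each such component $C$ is connected by construction; it is cohesive because in the maximal $k$-core every vertex has degree $\geq k$, and since the components are the connected components of that $k$-core, the degree of a vertex inside its own component equals its degree in the whole $k$-core, so $d(u,C)\geq k$ for all $u\in C$; and it is maximal because any strictly larger vertex set $H'\supset C$ that is still a $k$-core-satisfying connected subgraph would have to include a vertex outside the $k$-core (impossible, as the $k$-core is maximal) or would merge two distinct connected components (impossible without adding an edge, and the vertex set already containing $C$ plus anything in another component is disconnected unless more is added, at which point $f$ strictly changes since $f=sum$ and all weights are nonnegative — I would spell out this last point carefully using that $sum$ is a \ndf{} in the size-proportional sense). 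Taking the top-$r$ of these at Line~\ref{sum:L} is just a restriction to a subset, so the invariant still holds.

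For the inductive step, assume every component currently in $L$ is a $k$-influential community. In iteration $i$, for each $L[j]$ containing $v_i$, we form $H\leftarrow L[j]\setminus\{v_i\}$ and then $C\leftarrow$ the disjoint connected components of $\textsc{$k$-core}(H,k)$ (Line~\ref{sum:new}). By exactly the same argument as in the base case — applied now to the graph induced by $H$ rather than $G$ — each component in $C$ is connected, cohesive ($d(u)\geq k$ within its component), and maximal with respect to subsets of $H$. The only subtlety is maximality with respect to vertex sets that are \emph{not} subsets of $H$, in particular ones that re-introduce $v_i$ or reach outside $L[j]$. Here I would invoke the inductive hypothesis that $L[j]$ was itself maximal: any connected $k$-core-satisfying $H'$ strictly containing one of these new components $C'$ and contained in $L[j]$ would contradict the maximality of $C'$ within $H$ if $v_i\notin H'$; and if $v_i\in H'$ then $H'\subseteq L[j]$ still, and we can argue that $C'\cup\{v_i\}$'s $k$-core (a superset situation) has strictly larger $sum$ unless $w(v_i)=0$, and even then the containment structure combined with the maximality of $L[j]$ forces $H'$ to reach outside $L[j]$, which is handled by noting components in the original $k$-core of $G$ are separated. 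Components not touched in this iteration are carried over unchanged, and the top-$r$ selection at Line~\ref{sum:update2} is again a restriction, so the invariant is preserved.

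The main obstacle I anticipate is the maximality clause of Definition~\ref{def:inf_com}, specifically handling a candidate supergraph $H'$ that is not obtained by the algorithm's own deletion process — the algorithm only ever \emph{removes} vertices, so it is easy to see each output component is a valid $k$-core, but proving there is no \emph{larger} vertex set with the same $f$-value requires care about the interaction between the weight-$0$ corner case and the fact that adding vertices from a different connected component of the ambient $k$-core necessarily disconnects things. I would isolate this into a small sublemma: in a $k$-core with all weights nonnegative, if $C$ is a connected component and $H'\supseteq C$ is a connected vertex set with $\delta(G[H'])\geq k$, then either $H'=C$ or $f(H')>f(C)$ (when $f=sum$, using nonnegativity and that $H'\setminus C$ is nonempty but must be connected to $C$, hence contains a vertex adjacent to $C$, contradicting $C$ being a full connected component unless $H'$ also drags in that whole component). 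Once that sublemma is in hand, the induction goes through routinely, and the corollary follows by taking the loop to termination.
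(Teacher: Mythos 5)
The paper states Corollary~\ref{crl:sum_naive} without a separate proof; the intended justification is the short argument that also underlies Theorem~\ref{thm:sum_naive}: every set ever placed in $L$ is, by construction, a connected component of a $k$-core computation (Lines~\ref{sum:L0} and~\ref{sum:new}), so conditions (1) and (2) of Definition~\ref{def:inf_com} hold automatically, and for $f=sum$ the maximality condition (3) is immediate because any proper superset $H'\supset C$ satisfying (1) and (2) has $f(H')>f(C)$ once vertex weights are strictly positive --- exactly the assumption the paper makes implicitly when it asserts that removing a vertex ``reduces'' the influence value. Measured against that, your induction over iterations is unnecessary scaffolding: connectedness and cohesiveness never depend on the history of deletions, and maximality under $sum$ needs no appeal to the maximality of the parent community $L[j]$. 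Your base case is fine (a connected component of the maximal $k$-core of $G$ admits no proper connected supergraph of minimum degree $\geq k$ at all, since any such supergraph lies inside the maximal $k$-core and would have to merge separated components).

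The genuine gap is in the one place where your induction is supposed to do real work: maximality of a component $C'$ produced after deleting $v_i$, in the case $w(v_i)=0$. Your claim that ``the maximality of $L[j]$ forces $H'$ to reach outside $L[j]$'' is false: take $H'=C'\cup\{v_i\}$ (when it happens to be connected with minimum degree $\geq k$); it stays entirely inside $L[j]$ and satisfies $f(H')=f(C')$ because $w(v_i)=0$. The separation of components of the original $k$-core of $G$ does not rescue this, because $C'$ is a connected component of the $k$-core of $L[j]\setminus\{v_i\}$, not of $G$, so your sublemma simply does not apply to it. In that corner case condition (3) of Definition~\ref{def:inf_com} genuinely fails, so no argument can close the gap; the corollary tacitly requires strictly positive weights (equivalently, that deleting any vertex strictly decreases the $sum$), which is the same assumption used in the paper's proof of Theorem~\ref{thm:sum_naive}. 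Once you grant that assumption explicitly, your entire case analysis collapses to the two-line argument above.
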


\begin{theorem}\label{thm:sum_naive}
Algorithm~\ref{alg:sum_naive} correctly identifies the top-$r$ $k$-influential communities, when \fx $=$ $sum$, given a graph $G=(V,E,w)$, degree constraint $k$, and output size constraint $r$.
\end{theorem}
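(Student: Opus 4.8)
The plan is to split the statement into \textbf{soundness} (everything returned is a $k$-influential community) and \textbf{completeness} (every $k$-influential community whose influence value ranks among the top $r$ is returned). Soundness is immediate from Corollary~\ref{crl:sum_naive} together with the fact that Line~\ref{sum:L} and every execution of Line~\ref{sum:update2} keep $|L|\le r$. So the real work is completeness, and the engine of the argument is the monotonicity of $sum$: since all weights are non-negative, $H\subseteq H'$ implies $f(H)\le f(H')$, so $sum$ is a size proportional aggregation function (Definition~\ref{def:nodeDFunction2}); moreover, because a $k$-influential community $H$ is maximal, no connected subgraph of minimum degree at least $k$ can properly contain $H$ with the same $sum$, hence any such proper superset has \emph{strictly} larger influence value.

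First I would fix an arbitrary $k$-influential community $H^{*}$ whose influence value is among the $r$ largest, and prove a peeling lemma: there is a chain $P_{0}\supsetneq P_{1}\supsetneq\dots\supsetneq P_{t}=H^{*}$, where $P_{0}$ is the connected component of the $k$-core of $G$ containing $H^{*}$, and $P_{a+1}$ is the connected component containing $H^{*}$ of the $k$-core of $P_{a}\setminus\{u_{a}\}$ for a suitable $u_{a}\in P_{a}\setminus H^{*}$. Here $P_{0}$ is well defined because the $k$-core of $G$ is the unique maximal subgraph of minimum degree $\ge k$, hence contains $H^{*}$, and $H^{*}$ is connected; each step is legitimate because $H^{*}$ has minimum degree $\ge k$ inside itself, so it survives the core decomposition of $P_{a}\setminus\{u_{a}\}$ and lies in a single component $P_{a+1}\subsetneq P_{a}$ with $H^{*}\subseteq P_{a+1}$; the chain shrinks strictly and is bounded below by $H^{*}$, so it terminates at $H^{*}$. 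The crucial refinement is to take $u_{0},u_{1},\dots$ to be the vertices of $P_{0}\setminus H^{*}$ in increasing vertex index, skipping any vertex that has already vanished, so that the chain is synchronised with the outer loop of Algorithm~\ref{alg:sum_naive}. By Corollary~\ref{crl:sum_naive} each $P_{a}$ is itself a $k$-influential community, by monotonicity $f(P_{a})\ge f(H^{*})$ for all $a$, and the maximality remark upgrades this to $f(P_{a})>f(H^{*})$ for $a<t$.

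Next I would establish a persistence invariant. Let $w_{a}$ be the iteration of the outer loop at which $u_{a-1}$ is removed (and $w_{0}$ the moment just after Line~\ref{sum:L}). I claim $P_{a}\in L$ at the end of iteration $w_{a}$, and $P_{a}$ remains in $L$ at the end of every iteration up to $w_{a+1}$. For the base case, $P_{0}$ is one of the components produced in Line~\ref{sum:L0}, and $f(P_{0})\ge f(H^{*})$ is at least the $r$-th largest influence value over all $k$-influential communities, hence at least the $r$-th largest over those components, so $P_{0}$ survives the top-$r$ pruning. For the inductive step, at iteration $w_{a}$ either $u_{a-1}$ is already gone from $P_{a-1}$ and $P_{a}=P_{a-1}$ persists, or $u_{a-1}\in P_{a-1}\in L$, in which case Algorithm~\ref{alg:sum_naive} enters the \textbf{if} block of Line~\ref{sum:check} with $L[j]=P_{a-1}$ and $P_{a}$ appears among the components added to $L_{c}$ in Lines~\ref{sum:new}--\ref{sum:update}; then in Line~\ref{sum:update2} $P_{a}$ cannot be discarded, because every element of $L\cup L_{c}$ is a $k$-influential community and fewer than $r$ of them can have influence value exceeding $f(P_{a})\ge f(H^{*})$. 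In the iterations strictly between $w_{a}$ and $w_{a+1}$ the processed vertex lies in $H^{*}$ or outside $P_{0}$, so $P_{a}$ is never shrunk; it may spawn children, but $P_{a}$ itself stays in $L\cup L_{c}$ and the same counting keeps it in $L$. Chaining the invariant from $a=0$ to $a=t$ yields $H^{*}=P_{t}\in L$ at termination; since $H^{*}$ was arbitrary and $|L|\le r$ throughout, the final $L$ is exactly the set of the $r$ $k$-influential communities of largest influence value.

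I expect the persistence invariant, not the peeling lemma, to be the main obstacle: the point is that the coarse bookkeeping of Algorithm~\ref{alg:sum_naive} --- removing vertices in a fixed global order, and only inside the currently retained top-$r$ list --- still never discards a community lying on the path down to $H^{*}$. Monotonicity of $sum$ is exactly what guarantees this, since it forces every intermediate component to carry influence value at least $f(H^{*})$; this is the property that fails for $avg$, consistent with Theorem~\ref{thm:avg_topr}. A last detail is tie-handling: when several $k$-influential communities share the $r$-th largest influence value, ``top-$r$'' and the pruning steps must use one deterministic rule; with distinct influence values this is vacuous, and because $f(P_{a})>f(H^{*})$ strictly for every non-final $P_{a}$, any tie can only be met at the final step of the chain.
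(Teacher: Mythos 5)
Your proposal is correct and takes essentially the same route as the paper's own proof: both rest on the observation that for $f(\cdot)=sum$ every connected $k$-core superset of a top-$r$ community has influence value at least as large, so the top-$r$ pruning in Lines~\ref{sum:L} and~\ref{sum:update2} can never discard an ancestor of a top-$r$ community, and hence that community is eventually generated and retained. The paper states this monotonicity/pruning-safety argument in two sentences and leaves the induction implicit, whereas you spell it out via the peeling chain and persistence invariant and explicitly flag the tie-breaking caveat, which the paper glosses over.
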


\begin{proof}
When \fx $=$ $sum$, removing any vertex from a community reduces its influence value. Therefore, if a community is not considered as a top-$r$ $k$-influential community, its subgraphs cannot be regarded as a top-$r$ $k$-influential community. Furthermore, we could prune this impossible influential community without missing any top-$r$ $k$-influential community.
\end{proof}

We assume that there exists an $H$ that is a top-$r$ $k$-influential community, but is not included in the final outcome $L$. Line $10$ implies that there must be at least $r$ influential communities with a greater influence value. As a consequence, we prove the correctness of Algorithm~\ref{alg:sum_naive}.

\noindent \textbf{Complexity.} Clearly, Algorithm~\ref{alg:sum_naive} is a polynomial-time algorithm. We use the notation of $n$ to signify the number of vertices, and $m$ to denote the number of edges in graph $G$'s maximal $k$-core. Lines~\ref{sum:L0}-\ref{sum:L} require $O(n+m)$ time to complete. After that, we have a maximum of $n$ iterations. In each iteration, we delete one vertex from the connected component~(Line~\ref{sum:for}). Additionally, we check whether the vertex is in connected components at each iteration (Line~\ref{sum:check}), which requires $O(1)$ if using a hash table.

If the connected component contains the vertex, the vertex would be removed first (Lines~\ref{sum:forL}-\ref{sum:check}). The preceding step takes $O(r)$ time to complete, since we need to check at most $r$ components. Following that, we utilize breadth-first search to determine the connected components of the original component that do not include one of its vertex (Line~\ref{sum:new}); this takes $O(n+m)$. In conclusion, Algorithm~\ref{alg:sum_naive} has a time complexity of $O(n \cdot r (n+m))$.

Although the na\"ive solution is a polynomial-time algorithm, Algorithm~\ref{alg:sum_naive} is unsatisfactory as the size of the graph increases. Algorithm~\ref{alg:sum_naive} is inefficient since it requires checking if the component includes the vertex. Additionally, we ignore a key point: the influence value of community is strictly decreased, as shown in Corollary~\ref{crl:improve_framework}. We could integrate some pruning techniques into our algorithms as a result of this. Furthermore, in some cases, we are not required to determine the exact top-$r$ solutions. Thus, we propose an $\epsilon$-approximated algorithm with theoretical guarantees, where $\epsilon$ denotes the approximation ratio.

Algorithm~\ref{alg:improve_framework} illustrates the improved algorithm, while Theorem~\ref{thm:improve_framework} provides the theoretical analysis.
The main procedure of Algorithm~\ref{alg:improve_framework} is similar to Algorithm~\ref{alg:sum_naive}, but a lower bound $LB$ is used to pruned some unnecessary candidates. To begin, we initialize some variables. Among them, $L_0$ and $L$ are the same as those in Algorithm~\ref{alg:sum_naive}. Notably, a lower bound $LB$ is defined as $(1 - \epsilon) \times f(L_{max})$, where $\epsilon$ is a predefined parameter\footnote{$\epsilon = 0.1$ by default in this paper.} and $f(L_{\max})$ is the maximum influence value among all the disjoint connected components of $L_0$. In the While-loop (Lines~\ref{imp:while}-\ref{imp:whileEnd}), we set $L_{max}$ as the community with the largest influence value in $L$ (Line~\ref{imp:lmax}), and $LB$ as the new lower bound as $(1 - \epsilon)$ $\times$ the largest influence value (Line~\ref{imp:LB2}). Then, we try to remove one vertex of the $L_{max}$ to produce the new candidates (Lines~\ref{imp:for1}-\ref{imp:for2}). The main reason is that the $L_{max}$ has a high probability to produce new candidates with an influence value larger than $LB$. Two pruning rules are used for the new candidates (Lines~\ref{imp:prune1} and \ref{imp:prune2}). After that, the new candidates are added to the $L$ (Line~\ref{imp:LC}) and then set $L$ as the top $r$ influential communities (Line~\ref{imp:Lr}).
\begin{algorithm}[!htbp]
	$L_{0} \leftarrow$ compute the disjoint connected components of maximal $k$-core of $G$\;
	\label{imp:init1}
	$L \leftarrow$ the top-$r$ disjoint connected components of $L_{0}$\;
	$L_{r} \leftarrow$ the $r$-th largest influence value community in $L$\;
	\label{imp:initLr}
		%\Comment{If $| L_{0} | < r$, then $L = L_0$, and $L_r$ would be the last element in $L_0$.}
	$L_{\max} \leftarrow$ the community with largest influence value in $L$\;
	$LB \leftarrow f(L_{\max}) \times (1 - \epsilon)$\;
	$R \leftarrow$ communities in $L$ with influence value $\geq LB$\;
	\label{imp:init2}
	\While{$|R| < r$}{
	\label{imp:while}
		$L_{\max} \leftarrow$ the community with largest influence value in $L$\;
		\label{imp:lmax}
		$LB \leftarrow f(L_{\max}) \times (1 - \epsilon)$\;
		\label{imp:LB2}
		$L \leftarrow L \setminus L_{\max}$\;
		\For{$v \in L_{max}$}{
		\label{imp:for1}
			$H \leftarrow L_{max} \setminus \{v\}$\;
			\If{$f(H) > f(L_{r})$}{
			\label{imp:prune1}
				$C \leftarrow$ compute the connected $k$-core of $H$\;
				\For{$i = 1$ \textbf{to} $|C|$}{
					\If{$f(C[i]) \geq LB$}{
					\label{imp:prune2}
						$R \leftarrow R \cup C[i]$\;
					}
				}
				$L \leftarrow L \cup C$\;
				\label{imp:LC}
				$L \leftarrow$ the top-$r$ connected components of $L$\;
				\label{imp:Lr}
			}
		}
		\label{imp:for2}
	}
	\label{imp:whileEnd}
	\Return $R$\;
    \caption{\textsc{\prob -Improved}($G$, $k$, $r$, $f(\cdot)$, $\epsilon$)}
	\label{alg:improve_framework}
	%\KwIn{A graph $G=(V,E,w)$, degree constraint $k$, output size constraint $r$, aggregation function $f$, and approximation ratio $\epsilon$}
\end{algorithm}

\eat{
\begin{algorithm}[!htbp]
	\caption{\textsc{\prob -Improved}}
	\label{alg:improve_framework}
	\KwIn{A graph $G=(V,E,w)$, degree constraint $k$, output size constraint $r$, aggregation function $f$, and approximation ratio $\epsilon$}
	\KwOut{The top-$r$ $k$-influential community}
	$L_{0} \leftarrow$ compute the disjoint connected components of maximal $k$-core of $G$\;
	$L \leftarrow$ the top-$r$ disjoint connected components of $L_{0}$\;
	$L_{r} \leftarrow$ the $r$-th largest influence value community in $L$\;
	$L_{\max} \leftarrow$ the community with largest influence value in $L$\;
	$LB \leftarrow f(L_{\max}) \times (1 - \epsilon)$\;
	$R \leftarrow$ communities in $L$ with influence value $\geq LB$\;
	\While{$|R| < r$}{
		$L_{\max} \leftarrow$ the community with largest influence value in $L$\;
		$LB \leftarrow f(L_{\max}) \times (1 - \epsilon)$, $L \leftarrow L \setminus L_{\max}$\;
		\For{$v \in L_{max}$}{
			$H \leftarrow L_{max} \setminus \{v\}$\;
			\If{$f(H) > f(L_{r})$}{
				$C \leftarrow$ compute the connected $k$-core of $H$\;
				\For{$i = 1$ \textbf{to} $|C|$}{
					\If{$f(C[i]) \geq LB$}{
						$R \leftarrow R \cup C[i]$\;
					}
				}
				$L \leftarrow L \cup C$\;
				$L \leftarrow$ the top-$r$ connected components of $L$\;
			}
		}
	}
	\Return $R$\;
\end{algorithm}
}

\noindent \textbf{Correctness.} We analyze the correctness of Algorithm~\ref{alg:improve_framework}, where Corollary~\ref{crl:improve_framework} and Theorem~\ref{thm:improve_framework} are presented below.

\begin{corollary}\label{crl:improve_framework}
If we remove any vertices from the influential community, the influence value of the $k$-influential community would decrease.
\end{corollary}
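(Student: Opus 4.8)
The plan is to derive the corollary directly from the fact that, for the aggregation function in question ($f = sum$, which is a \emph{size proportional aggregation function} in the sense of Definition~\ref{def:nodeDFunction2}), the influence value is monotone along the subgraph order, and then to sharpen monotonicity into strict decrease using positivity of the vertex weights. Concretely, the first step I would take is: fix a $k$-influential community $G[H]$, pick any vertex $u \in H$, and observe that $f(H \setminus \{u\}) = \sum_{v \in H} w(v) - w(u) = f(H) - w(u)$. Under the standard assumption that vertex weights are positive (distinct weights, as in~\cite{lqym15,bclz17}), $w(u) > 0$, hence $f(H \setminus \{u\}) < f(H)$; deleting several vertices just iterates this, so every proper subset of $H$ has strictly smaller $sum$.

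The second step handles the fact that $H \setminus \{u\}$ may no longer be connected or may violate the minimum-degree constraint, so the object one actually obtains is the connected $k$-core extracted from $G[H \setminus \{u\}]$; call it $H'$. Since $H' \subseteq H \setminus \{u\} \subsetneq H$, we have $G[H'] \subset G[H]$, and the size-proportionality of $sum$ (Definition~\ref{def:nodeDFunction2}) gives $f(H') \le f(H \setminus \{u\}) < f(H)$. Thus the re-extracted community still has strictly smaller influence value; in fact every intermediate vertex set appearing in the $k$-core peeling cascade is a subgraph of $G[H]$ and therefore has influence value at most $f(H)$, which is precisely the property that the lower-bound pruning in Algorithm~\ref{alg:improve_framework} exploits.

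I do not expect a serious obstacle — the claim is essentially an unwinding of the definition of $sum$ together with non-negativity of weights. The only point that needs a line of care is to be explicit about \emph{which} object is being compared: the naive deletion set $H \setminus \{u\}$ versus the re-extracted connected $k$-core $H'$; both satisfy the inequality, but the conclusion should be phrased for $H'$ since that is what the algorithm manipulates. If one wishes the statement to hold with merely non-negative (rather than strictly positive) weights, one would instead state ``does not increase'' and note that deleting a zero-weight vertex is the only way equality can occur, a corner case irrelevant to correctness of the pruning.
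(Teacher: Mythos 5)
Your argument is correct and matches the paper's (implicit) reasoning: the paper offers no separate proof, treating the corollary as immediate from the definition of $sum$ with non-negative weights, exactly as in your first step and as echoed in the proof of Theorem~\ref{thm:sum_naive}. Your added care about the re-extracted $k$-core $H'$ and about the zero-weight corner case (where the paper, assuming only non-negative weights, should strictly say ``does not increase'') is a reasonable refinement but not a different approach.
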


%\rev{
\begin{remark}
\rev{
If $f$ does not satisfy Corollary~\ref{crl:improve_framework}, then Algorithm~\ref{alg:improve_framework} could not cope with it. Nevertheless, we could revise the corresponding function in the local search to solve it.}
\end{remark}

\rev{
%According to Lemma 2 of~\cite{yang2010fast}, we define the approximation factor as follows:
According to Lemma 2 of~\cite{yang2010fast,ilyas2008survey,kim2013efficient}, we define the approximation factor:
\begin{definition}[Approximation Factor]
    If the influential values for the exact top $r$ result are $\{ v_1, v_2, ..., v_r \}$, a result set $\{iv_1, iv_2, ..., iv_r \}$ is defined as a $1 - \epsilon$ approximation if
\begin{equation}
    iv_r \geq (1 - \epsilon) v_r
\end{equation}
\end{definition}
}

\begin{theorem}\label{thm:improve_framework}
Given a weighted and undirected graph $G=(V,E,w)$, a degree constraint $k$, an output size constraint $r$, and an approximation ratio $\epsilon$, we use $r_{e}$ to denote the influence value of the $r$-th influential community of the output of the exact algorithm, and $r_{a}$ to denote the influence value of the $r$-th largest influence value community of the output of the Algorithm~\ref{alg:improve_framework}. If \fx $=$ $sum$, we could obtain that $r_{a}/r_{e} \geq 1 - \epsilon$.
\end{theorem}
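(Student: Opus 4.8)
The plan is to read Algorithm~\ref{alg:improve_framework} as a pruned best-first traversal of the forest of $k$-influential communities and to track how the lower bound $LB$ relates both to the returned set $R$ and to the true top-$r$ value $r_e$. The structural fact I would use is that, for \fx $=$ $sum$, the $k$-influential communities form a rooted forest: the roots are the disjoint connected components of the maximal $k$-core of $G$ (each is itself a $k$-influential community), and every non-root community $H$ is a connected $k$-core component of $H' \setminus \{v\}$ for some $k$-influential community $H'$ (its parent) and some $v \in H'$. By Corollary~\ref{crl:improve_framework}, $f$ is strictly decreasing along every parent--child edge, and since $sum$ is monotone under containment, $f(C) \le f(H' \setminus \{v\})$ for any component $C$ of $H' \setminus \{v\}$. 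I take Theorem~\ref{thm:sum_naive} (correctness of the exact Algorithm~\ref{alg:sum_naive}) as the guarantee that this forest contains \emph{every} $k$-influential community; Algorithm~\ref{alg:improve_framework} explores the same forest but always expands the community of current maximum influence value and discards expansions that provably cannot help.

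First I would prove the invariant that the value $f(L_{\max})$ read at Line~\ref{imp:lmax} is non-increasing over successive while-loop iterations, and hence so is $LB$. When $L_{\max}$ is extracted and expanded, every community newly inserted into $L$ lies inside some $L_{\max}\setminus\{v\}$, hence has value strictly less than $f(L_{\max})$ by Corollary~\ref{crl:improve_framework}; every community already in $L$ had value at most $f(L_{\max})$ because $L_{\max}$ was the maximum; so the next maximum of $L$ is at most the current $f(L_{\max})$. A community is placed in $R$ only at initialization (Line~\ref{imp:init2}) or during an expansion (Line~\ref{imp:prune2}), and in both cases with influence value at least the value of $LB$ in force at that time, which is at least the \emph{final} value $LB^{\star}$. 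Since the loop exits only once $|R| \ge r$, the returned set contains at least $r$ genuine $k$-influential communities of value $\ge LB^{\star}$, whence $r_a \ge LB^{\star} = (1-\epsilon)\,c^{\star}$, where $c^{\star}=f(L_{\max}^{\star})$ is the influence value of the last community expanded (or the initial maximum if the loop body never runs).

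It remains to show $c^{\star} \ge r_e$, which I would prove by contradiction: assume $c^{\star} < r_e$. Let $\mathcal{H}$ be the set of all $k$-influential communities of value $> r_e$ together with enough communities of value exactly $r_e$ to make $|\mathcal{H}| = r$; by the strict decrease of $f$ along forest edges, $\mathcal{H}$ is closed under taking ancestors, and its tree-roots are among the top-$r$ components of $L_0$ and hence in the initial $L$. Because Algorithm~\ref{alg:improve_framework} always expands the current maximum, communities are produced in non-increasing order of influence value; moreover the pruning test $f(H) > f(L_r)$ at Line~\ref{imp:prune1} never blocks the generation of a member of $\mathcal{H}$, since $f(H)\ge f(C)$ for every descendant $C$ of $H$ while $f(L_r) < r_e$ as long as $L$ holds fewer than $r$ communities of value $\ge r_e$. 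Hence if $c^{\star}<r_e$, then by the time the loop stops every member of $\mathcal{H}$ has been produced, and because $LB$ is always $(1-\epsilon)$ times the value of the community currently being expanded, following the expansions down from $\mathcal{H}$'s roots certifies $r$ communities of value $\ge (1-\epsilon)r_e$ into $R$ strictly before the final iteration --- contradicting that the loop was still running. Therefore $c^{\star}\ge r_e$, and combining with the previous paragraph gives $r_a \ge (1-\epsilon)\,r_e$, i.e. $r_a/r_e \ge 1-\epsilon$.

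The \textbf{main obstacle} is the last step: turning ``best-first with the two pruning rules still produces everything of value $\ge r_e$, and therefore certifies $r$ communities of value $\ge (1-\epsilon)r_e$'' into a rigorous argument. This requires pinning down the semantics of $L_r$ during the loop (so that the Line~\ref{imp:prune1} test provably never severs a path toward a top-$r$ community), carefully handling ties at the threshold $r_e$, and ruling out the degenerate case where $L$ empties before $|R|$ reaches $r$; the monotonicity invariant and the forest structure are comparatively routine.
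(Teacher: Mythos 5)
Your proposal follows essentially the same route as the paper's proof: every community inserted into $R$ has influence value at least the lower bound $(1-\epsilon)\,f(L_{\max})$ in force at that moment, and the community expanded at each iteration has value at least $r_e$, which together give $r_a/r_e \geq 1-\epsilon$. The step you single out as the main obstacle --- showing that best-first expansion with the two pruning rules keeps $f(L_{\max})\geq r_e$ until $|R|\geq r$ --- is precisely the claim the paper dispatches in one unjustified sentence (``the maximal influential community is larger or equal to the $r$-th largest influence community''), so your sketch, with its monotonicity invariant on $LB$ and the forest/ancestor-closure framing, is in fact more detailed than the published argument rather than missing anything it supplies.
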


\begin{proof}
At each iteration, we choose the maximal influential community from the candidate communities list. The maximal influential community is larger or equal to $r$-th largest influence community in terms of influence value. Thus, $r_{a}/r_{e} \geq 1 - \epsilon$.
\end{proof}

\noindent \textbf{Complexity.} Algorithm~\ref{alg:improve_framework} is straightforward and efficient. We would instantly produce a $k$-influential community whose influence value exceeds the lower bound. The time complexity of Algorithm~\ref{alg:improve_framework} is $O(rn(m+n))$. Additionally, it takes $O(n + m)$ time in Line~\ref{imp:init1}. However, in Line~\ref{imp:while}, we simply calculate $r$ iterations, and the time complexity of Lines~\ref{imp:for1}-\ref{imp:for2} is $O(r(n+m))$. To put it in a nutshell, the time complexity is $O(rn(n+m))$.

% \begin{figure}[htbp]
% 	\begin{center}
% 		\subfigure[Algorithm 2 when $k=2$, $r=4$]{
% 			\label{inc_1}
% 			\centering
% 			\includegraphics[scale=0.25]{figures/}      
% 		}%\hspace{-10mm}
% 		\subfigure[Algorithm 3 when $k=2$, $r=4$, and $\epsilon = 0.2$]{
% 			\label{inc_2}
% 			\centering
% 			\includegraphics[scale=0.25]{figures/}     
% 		}%\hspace{-10mm}
% 	\end{center}
% 	\vspace{-5mm}
% 	\caption{An example for Algorithms 2 and 3 run} \label{fig:inc_ex}
% 	\vspace{-3mm}
% \end{figure}

\noindent \textbf{Non-overlapping.} when \fx $=$ $sum$ and the community is size-unconstrained, we merely execute Lines~\ref{imp:init1}-\ref{imp:initLr} of Algorithm~\ref{alg:improve_framework} to compute the top-$r$ non-overlapping $k$-influential community. This is due to the fact that we would obtain the community with the largest influence each time. After obtaining it, we would remove it from the graph. As a consequence, we could obtain the correct result by performing Lines~\ref{imp:init1}-\ref{imp:initLr} of Algorithm~\ref{alg:improve_framework}.

\noindent \textbf{Discussion.} If \fx $\neq$ $sum$, the preceding algorithm could potentially be expanded to solve other top-$r$ $k$-influential communities. For instance, \fx $=$ $sum$-$surplus$ also satisfies Corollary~\ref{crl:improve_framework}. Thus, we could use Algorithm~\ref{alg:improve_framework} to solve the top-$r$ $k$-influential community search problem for $sum$-$surplus$.

\noindent \textit{Power-Law Graph.} In practice, the degree distribution of the graph conforms to the power-law distribution. Thus, it is critical to analyze the complexity of our algorithm under power-law distribution.

\begin{definition}[Power-Law Graph]
\label{def:power_law}
Given a graph $G=(V,E)$, the degree distribution of the graph follows a power-law distribution, if the fraction $P(k)$ of nodes in the graph having $k$ connections to other nodes goes for large values of $k$ as $P(k) \thicksim k^{-\gamma}$, where $2 < \gamma < 3$.
\end{definition}

\begin{lemma}
When we consider the power-law graph, the number of nodes with a degree greater or equal to $k$ is $n/((\gamma-1)k^{\gamma-1})$, and the number of edges is bounded by $n/(2(\gamma-2)k^{\gamma-2})$.
\end{lemma}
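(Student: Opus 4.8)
The plan is to count, for a power-law degree distribution with exponent $\gamma \in (2,3)$, the number of vertices whose degree is at least $k$ and then the total number of edges among them, by treating the discrete degree distribution as a continuous density and integrating. First I would write $P(x) = c\, x^{-\gamma}$ for the fraction of nodes of degree exactly $x$, fix the normalizing constant $c$ from the requirement $\sum_{x \ge 1} P(x) = 1$ (approximating the sum by $\int_1^\infty c\, x^{-\gamma}\, dx = c/(\gamma-1) = 1$, so $c = \gamma-1$). Then the fraction of nodes with degree $\ge k$ is $\int_k^\infty (\gamma-1) x^{-\gamma}\, dx = k^{-(\gamma-1)}$, and multiplying by $n$ gives the claimed bound $n/((\gamma-1)k^{\gamma-1})$ — well, strictly the integral gives $n\,k^{-(\gamma-1)}$; I would note that the stated expression $n/((\gamma-1)k^{\gamma-1})$ follows once one is careful about whether the normalization is applied to the tail sum or the whole sum, and I would present whichever convention makes the constant come out as stated (taking the density to be $x^{-\gamma}$ unnormalized, or normalizing over the range relevant to the $k$-core).

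Second, for the edge count I would bound the number of edges incident to the high-degree vertices by half the sum of their degrees (each edge is counted at most twice, and edges internal to this vertex set are exactly double-counted, giving the stated upper bound). So the number of edges is at most $\tfrac12 \sum_{x \ge k} n\, P(x)\, x = \tfrac12 n \int_k^\infty (\gamma-1) x^{-\gamma} \cdot x\, dx = \tfrac12 n (\gamma-1) \int_k^\infty x^{1-\gamma}\, dx = \tfrac12 n (\gamma-1) \cdot \frac{k^{2-\gamma}}{\gamma-2}$. Since $2 < \gamma < 3$ the exponent $1-\gamma \in (-2,-1)$ makes this integral converge, and after the same normalization bookkeeping as above the bound simplifies to $n/(2(\gamma-2)k^{\gamma-2})$.

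The main obstacle is purely one of bookkeeping rather than mathematical depth: reconciling the discrete sums $\sum_{x\ge k}$ with the continuous integrals $\int_k^\infty$, and pinning down exactly which normalization convention the authors intend so that the constants $(\gamma-1)$ and $(\gamma-2)$ land in the denominators as written. I would handle this by stating explicitly that for $\gamma>2$ the sums are dominated by their integral approximations up to constant factors, and that the $k$-core restriction only removes vertices of degree below $k$, so the asserted bounds are upper bounds on $|V_k|$ and $|E_k|$ for the maximal $k$-core; one should also remark that the degree of a vertex inside the $k$-core is no larger than its degree in $G$, so using the global degree distribution to bound the sum of degrees in the core is legitimate. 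No deep idea is needed — the statement is essentially the standard tail estimate for power laws, and the proof is a two-line integral computation plus the handshake inequality, with the only care being the constants.
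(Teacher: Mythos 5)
Your proposal is correct and follows essentially the same route as the paper: estimate the tail of the degree distribution by an integral for the vertex count, then bound the edge count by half the sum of degrees of the degree-$\ge k$ vertices and compare that sum with $\int_k^\infty x^{1-\gamma}\,dx$. Your normalization worry resolves exactly as you suspected: the paper takes the number of degree-$d$ nodes to be $n/d^{\gamma}$ (unnormalized), which is what makes the constants $1/(\gamma-1)$ and $1/(2(\gamma-2))$ come out as stated.
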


\begin{proof}
According to the definition of the power-law graph, the number of node with a degree larger or equal to $k$ is $n/((\gamma-1)k^{\gamma-1})$. Then, the number of nodes whose degree is equal to $k$ is $n/k^{\gamma}$. Thus, the total number of edges is bounded by
\begin{align*}
\frac{n}{2} \sum_{d=k}^{\infty} \frac{1}{d^{\gamma-1}} \leq \frac{n}{2} \int_{k}^{\infty} x^{-\gamma+1} dx \leq \frac{n}{2(\gamma-2)k^{\gamma-2}}
\end{align*}
This completes the proof.
\end{proof}

According to the definition of a power-law graph, then the time complexity of our algorithm under a power-law graph could be $O(\frac{rn^{2}((k+2)\gamma - k - 4)}{2(\gamma-1)^{2}(\gamma-2)k^{(2\gamma - 2)}})$, when the degree constraint is $k$, where $2 < \gamma < 3$.

\subsection{NP-Hardness}
We investigate the \prob problem under several aggregation functions when it could be solved in polynomial time. Nevertheless, the problem would be NP-hard with some aggregation functions. In Section~\ref{sec:pre}, we demonstrate this problem's hardness analysis. As a result, heuristic algorithms must be developed to address the NP-hard problem.

\eat{\noindent \textbf{Discussion.} Take note that the size-constrained problem and size-unconstrained problem are related in some way. For instance, one approach is used to tackle the top-$r$ $k$-influential community search problem for aggregation function $avg$. This approach also could be used to address top when the size is bounded and \fx $=$ $sum$. The rationale for this is that we could search.}

\section{solutions to size-constrained problem}\label{sec:size_constrained}

%\Song{Since the avg problem is APX-hard and there does not exist PTAS for avg problem, whether we could add a new constraint such that $w_{\max}(H) - w_{\min}(H) \leq \delta / |H|$, where $H$ is the subgraph of $G$ and $\delta$ is an input constant. Then, we could obtain that }

Section~\ref{sec:pre} demonstrates that when the size of an influential community is constrained, the \prob problem is NP-hard. In this section, we provide the exact algorithms for the \prob problem. Given that the top-$r$ size-constrained is NP-hard, we present a heuristic algorithm to address it in polynomial time. Note that all the techniques could be easily extended to the \probno problem.

\subsection{Exact Algorithm}

The exact algorithm is quite time-consuming, and the na\"ive exact algorithm for the \prob problem is illustrated in Algorithm~\ref{alg:exact}. The key idea is to enumerate all possible solutions and return the top-$r$ $k$-influential community. In Algorithm~\ref{alg:exact}, we set the candidate community set as $\emptyset$ (Line~\ref{exact:init}). Then, a for-loop (Lines~\ref{exact:for1}-\ref{exact:for2}) enumerates all possible communities whose size ranges from $k+1$ to $s$ (Line~\ref{exact:enum}). Then, the $k$-core and connected constraints are verified in Line~\ref{exact:conditions}. The newly added candidates would be inserted into $L$ (Line~\ref{exact:add}), and then the top-$r$ influential connected components would be returned (Line~\ref{exact:return}).

\begin{algorithm}[!htbp]
$L \leftarrow \emptyset$\;
\label{exact:init}
	\For{$i = k+1$ \textbf{to} $s$}{
	\label{exact:for1}
		$L_{c} \leftarrow$ enumerate all possible vertex sets whose size is $i$\;
		\label{exact:enum}
		\For{$c \in L_{c}$ and $c$ is a connected $k$-core}{
		\label{exact:conditions}
			$L \leftarrow L \cup c$\;
			\label{exact:add}
		}
	}
	\label{exact:for2}
	\Return top-$r$ connected components of $L$\;
	\label{exact:return}
    \caption{\textsc{\prob -Exact}($G$, $k$, $r$, $s$, $f(\cdot)$)}
	\label{alg:exact}
	%\KwIn{A graph $G=(V,E,w)$, degree constraint $k$, output size constraint $r$, greedy signal $gs$, size constraint $s$, and aggregation function $f$}
\end{algorithm}

\eat{
\begin{algorithm}[!htbp]
	\caption{\textsc{\prob -Exact}}
	\label{alg:exact}
	\KwIn{A graph $G=(V,E,w)$, degree constraint $k$, output size constraint $r$, greedy signal $gs$, size constraint $s$, and aggregation function $f$}
	\KwOut{The top-$r$ $k$-influential community}
	$L \leftarrow \emptyset$\;
	\For{$i = k+1$ \textbf{to} $s$}{
		$L_{c} \leftarrow$ enumerate all possible vertex sets whose size is $i$\;
		\For{$c \in L_{c}$ and $c$ is a connected $k$-core}{
			$L \leftarrow L \cup c$\;
		}
	}
	\Return top-$r$ connected components of $L$\;
\end{algorithm}
}

\noindent \textbf{Time Complexity.}~There are $\sum_{i=k+1}^{s}{C_{n}^{i}}$ possible vertex sets. For each of them, Line~\ref{exact:conditions} takes $O(m+n)$ to verify the $k$-core and connected constraints. Thus, the time complexity of na\"ive exact algorithm is $O(\sum_{i=k+1}^{s}{C_{n}^{i}} \cdot (m+n) )$.
%\Song{I am not sure whether I could obtain better time complexity here. And whether we should describe exact algroithm}. 
To tackle this issue, we provide efficient heuristic solutions to the top-$r$ size-constrained $k$-influential community search problem, namely \textit{Local Search}.

\subsection{Local Search Algorithm}

The local search algorithm is to begin with each vertex $u$ in the graph, and then search the $s$ nearest neighbors of $u$. Following that, we determine whether $u$ could construct $k$-core in the absence of its neighbors. Algorithm~\ref{alg:local_search} summarizes the approach.
In Algorithm~\ref{alg:local_search}, $L$ is the set of disjoint connected components after the $k$-core($G$) (Line~\ref{local:L}). Then, a for-loop explores every vertex $v$ $\in V$ (Lines~\ref{local:for1}-\ref{local:for2}). The $s$-nearest neighbors of $v_{i}$ would be assigned to $V_i$ (Line~\ref{local:vi}), which would be verified in the Strategy Procedure according to various $f(\cdot)$ (Line~\ref{local:str}). \rev{Notably, if $v_i$ does not have $s$ neighbors, we would explore its $2$-hop neighbors. Thus, a BFS could be conducted in Line~\ref{local:vi}.} If the $greedy$ signal is set, the vertices in $V_i$ would be sorted in the descending order of influence value (Lines~\ref{local:greedy}-\ref{local:greedy2}). Line~\ref{local:str} would add new candidates to $L$. After that, the top $r$ influential communities of $L$ would be sorted and returned (Lines~\ref{local:sort}-\ref{local:return}).

\begin{algorithm}[!htbp]
    \caption{\textsc{LocalSearch}($G$, $k$, $r$, $greedy$, $s$, $f(\cdot)$)}
	\label{alg:local_search}
	$L$ $\leftarrow$ Compute the maximal $k$-core of $G$\;
	\label{local:L}
	\For{$i=1$ \textbf{to} $|V|$}{
	\label{local:for1}
		\If{$v_i$ is not removed}{
			$V_i \leftarrow$ the vertex set of an $s$-nearest neighbor of $v_{i}$\;
			\label{local:vi}
			\If{$greedy$ $=$ $True$}{
			\label{local:greedy}
				Sort the vertices in $V_i$ in the descending order of influence value\;
				\label{local:greedy2}

			}
			$L \leftarrow$ \textsc{Strategy}($V_i, L, gs, f$)\;
			\label{local:str}
		}
	}
	\label{local:for2}
	Sort $L$ by the influence value of each element\;
	\label{local:sort}
	\Return $L$\;
	\label{local:return}
	%\KwIn{A graph $G=(V,E,w)$, degree constraint $k$, output size constraint $r$, greedy signal $gs$, size constraint $s$, and aggregation function $f$}
\end{algorithm}
\eat{
\begin{algorithm}[!htbp]
	\caption{\textsc{Local Search}}
	\label{alg:local_search}
	\KwIn{A graph $G=(V,E,w)$, degree constraint $k$, output size constraint $r$, greedy signal $gs$, size constraint $s$, and aggregation function $f$}
	\KwOut{The top-$r$ $k$-influential community}
	Compute the maximal $k$-core of $G$\;
	\For{$i=1$ \textbf{to} $|V|$}{
		\If{$v_i$ is not removed}{
			$V_i \leftarrow$ the vertex set of an $s$-nearest neighbor of $v_{i}$\;
			\If{$gs$ $=$ \textbf{True}}{
				Sort the vertices in $V_i$ in the descending order\;
			}
			$L \leftarrow$ \textsc{Strategy}($V_i, L, gs, f$)\;
		}
	}
	Sort $L$ by the influence value of each element\;
	\Return $L$\;
\end{algorithm}
}

With different aggregation functions, we would use different strategies in Algorithm~\ref{alg:local_search} Line~\ref{local:str}. We present two strategies to illustrate how to design heuristic strategies with various aggregation functions. As for $sum$, the first procedure is used to tackle the top-$r$ size-constrained $k$-influential community problem. In $SumStrategy$,
the candidate set $C$ is set $\emptyset$ and $L_r$ is set as the $r$-th largest influential community in Lines~\ref{sums:init1} and~\ref{sums:init2}, respectively. A while-loop (Lines~\ref{sums:while1s}-\ref{sums:while1e}) selects the first $s$ vertices as a candidate community. Then, a while-loop (Lines~\ref{sums:while2s}-\ref{sums:while2e}) compute the connected $k$-core using the vertices in $C$ and update (Line~\ref{sums:update}) the result influential community $L$ if $f(C) > f(L_{r})$.

\rev{
    \begin{remark}
    The effects of local search heuristic depend on the diameter of the result community. If small, local search works wells since local search preferentially search the closer candidates.
    \end{remark}
}
\begin{procedure}[!htbp]
%	\KwIn{The vertex set $V$, influential community list $L$, greedy signal $gs$, and aggregation function $f$}
%	\KwOut{The output influential community list $L$}
	$C \leftarrow \emptyset$\;
	\label{sums:init1}
	$L_{r} \leftarrow$ the $r$-th largest influence value community\;
	\label{sums:init2}
	\While{$|C| < s$}{
	\label{sums:while1s}
		$C \leftarrow C \cup V.front$, $V \leftarrow V \setminus V.front$\;
		\Comment{$V.front$ is the first element of $V$}
	}
	\label{sums:while1e}
	\While{$|C| > k$ and $f(C) > f(L_{r})$}{
	\label{sums:while2s}
		\uIf{$C$ is $k$-core}{
			$L \leftarrow L \setminus L_{r}$, $L \leftarrow L \cup C$\;
			\label{sums:update}
			\textbf{break}\;
		}
		\Else{
			$C \leftarrow C \setminus C.last$\;\Comment{$C.last$ is the last element of $C$}
		}
	}
	\label{sums:while2e}
	\Return $L$\;
	\caption{SumStrategy(V, L, f)}
	\label{proc:sum}
\end{procedure}

\noindent \textbf{Time Complexity.} In Algorithm~\ref{alg:local_search}, we compute the maximal $k$-core of graph $G$ (Line~\ref{local:L}) in $O(n+m)$ time. Then, in Line ~\ref{local:for1}, it computes $n$ iterations. Line~\ref{local:vi} computes the $k$-nearest neighbor of $v_i$ in $O(k)$ time. If the greedy signal is $true$, then the vertices are sorted, which takes $O(s \log s)$. Moreover, the Procedure~\ref{proc:sum} requires $O(s^2)$. Thus, if utilizing the greedy strategy, the time complexity of Algorithm~\ref{alg:local_search} is $O(nk s^{3} \log s)$. Otherwise, the time complexity of Algorithm~\ref{alg:local_search} is $O(nks^{3})$.

Nonetheless, if \fx = $avg$, we will modify the procedure by using its characteristics. The Procedure~\ref{proc:avg} concludes the strategy. In Procedure~\ref{proc:avg}, we initialize candidate vertex set $C$ and influential candidate community set $L_c$ as $\emptyset$ (Line~\ref{avgs:init}). Line~\ref{avgs:lr} sets $L_r$ as the top $r$-th largest influential community. A while-loop (Lines~\ref{avgs:while1s}-\ref{avgs:while1e}) tries to add vertex $\in V$ to $C$ and test if $C$ could be a new candidate community to $L$. If the $greedy$ signal is used (Line~\ref{avgs:greedy}), we could safely prune $L_r$ from $L$ (Line~\ref{avgs:update}) and add $C$ to $L$, since the influence value of the latter vertex $\in V$ is no larger than that of the current one. Otherwise, we add $C$ to $L_c$ as a candidate (Line~\ref{avgs:nog}). Then, we choose the $L_c[i]$ with the maximum influence value among $L_c$, and add it to $L$.
The time complexity of Algorithm~\ref{alg:local_search} is the same as the analysis outlined above.

\noindent \textbf{Non-overlapping.} The objective is to compute the top-$r$ size-constrained non-overlapping $k$-influential community. As a result, we could slightly modify the Local Search Algorithm (Algorithm~\ref{alg:local_search}). In Line~\ref{local:str}, when designing strategy, we could remove each $k$-influential community once it is obtained by a local search algorithm.

\begin{procedure}[!htbp]
	%\KwIn{The vertex set $V$, influential community list $L$, greedy signal $gs$, and aggregation function $f$}
	%\KwOut{The output influential community list $L$}
	$C \leftarrow \emptyset$, $L_{c} \leftarrow \emptyset$\;
	\label{avgs:init}
	$L_{r} \leftarrow$ the $r$-th largest influence value community\;
	\label{avgs:lr}
	\While{$|C| < s$}{
	\label{avgs:while1s}
		$C \leftarrow C \cup V.front$, $V \leftarrow V \setminus V.front$\;
		\If{$|C| > k$, $f(C) > f(L_{r})$ and $C$ is $k$-core}{
			\uIf{$greedy$ $=$ $True$}{
			\label{avgs:greedy}
				$L \leftarrow L \setminus L_{r}$, $L \leftarrow L \cup C$\;
				\label{avgs:update}
				\textbf{break}\;
			}
			\Else{
				$L_{c} \leftarrow L_{c} \cup C$\;
				\label{avgs:nog}
			}
		}
	}
	\label{avgs:while1e}
	\If{$greedy$ $=$ $False$}{
		\label{avgs:nog2}
		$L_{max} \leftarrow \mathop{\arg\max}\limits_{i \in \{1, 2, \dots, |L_{c}|\}} L_{c}[i]$\;
		$L \leftarrow L \setminus L_{r}$, $L \leftarrow L \cup L_{\max}$\;
	}
	\label{avgs:nog3}
	\Return $L$\;
	\caption{AvgStrategy(V, L, greedy, f)}\label{proc:avg}
\end{procedure}

\section{experiments}\label{sec:exp}

In this section, we conduct extensive experiments on our proposed technique. All algorithms are implemented in C++. All experiments are conducted on a Linux machine with an Intel Xeon $2.70$GHz CPU and $400$GB memory.

\begin{table}[!htbp]
\centering
\begin{small}
\renewcommand{\arraystretch}{1.2}
\caption{Datasets}
\vspace{-2mm}
\label{table:Dataset Table}
\setlength{\tabcolsep}{1.8mm}{
\scalebox{0.9}{
\begin{tabular}{|l|l|l|l|l|l|} \hline
\cellcolor{gray!25}\textbf{Dataset} & \cellcolor{gray!25}\#vertices & \cellcolor{gray!25}\#edges & \cellcolor{gray!25}$d_{\max}$ & \cellcolor{gray!25}$d_{\avg}$ & \cellcolor{gray!25}$k_{\max}$  \\ \hline
\rev{DomainPub} & \rev{$22,692$} &  \rev{$60,830$}  & \rev{$125$} & \rev{$5.35$} & \rev{$31$}  \\ \hline
Email & $36,692$ & $183,831$ & $1,383$ & $10.02$ & $43$  \\ \hline
DBLP & $317,080$ & $1,049,866$ & $343$ & $6.62$ & $113$  \\ \hline
Youtube & $1,134,890$ & $2,987,624$ & $28,754$ & $5.27$ & $51$  \\ \hline
Orkut & $3,072,441$ & $117,185,083$ & $33,313$ & $76.28$ & $253$  \\ \hline
LiveJournal & $3,997,962$ & $34,681,189$ & $14,815$ & $17.35$ & $360$  \\ \hline
FriendSter & $65,608,366$ & $1,806,067,135$ & $5,214$ & $55.06$ & $304$  \\ \hline
\end{tabular}}
}
\vspace{-2mm}
\end{small}
\end{table}

\begin{figure*}[!t]
	\centering
	\vspace{-1mm}
	\begin{small}
		\begin{tabular}{cccccc}
			%			\multicolumn{6}{c}{
			%				\hspace{-6mm} \includegraphics[height=10mm]{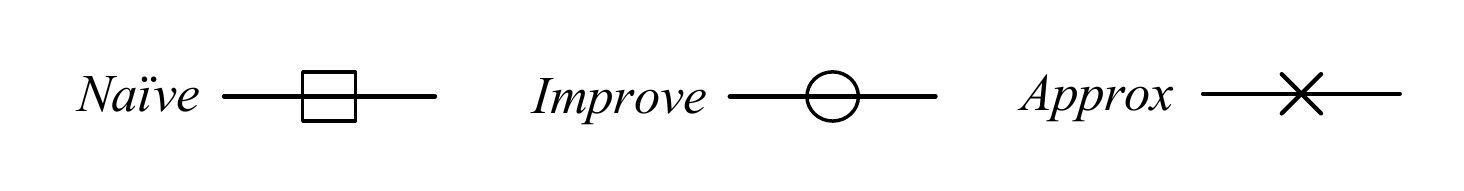}}  \\[-3mm]
			\hspace{-6mm} \includegraphics[height=22mm]{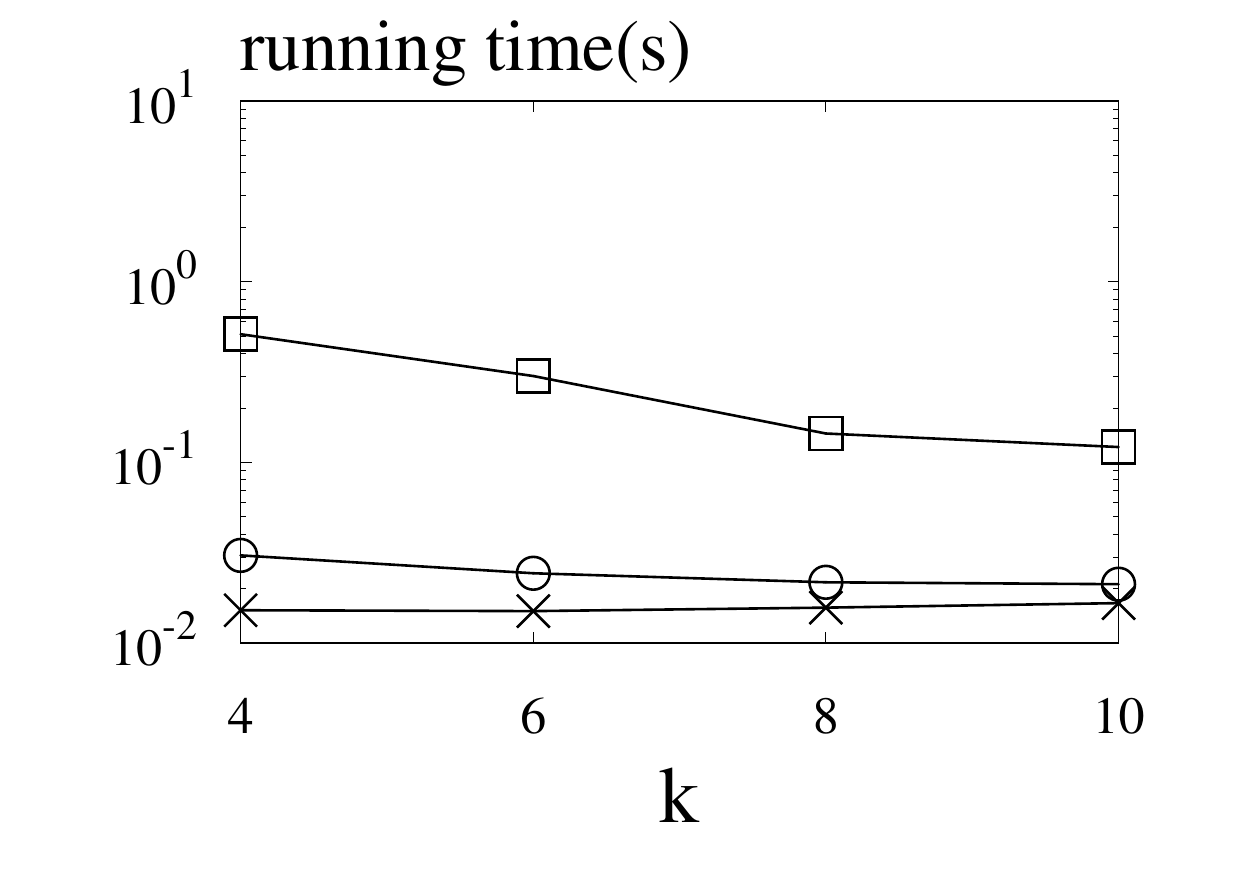} &
			\hspace{-6mm} \includegraphics[height=22mm]{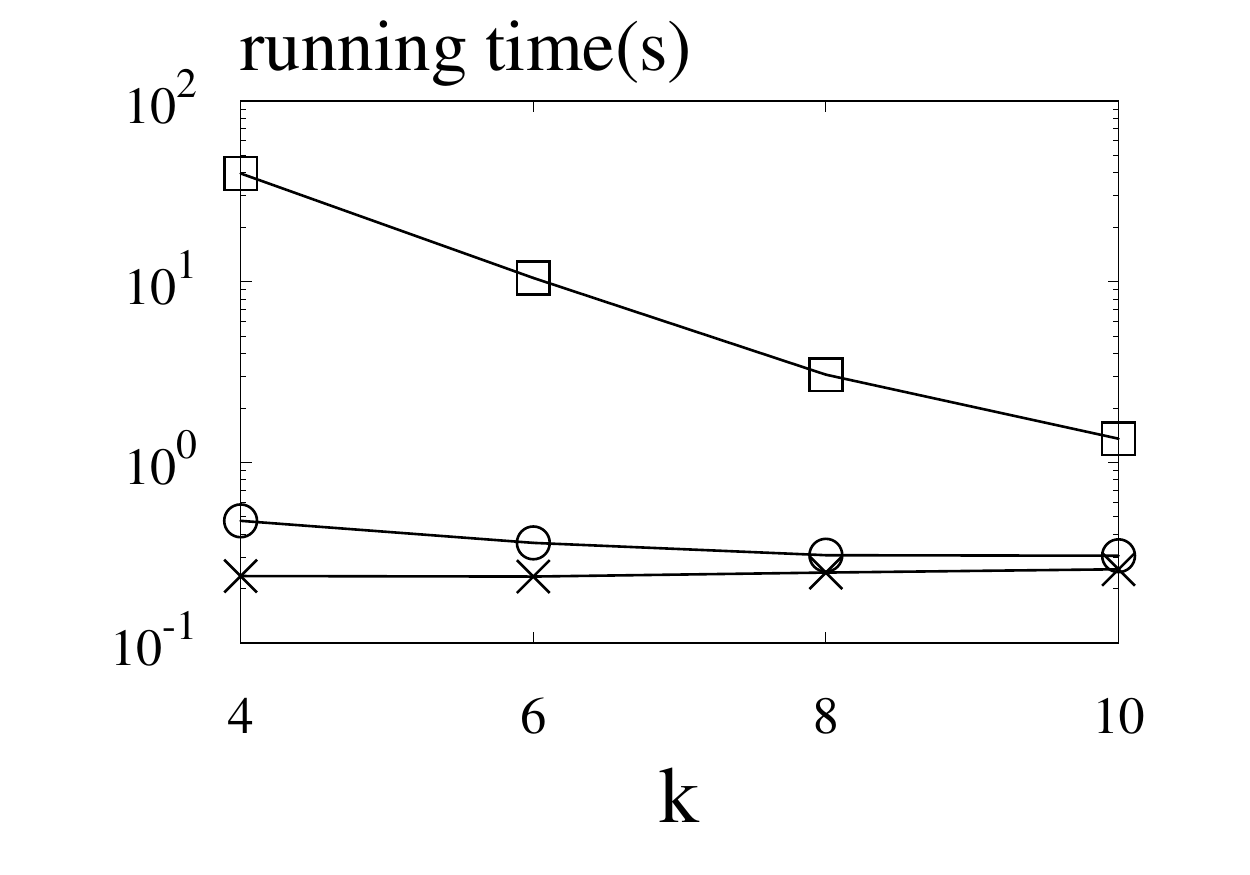} &
			\hspace{-6mm} \includegraphics[height=22mm]{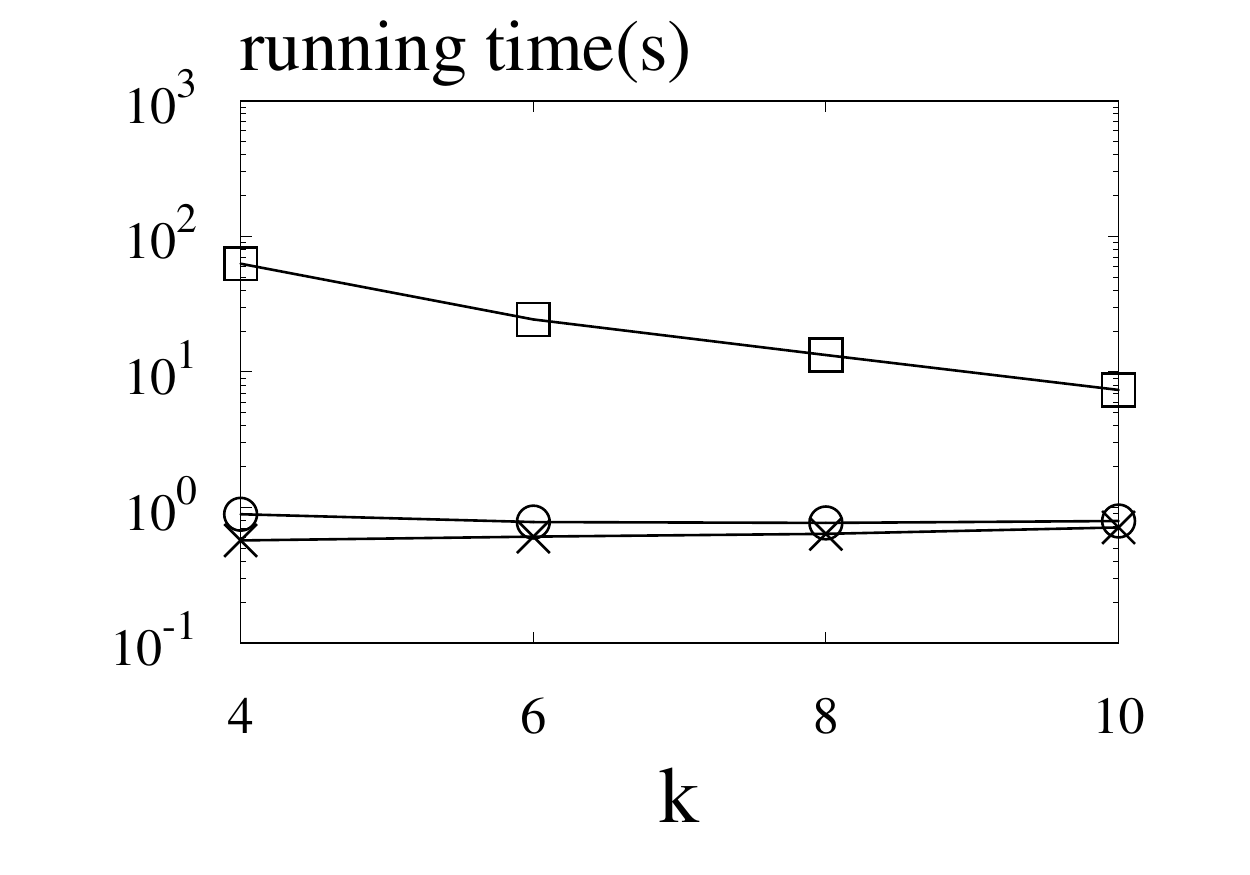} &
			\hspace{-6mm} \includegraphics[height=22mm]{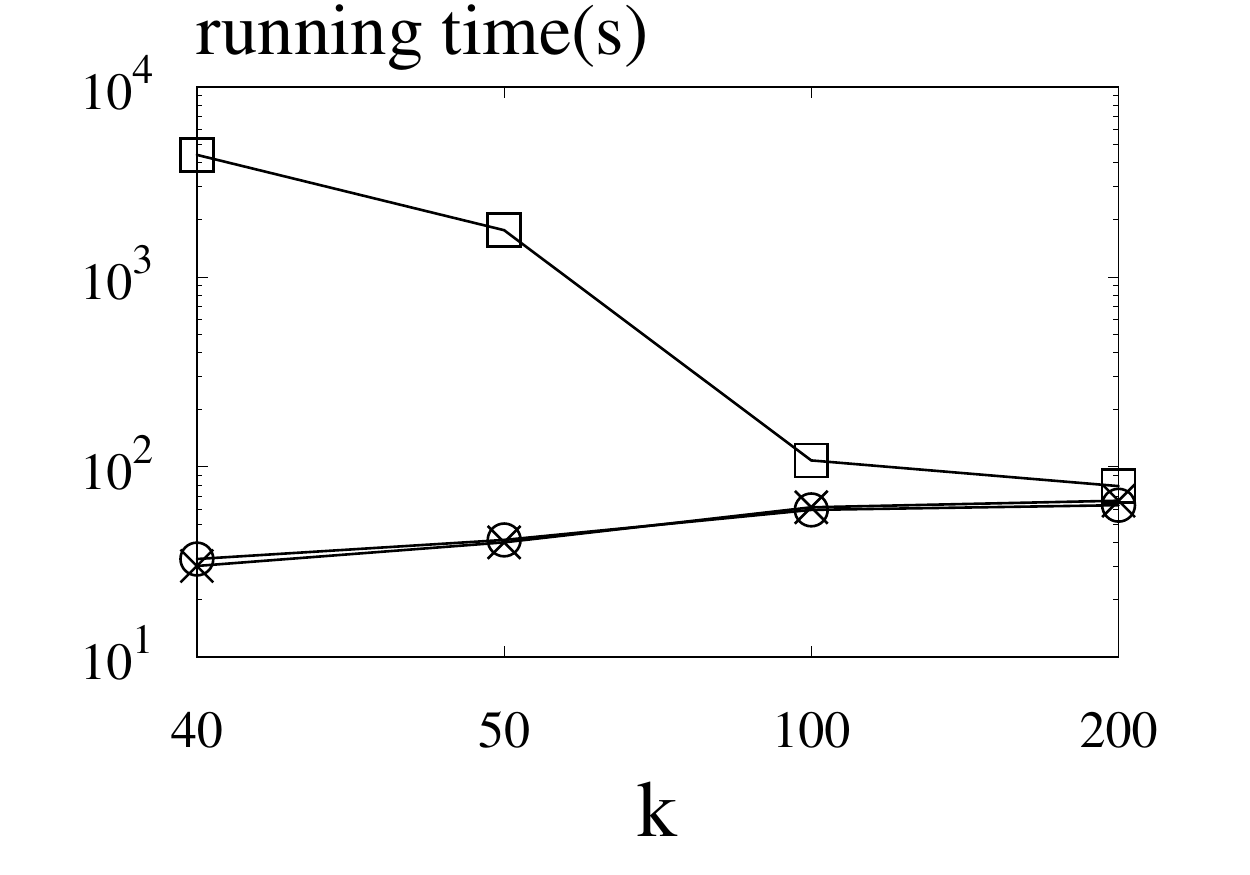} &
			\hspace{-6mm} \includegraphics[height=22mm]{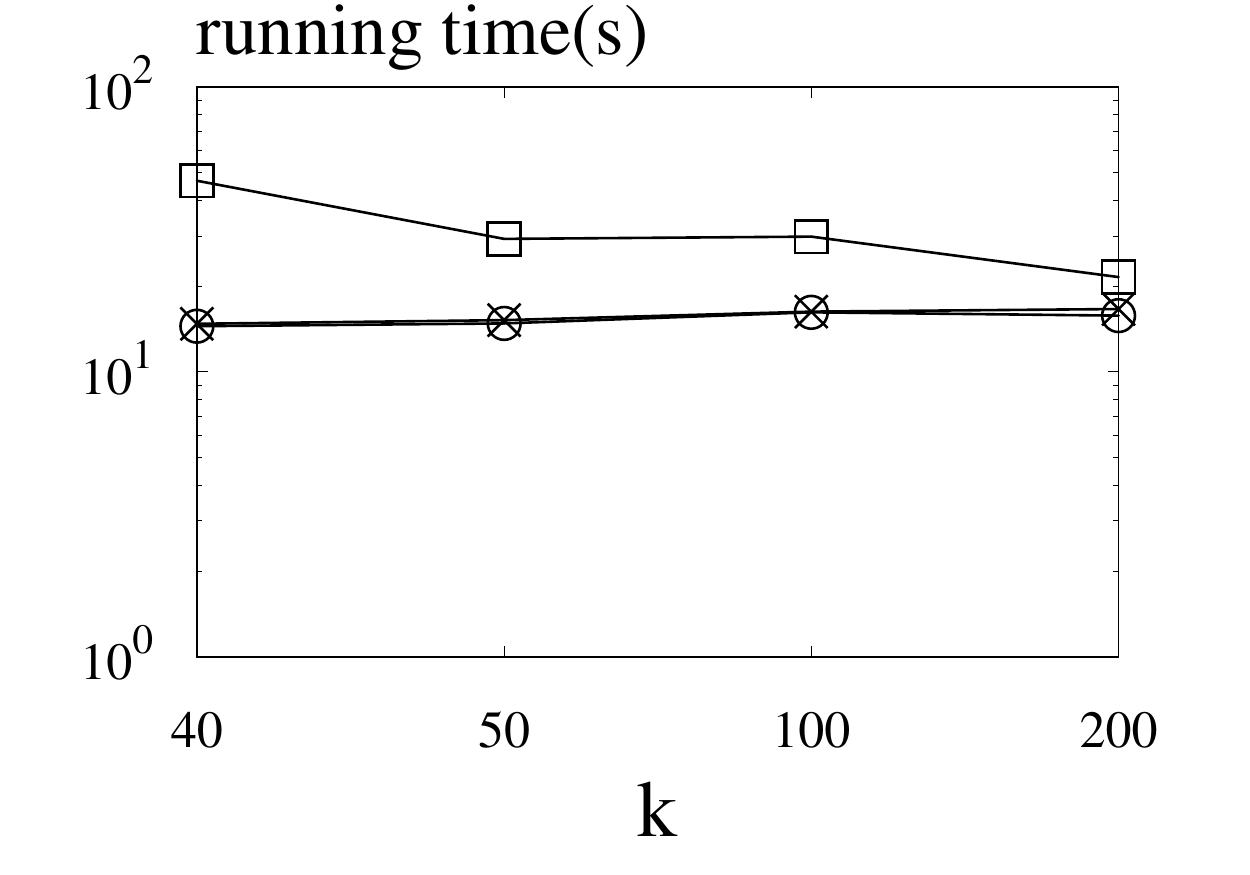} &
			\hspace{-6mm} \includegraphics[height=22mm]{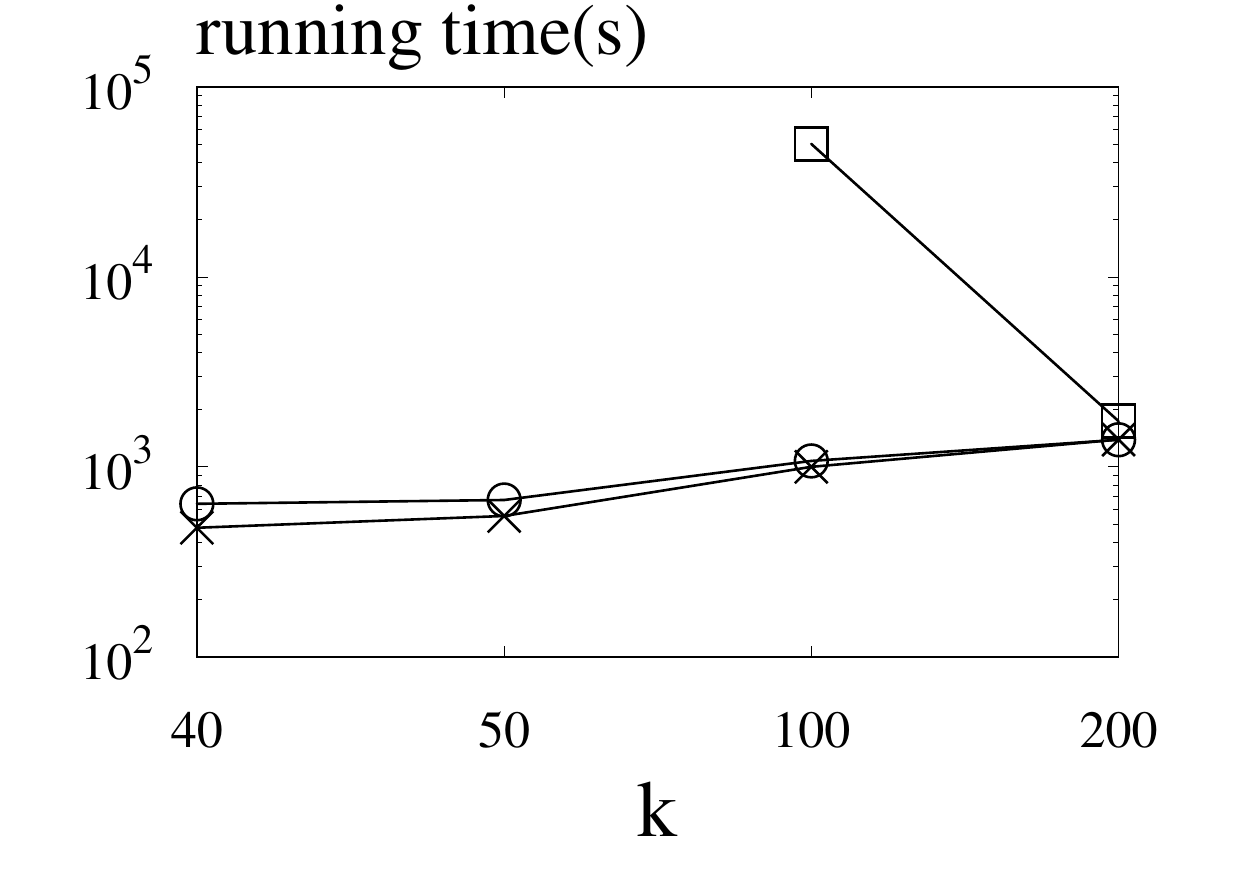}
			\\[-1mm]
			\hspace{-4mm} (a) Email &
			\hspace{-4mm} (b) DBLP &
			\hspace{-4mm} (c) Youtube &
			\hspace{-4mm} (d) Orkut &
			\hspace{-4mm} (e) Livejournal &
			\hspace{-4mm} (f) FriendSter \\[-1mm]
		\end{tabular}
		\vspace{-2mm}
		\caption{Running time vs. $k$ (sum, size-unconstrained)}
		\label{fig:time vs k sum}
		\vspace{-3mm}
	\end{small}
\end{figure*}

\begin{figure*}[!t]
	\centering
	\vspace{-1mm}
	\begin{small}
		\begin{tabular}{cccccc}
			\multicolumn{6}{c}{\hspace{-6mm} \includegraphics[height=10mm]{sum_legend.pdf}}  \\[-3mm]
			\hspace{-6mm} \includegraphics[height=22mm]{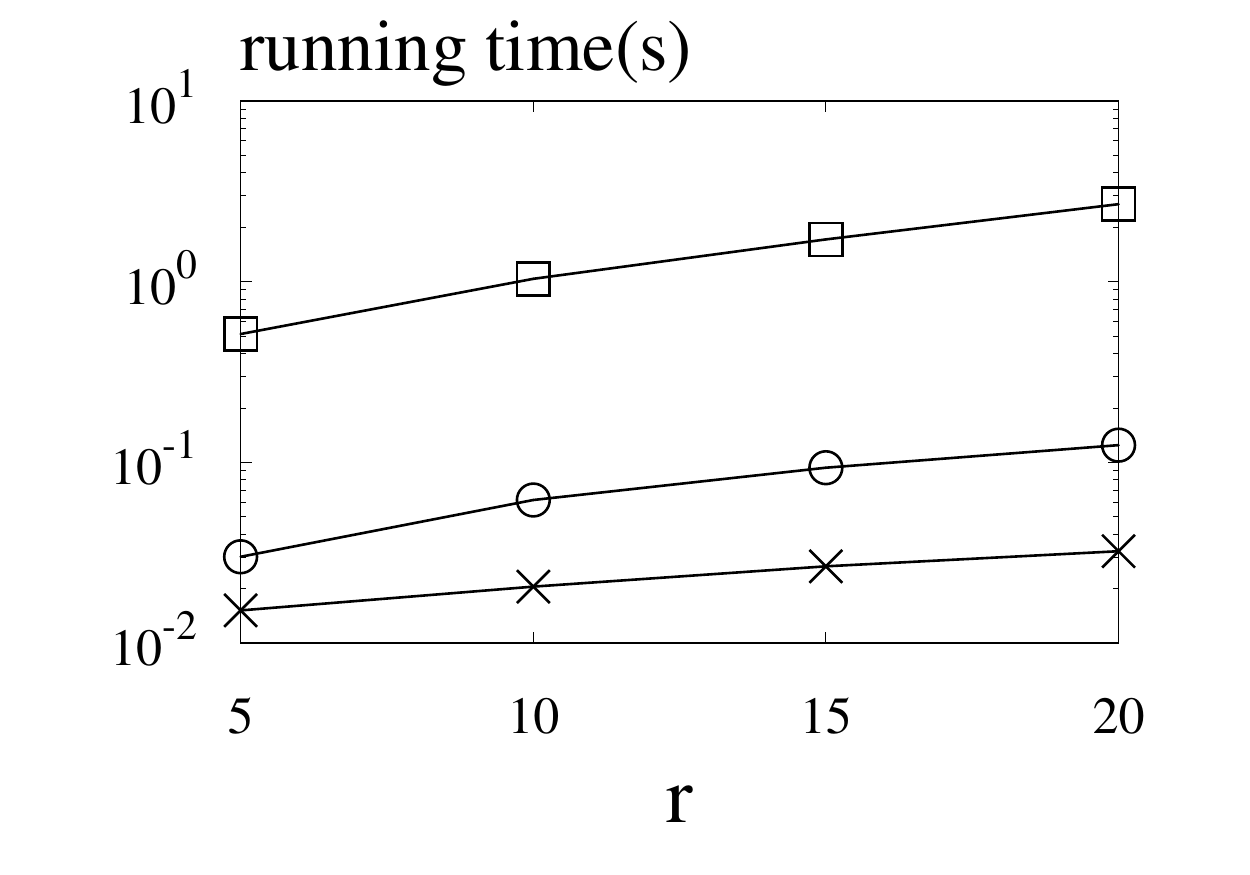} &
			\hspace{-6mm} \includegraphics[height=22mm]{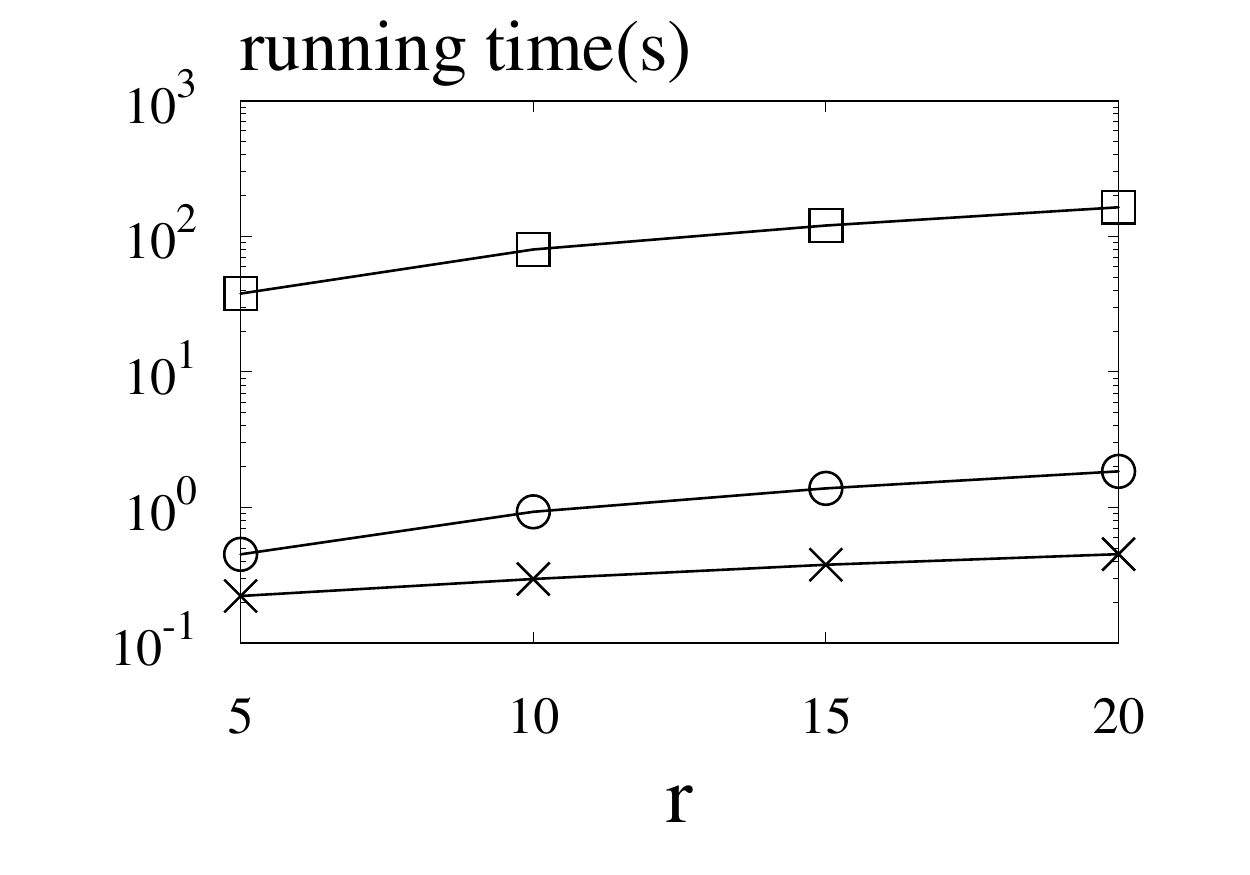} &
			\hspace{-6mm} \includegraphics[height=22mm]{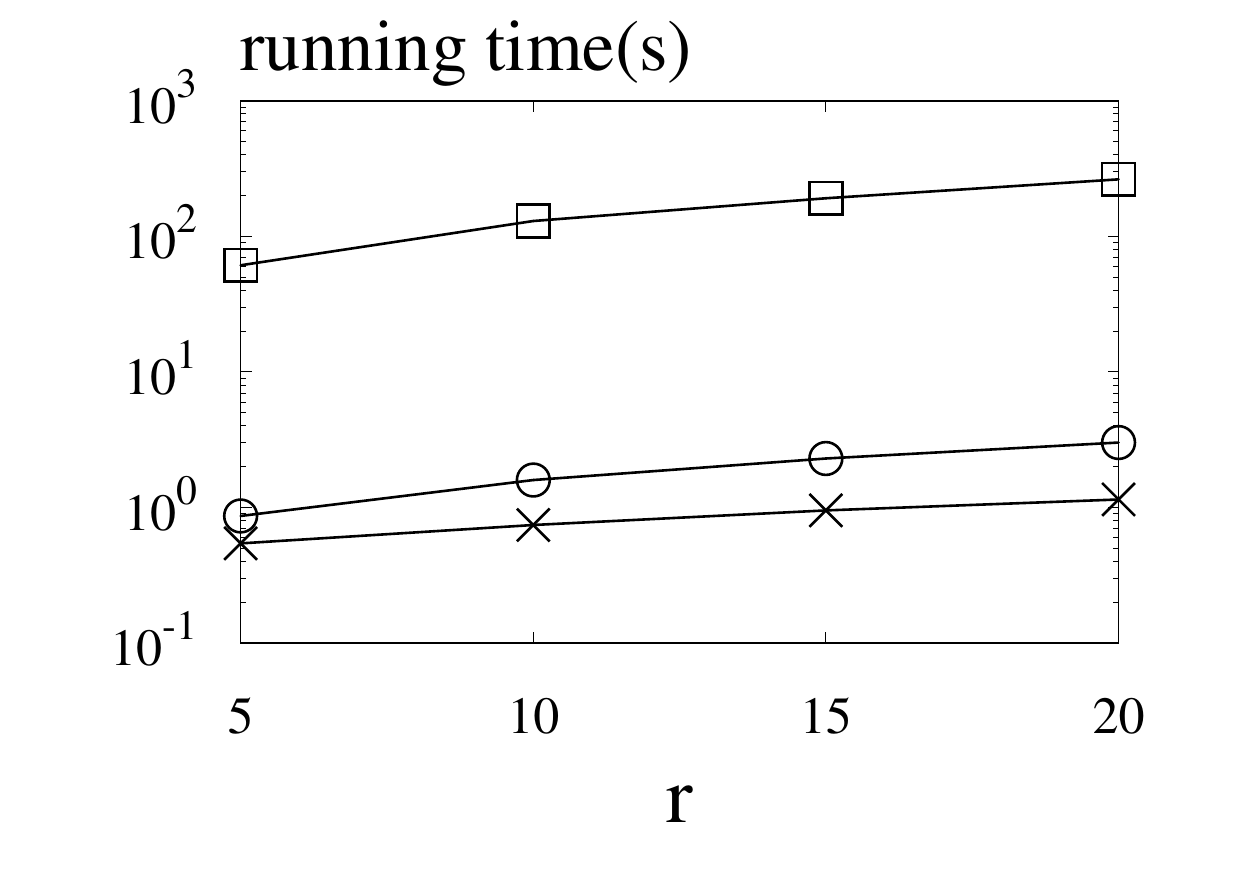} &
			\hspace{-6mm} \includegraphics[height=22mm]{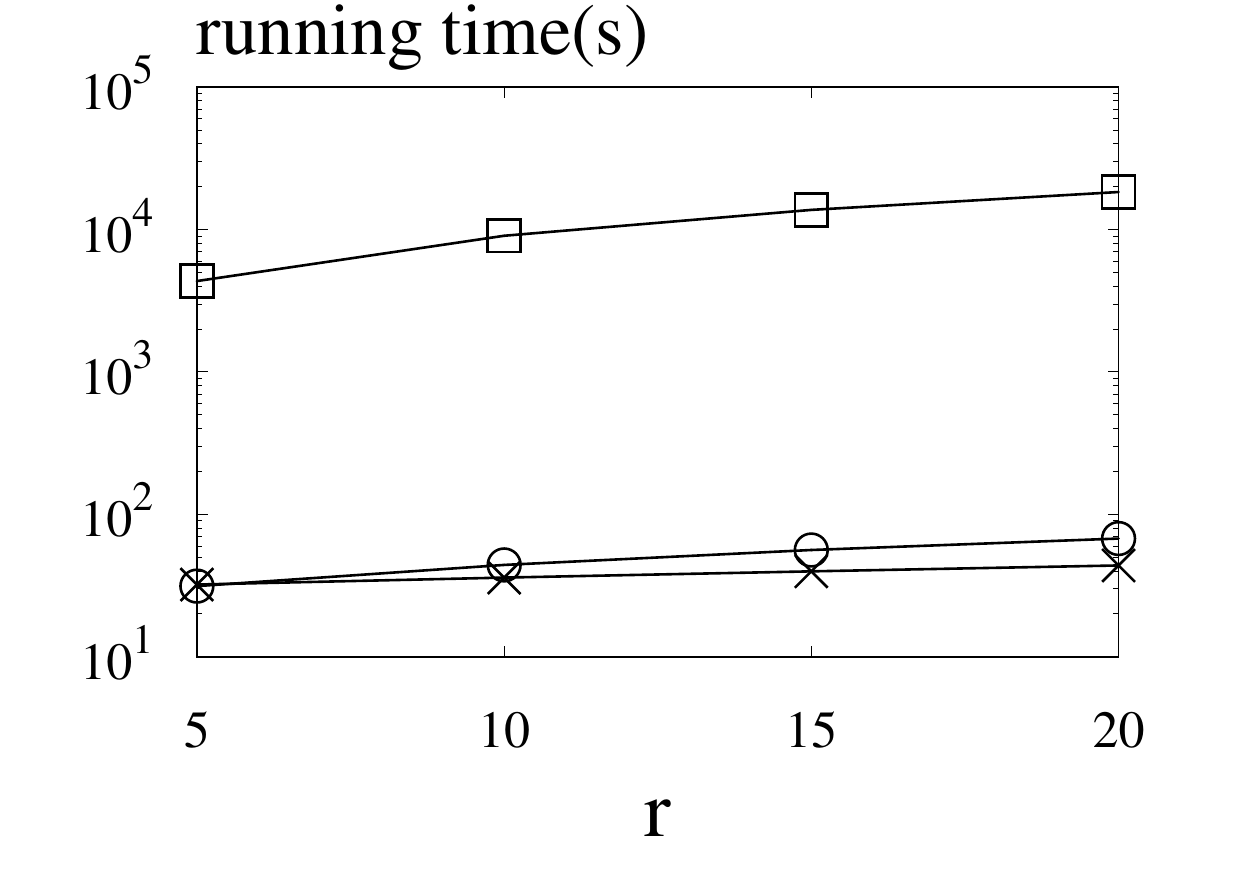} &
			\hspace{-6mm} \includegraphics[height=22mm]{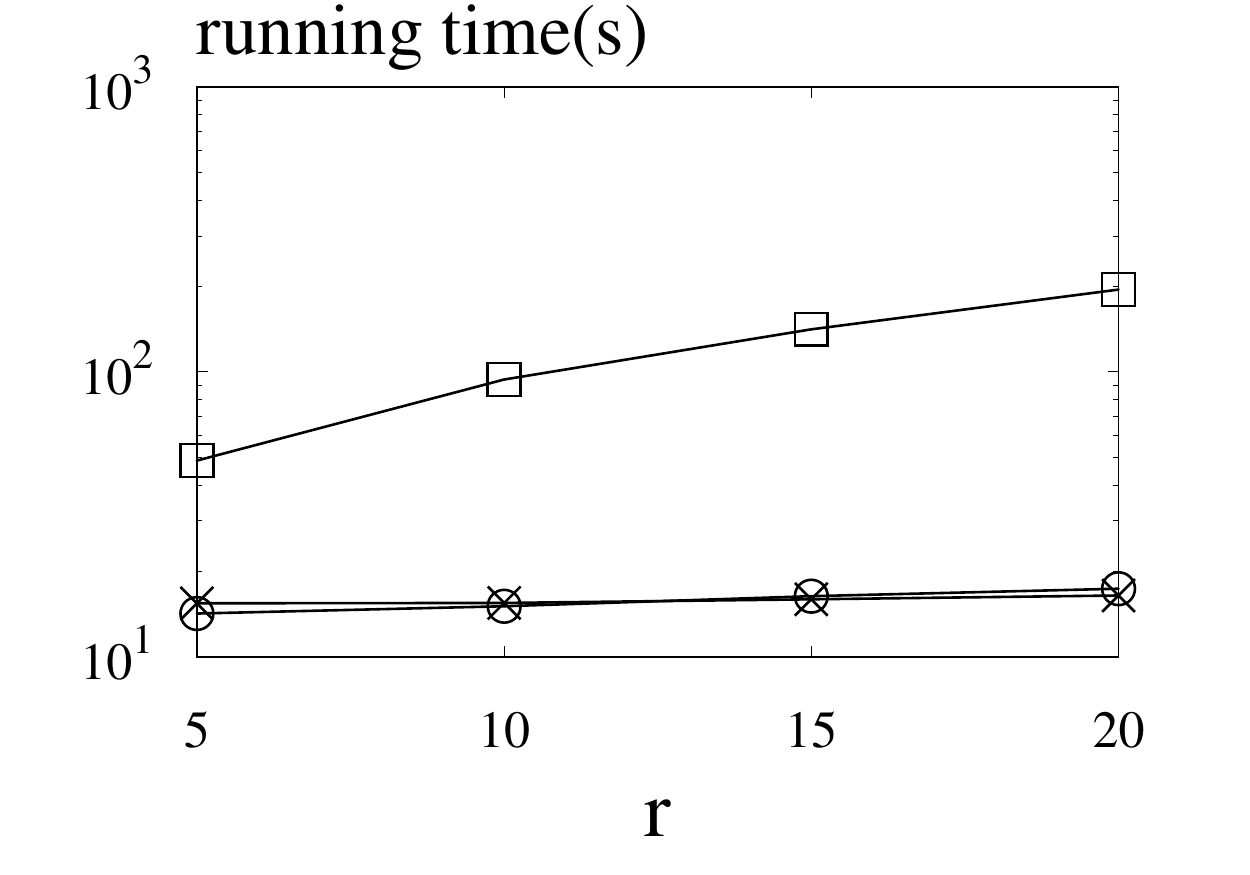} &
			\hspace{-6mm} \includegraphics[height=22mm]{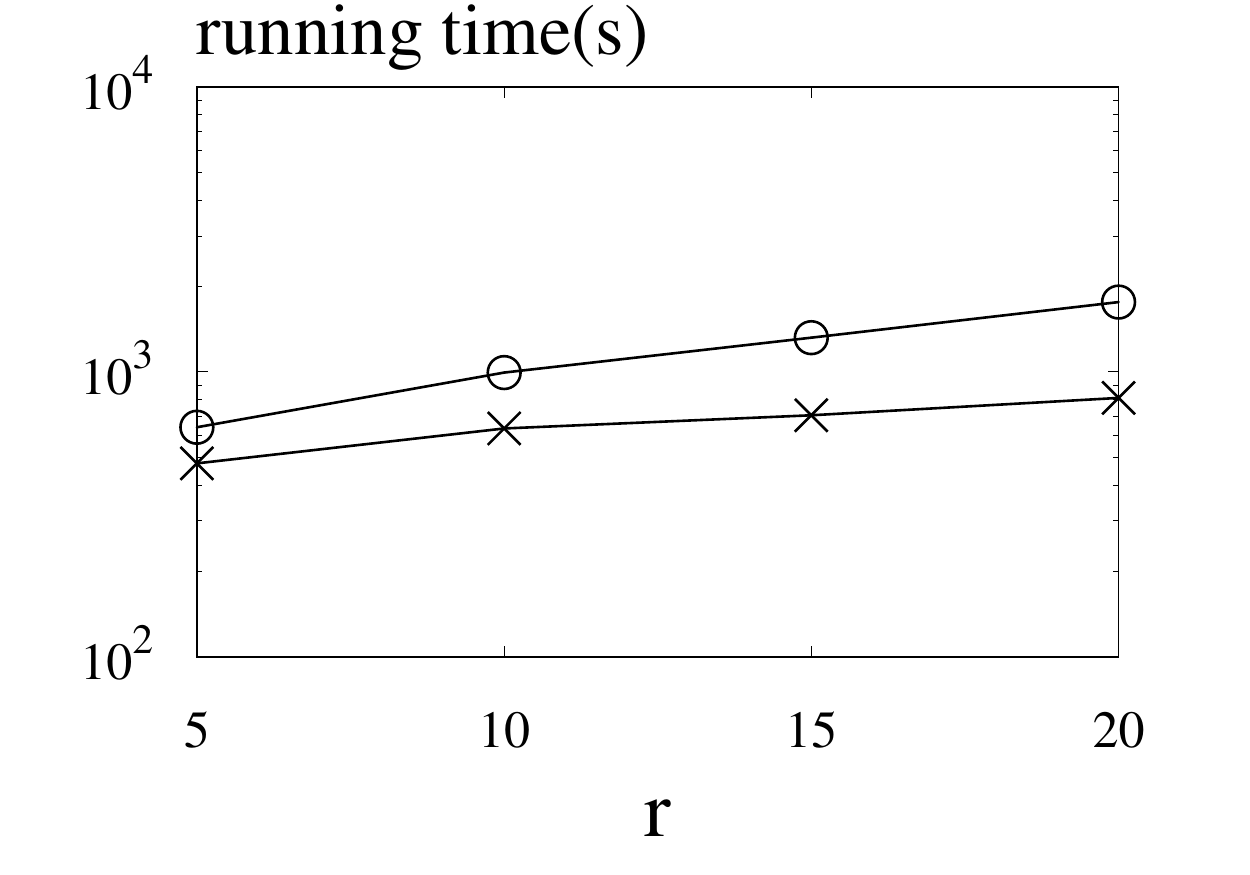}
			\\[-1mm]
			\hspace{-4mm} (a) Email &
			\hspace{-4mm} (b) DBLP &
			\hspace{-4mm} (c) Youtube &
			\hspace{-4mm} (d) Orkut &
			\hspace{-4mm} (e) Livejournal &
			\hspace{-4mm} (f) FriendSter \\[-1mm]
		\end{tabular}
		\vspace{-2mm}
		\caption{Running time vs. $r$ (sum, size-unconstrained)}
		\label{fig:time vs r sum}
		\vspace{-3mm}
	\end{small}
\end{figure*}

\noindent \textbf{Datasets.} We evaluate our experiments on $6$ real graphs: Email, DBLP, Youtube, Orkut, LiveJournal, and FriendSter. All datasets are downloaded from the Stanford Network Analysis Platform\footnote{http://snap.stanford.edu/}. The statistics of datasets are shown in Table~\ref{table:Dataset Table}. In Table~\ref{table:Dataset Table}, $k_{\max}$ indicates that graph does not contain a non-empty $(k_{\max} + 1)$-core. Moreover, the weight of vertices is the PageRank value of vertices with the damping factor being set as $0.85$. In order to demonstrate the effectiveness of the proposed algorithms, we illustrate the case study over Aminer dataset.

\noindent \textbf{Compared Algorithms.} To the best of our knowledge, there only exist algorithms for the top-$r$ $k$-influential community search problem when \fx $=$ $min$. The algorithm is simply extended to the case when the aggregation function is $max$. However, there exists no work studying the top-$r$ $k$-influential community search problem under other aggregation functions, e.g., $avg$, $sum$. Their algorithms were designed based on the nice property of $min$. Thus, if simply modify them to our problem, they would degrade to our na\"ive algorithm. We do not consider these algorithms in this paper. 

Due to the slowness of the exact algorithm, we implement and evaluate the following approaches.

\noindent \textit{Size-Unconstrained Problem}:
\begin{itemize}
%add "An Optimal and Progressive Approach to Online Search of Top-K Influential Communities" as the unconstrained baseline
			\item \textit{Na\"ive}: The solution to top-$r$ size-unconstrained $k$-influential community search problem for $sum$, which is proposed in Algorithm~\ref{alg:sum_naive}.
			\item \textit{Improve}: The method mentioned in Algorithm~\ref{alg:improve_framework}. We set $\epsilon$ equal to $0$.
			\item \textit{Approx}: The method mentioned in Algorithm~\ref{alg:improve_framework} when the $\epsilon > 0$.
\end{itemize}
\noindent \textit{Size-Constrained Problem}:
\begin{itemize}
%add "Eficient Size-Bounded Community Search over Large Networks" as baseline for the size constrained
			\item \textit{Random}: The method mentioned in Algorithm~\ref{alg:local_search} by using random strategy. The random approach does not sort the nearest neighbor set.
			\item \textit{Greedy}: The method mentioned in Algorithm~\ref{alg:local_search} by using greedy strategy. The greedy approach sorts the nearest neighbor set in decreasing order.
\end{itemize}

%\noindent \textbf{Algorithms.} To the best of our knowledge, there is no existing work studying the top-$r$ $k$-influential community search problem under various aggregation functions. In this work, we implement and evaluate the following approaches:
%
%\begin{itemize}
%	\item Naive: The solutions to top-$r$ size-unconstrained $k$-influential community search problem when the aggregation function is sum, which is proposed in Algorithm~\ref{alg:sum_naive}.
%	\item Improve: The framework mentioned in Algorithm~\ref{alg:improve_framework}. The improved technique means that we set $\epsilon$ equal to $0$.
%	\item Approx: The approach mentioned in Algorithm~\ref{alg:improve_framework} when the $\epsilon > 0$. The default setting is $\epsilon = 0.1$.
%	\item Local Search Framework: 
%\end{itemize} 

\noindent \textbf{Parameters.} There are four parameters that need to be considered: $k$, $r$, $\epsilon$, and $s$. We conduct experiments under various settings by varying four parameters. The value of $k$ varies from $4$ to $200$, $r$ varies from $\{5,10,15,20\}$, $\epsilon$ varies from $\{0.01, 0.05, 0.1, 0.2\}$, and $s$ varies from $\{5,10,15,20\}$. The default setting is $\epsilon = 0.1$, $r=5$, and $s=20$. However, since the na\"ive algorithm is inefficient, we will default to $k=4$ for small datasets, e.g., Email, DBLP, Youtube, and $k=40$ for large datasets, e.g., Orkut, LiveJournal, FriendSter. By default, we evaluate the performance of overlapping approaches. In the case study, we examine non-overlapping approaches to demonstrate the difference between our models and previous work.

\begin{figure*}[!t]
	\centering
	%\vspace{-1mm}
	\begin{small}
		\begin{tabular}{cccccc}
			%			\multicolumn{6}{c}{\hspace{-6mm} \includegraphics[height=10mm]{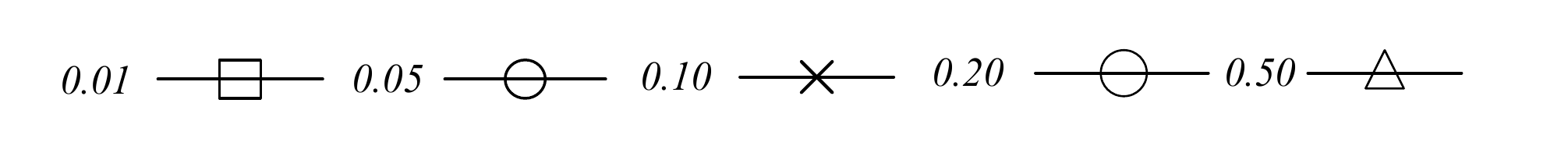}}  \\[-3mm]
			\hspace{-6mm} \includegraphics[height=22mm]{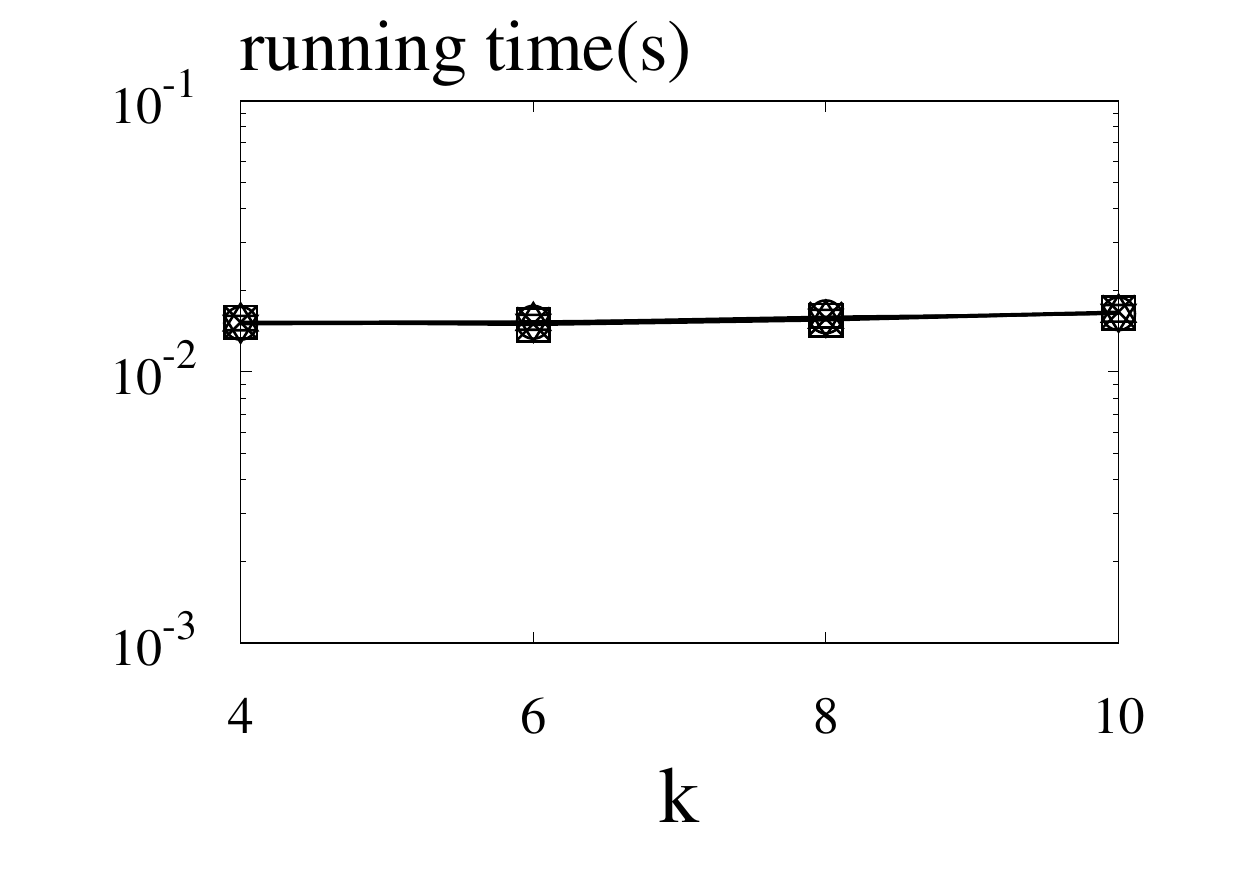} &
			\hspace{-6mm} \includegraphics[height=22mm]{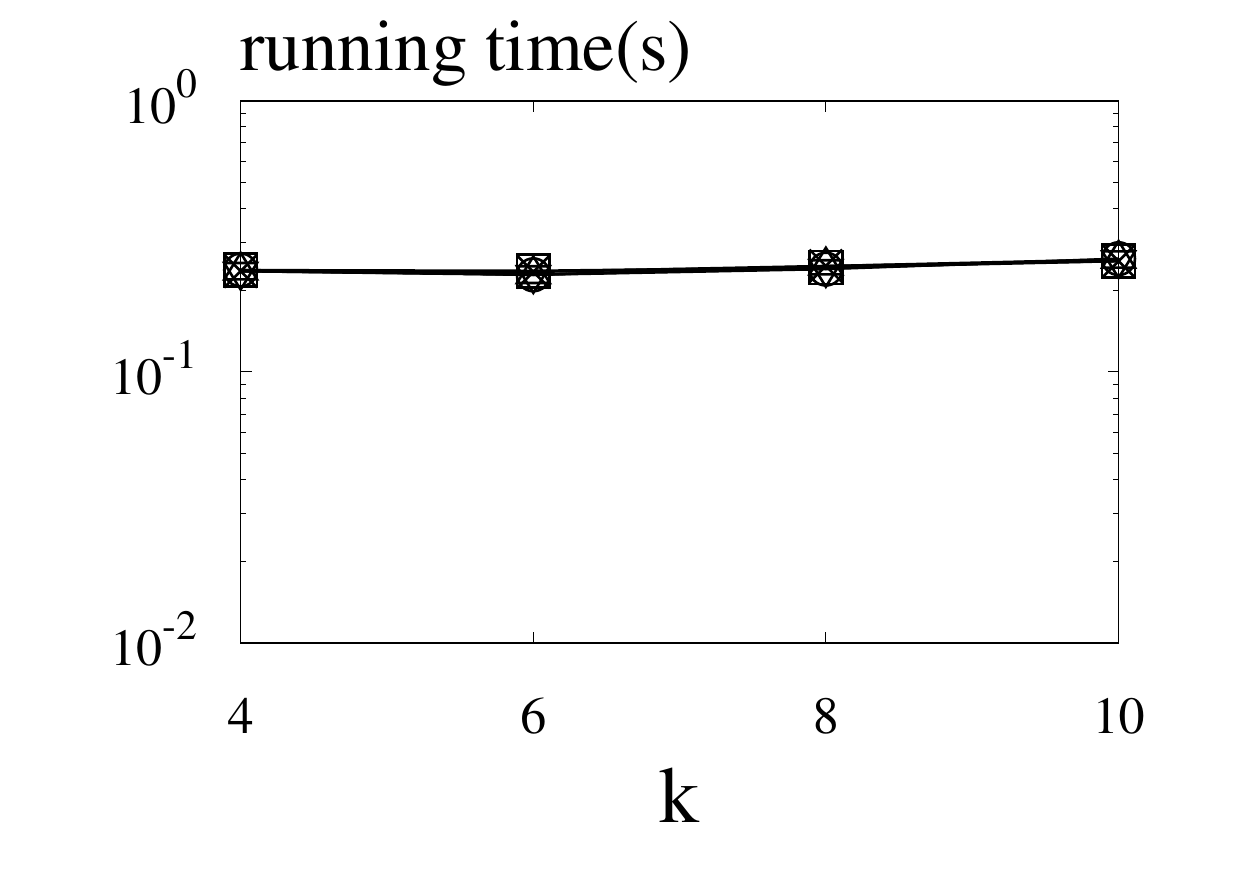} &
			\hspace{-6mm} \includegraphics[height=22mm]{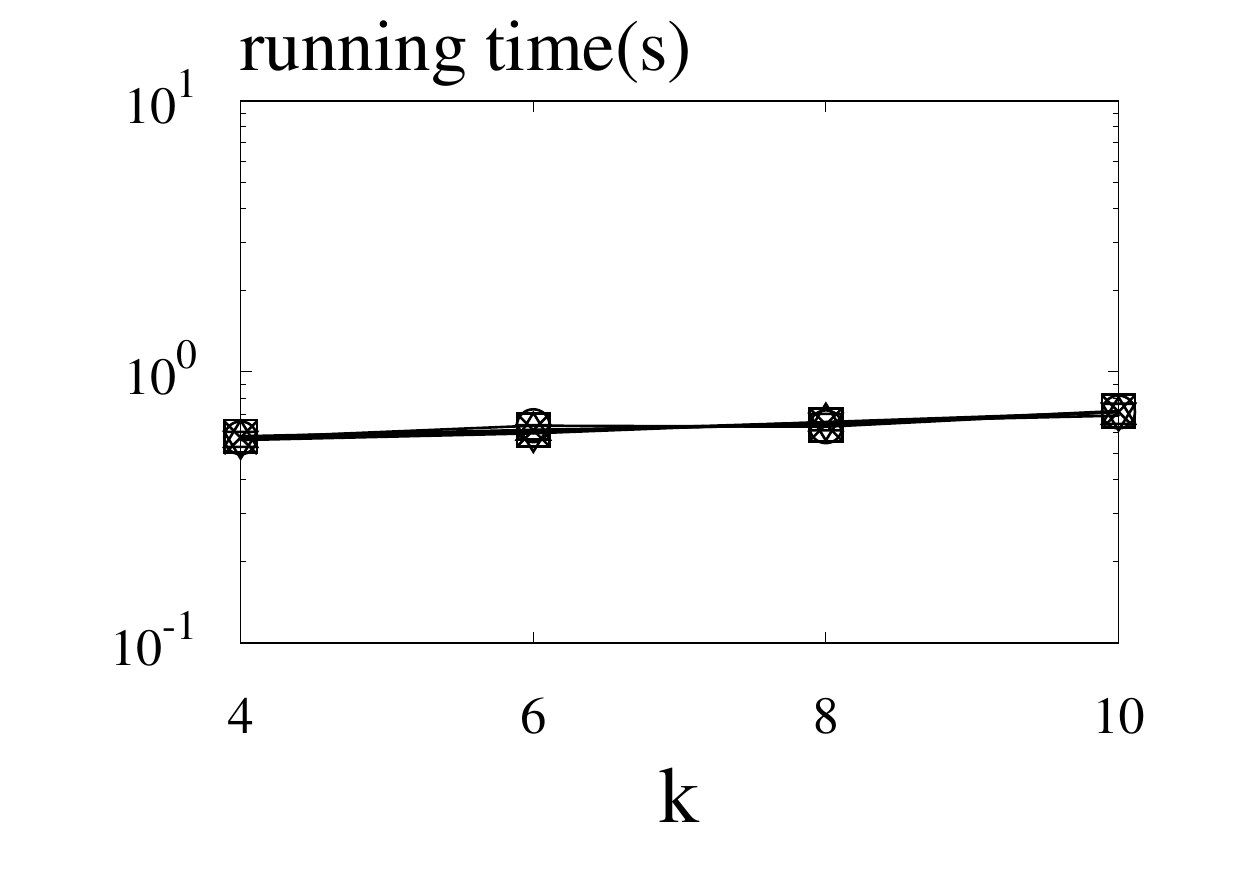} &
			\hspace{-6mm} \includegraphics[height=22mm]{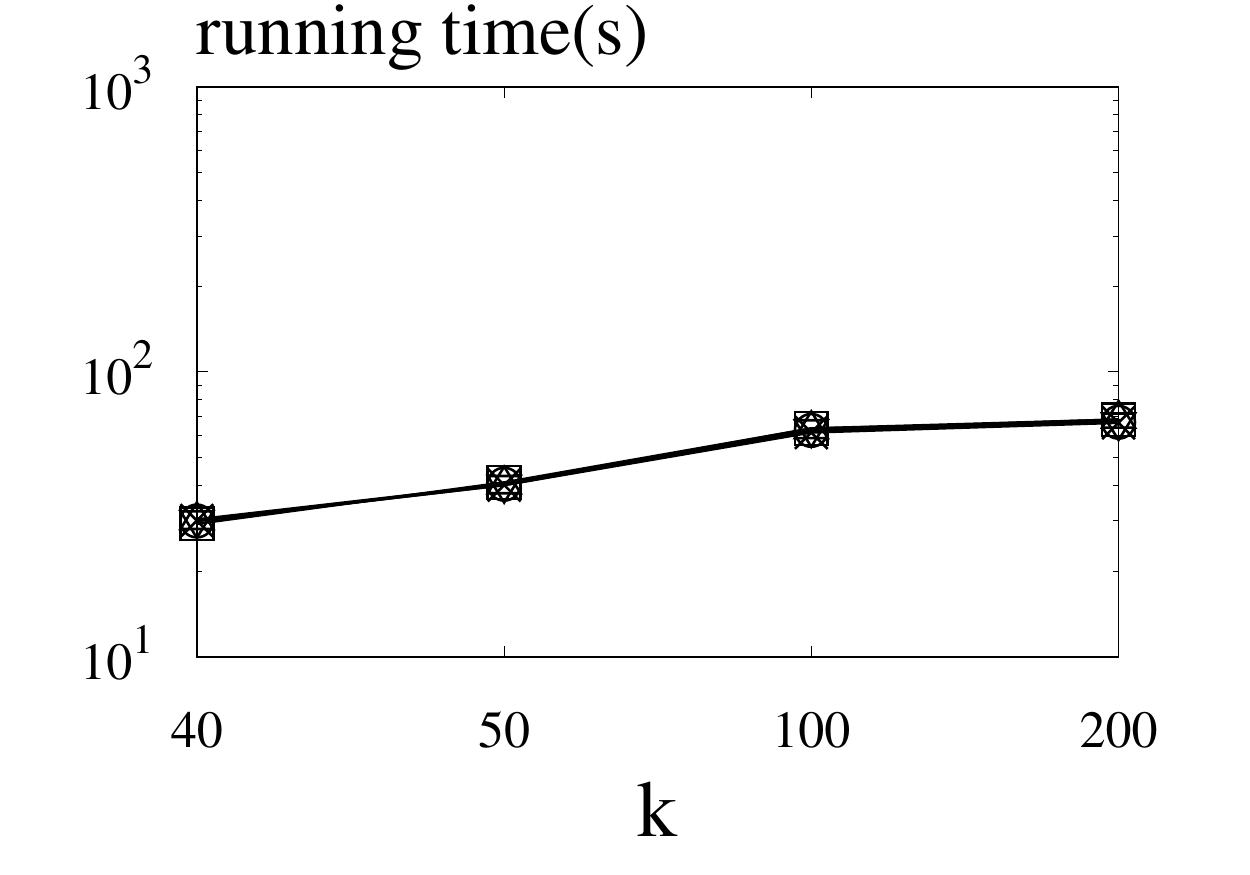} &
			\hspace{-6mm} \includegraphics[height=22mm]{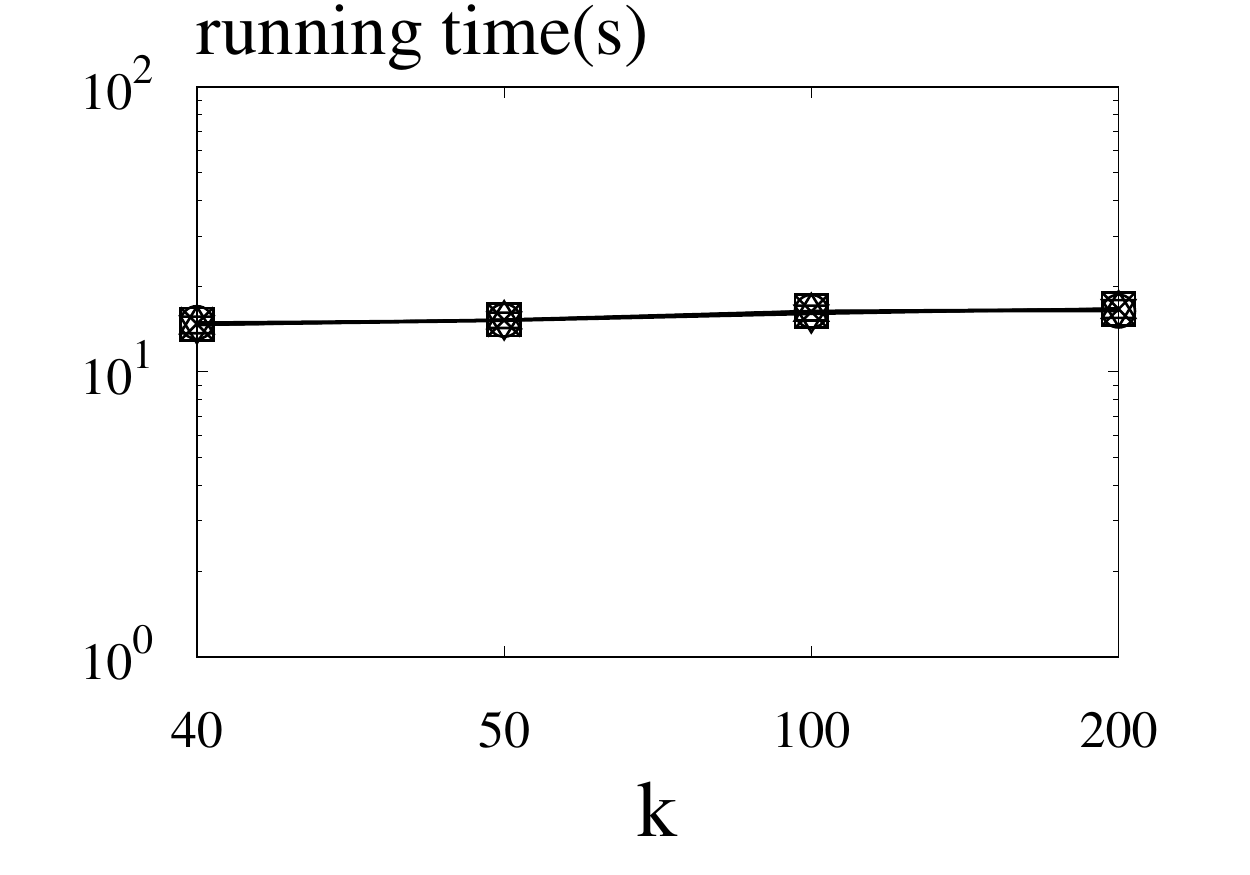} &
			\hspace{-6mm} \includegraphics[height=22mm]{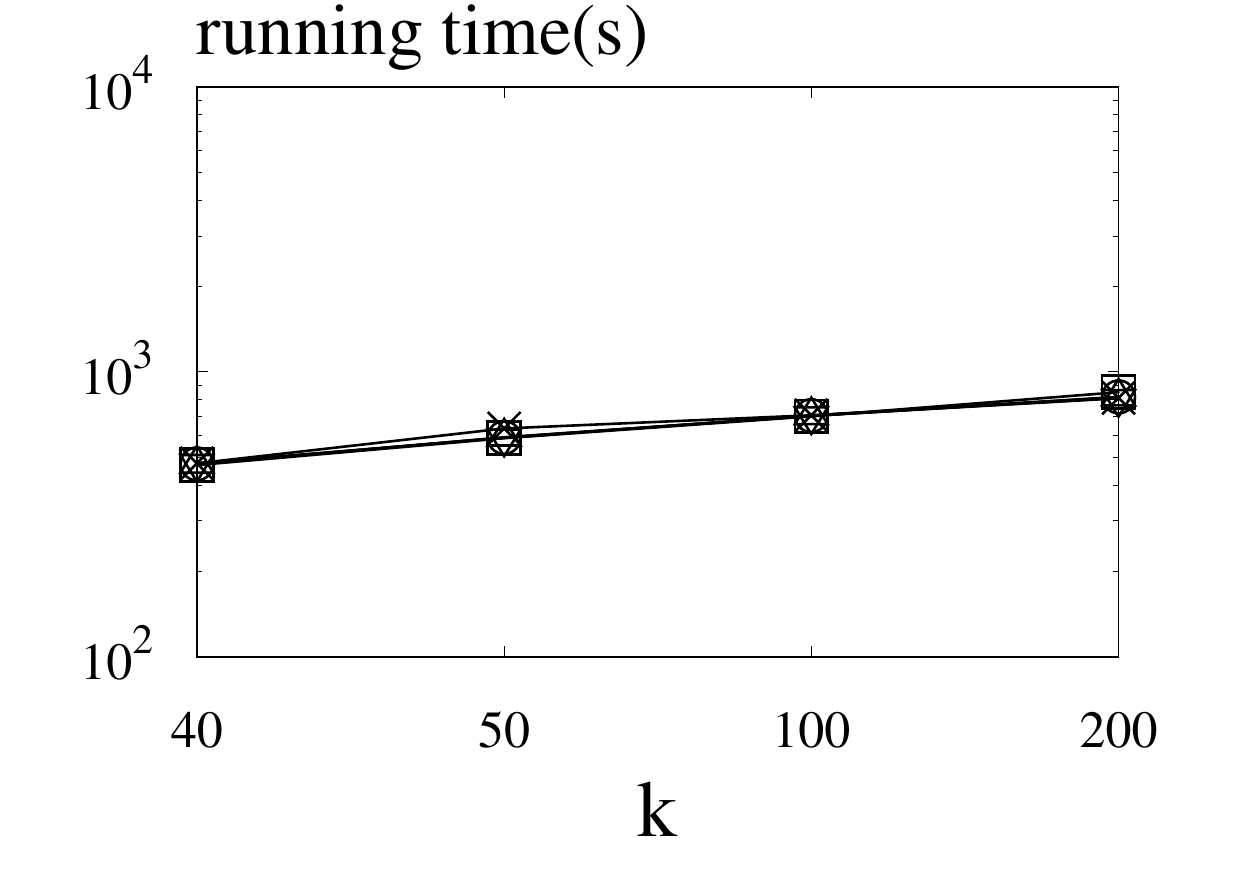}
			\\[-1mm]
			\hspace{-4mm} (a) Email &
			\hspace{-4mm} (b) DBLP &
			\hspace{-4mm} (c) Youtube &
			\hspace{-4mm} (d) Orkut &
			\hspace{-4mm} (e) Livejournal &
			\hspace{-4mm} (f) FriendSter \\[-1mm]
		\end{tabular}
		\vspace{-2mm}
		\caption{Running time vs. $k$ ($\epsilon$, sum, size-unconstrained)}
		\label{fig:time vs k eps sum}
		\vspace{-5mm}
	\end{small}
\end{figure*}

\begin{figure*}[!t]
	\centering
	\vspace{-1mm}
	\begin{small}
		\begin{tabular}{cccccc}
			\multicolumn{6}{c}{\hspace{-6mm} \includegraphics[height=10mm]{approx_legend.pdf}}  \\[-3mm]
			\hspace{-6mm} \includegraphics[height=22mm]{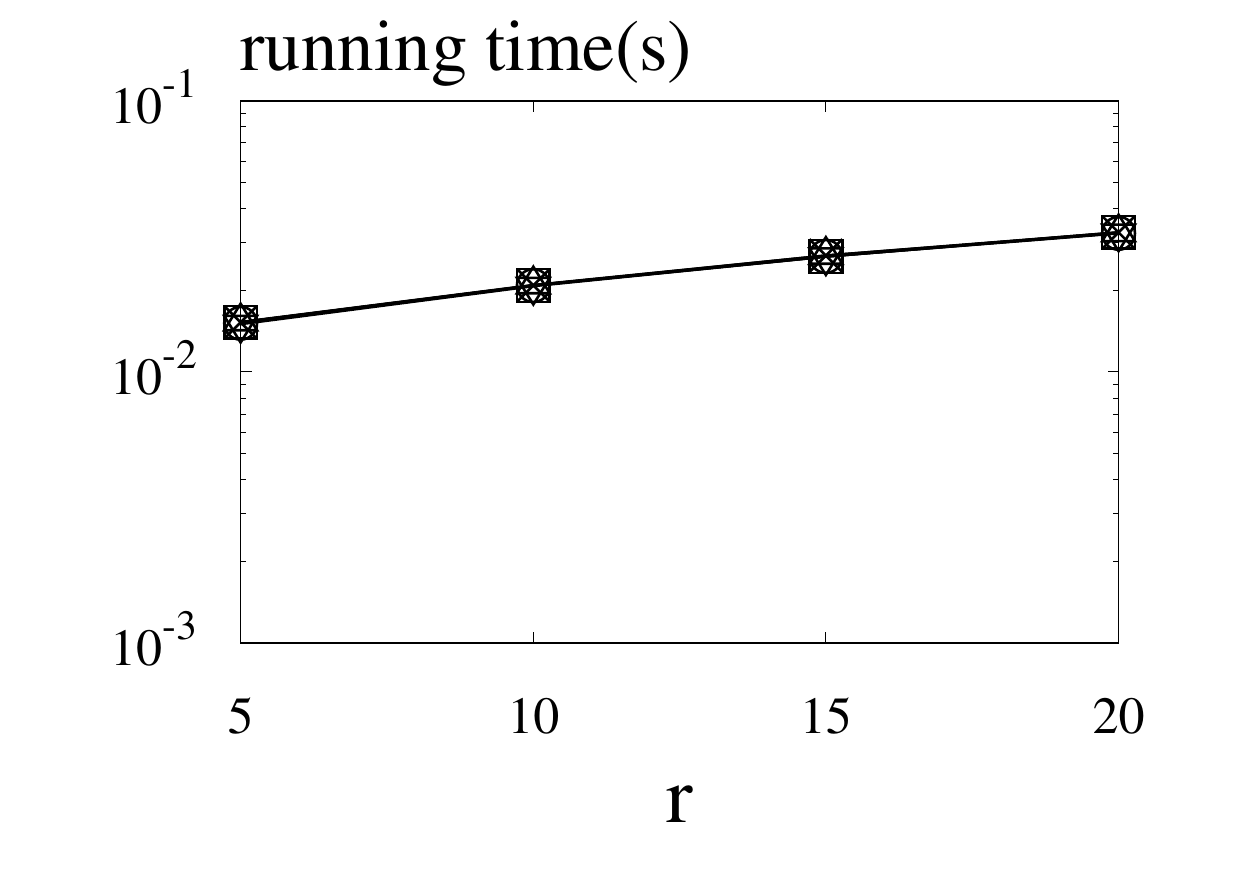} &
			\hspace{-6mm} \includegraphics[height=22mm]{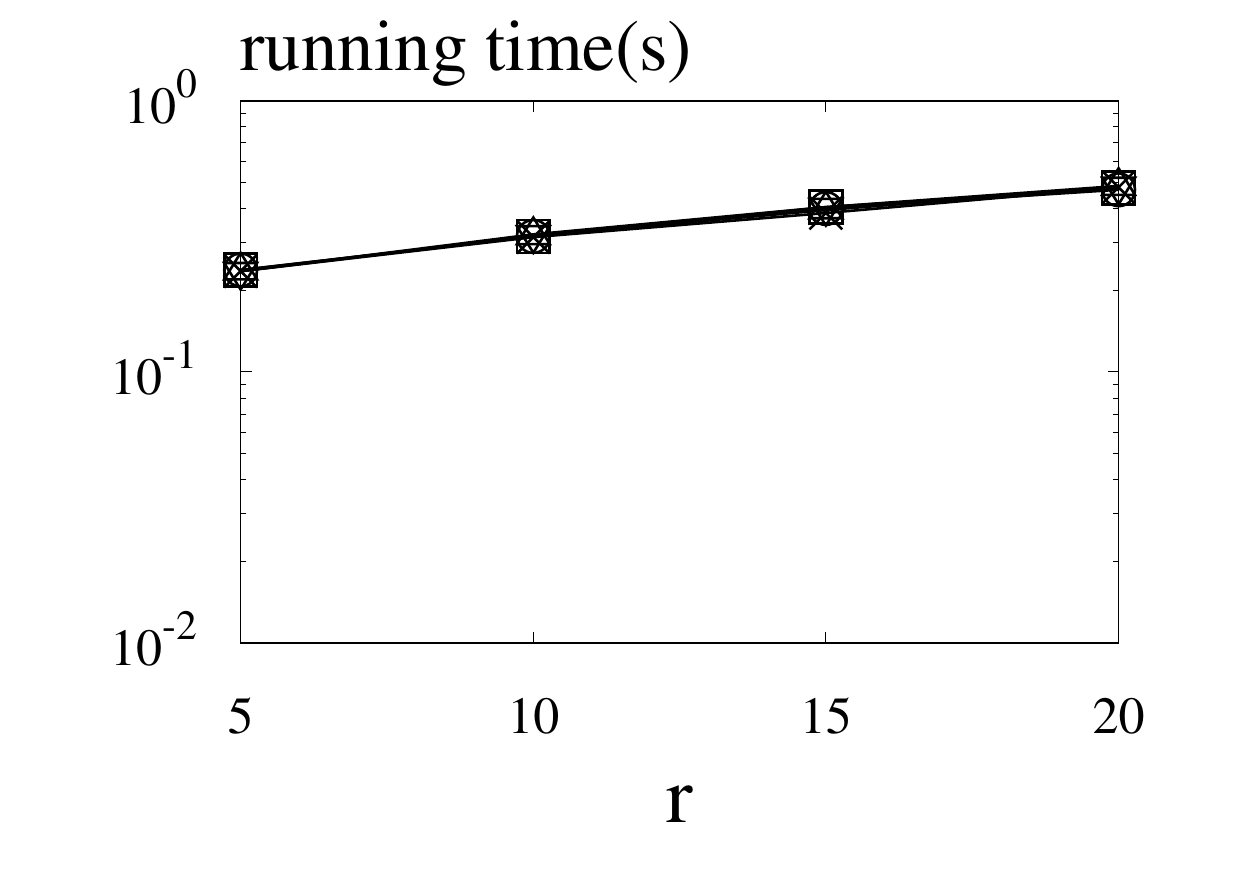} &
			\hspace{-6mm} \includegraphics[height=22mm]{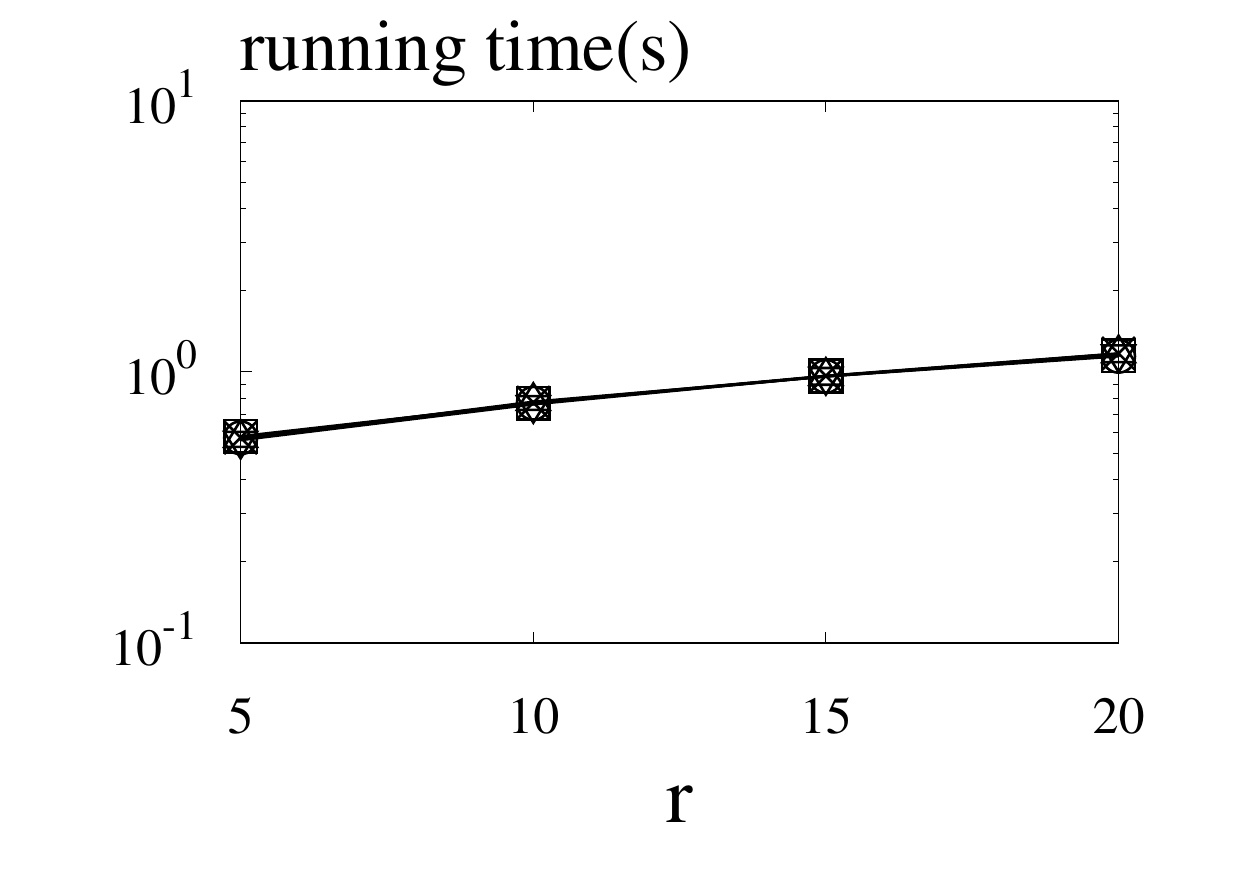} &
			\hspace{-6mm} \includegraphics[height=22mm]{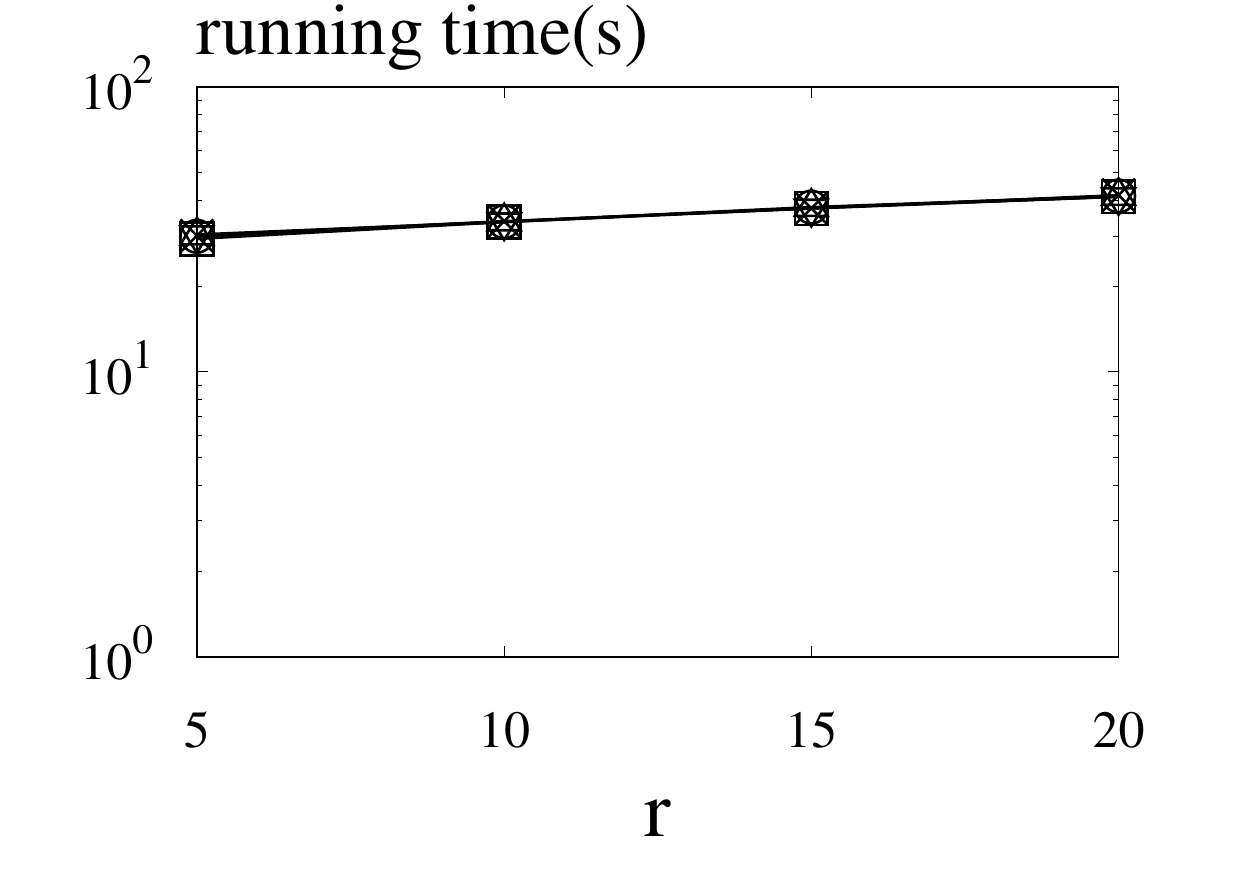} &
			\hspace{-6mm} \includegraphics[height=22mm]{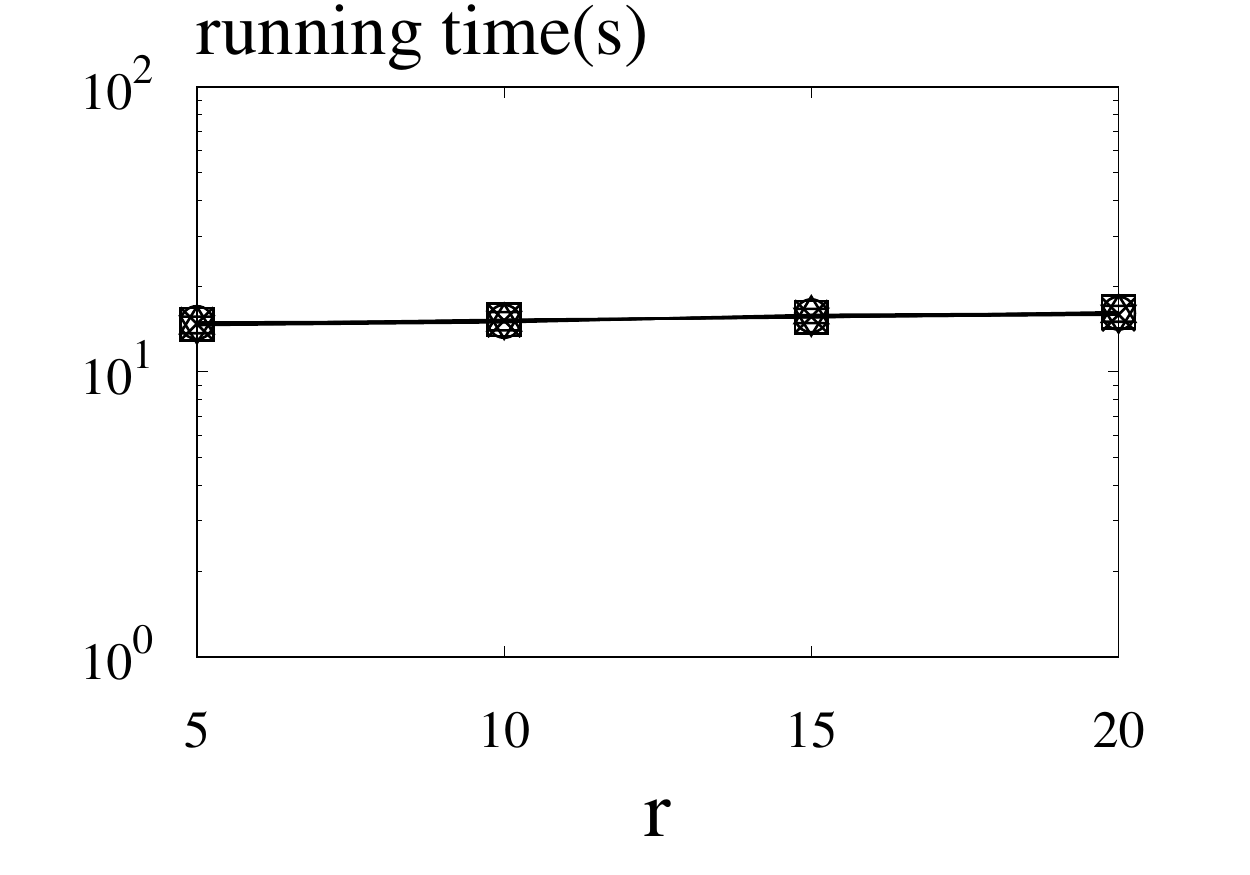} &
			\hspace{-6mm} \includegraphics[height=22mm]{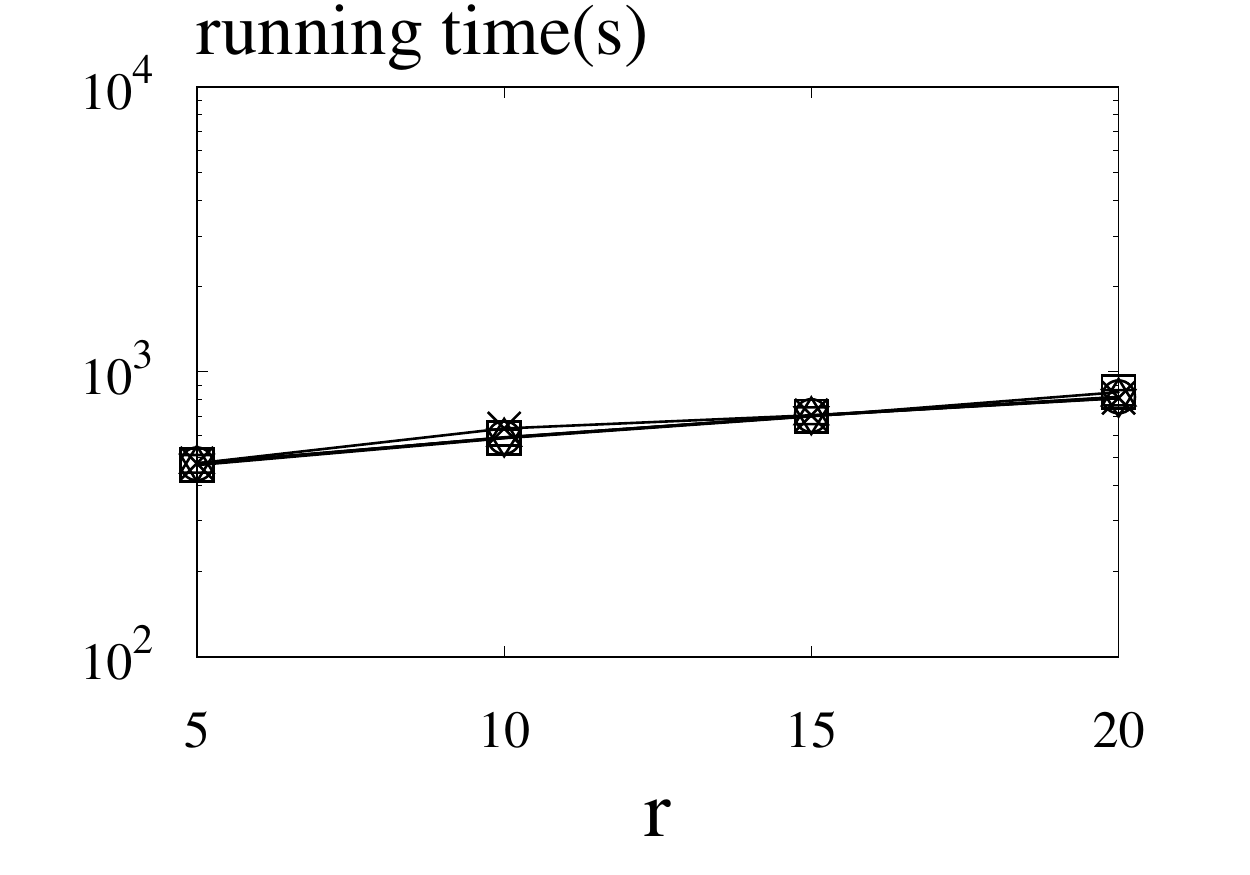}
			\\[-1mm]
			\hspace{-4mm} (a) Email &
			\hspace{-4mm} (b) DBLP &
			\hspace{-4mm} (c) Youtube &
			\hspace{-4mm} (d) Orkut &
			\hspace{-4mm} (e) Livejournal &
			\hspace{-4mm} (f) FriendSter \\[-1mm]
		\end{tabular}
		\vspace{-2mm}
		\caption{Running time vs. $r$ ($\epsilon$, sum, size-unconstrained)}
		\label{fig:time vs r eps sum}
		\vspace{-3mm}
	\end{small}
\end{figure*}
\subsection{Approaches to Size-Unconstrained Problems}

\noindent \textbf{Exp-I: Effect of $k$.}~Figure~\ref{fig:time vs k sum} evaluates the performance of three methods by fixing $r=5$, and varying $k$ (missing point indicates the algorithm cannot terminate in one day). What is striking in this figure is that as $k$ increases, the running time of the na\"ive algorithm decreases.

The main reason is that because the graph size decreases after pruning, and the running time of the core-decomposition algorithm is a tiny fraction of the overall time of the na\"ive algorithm. Nevertheless, the running time of the improved algorithm and approximated algorithms both grow as $k$ increases, and the running time of two algorithms is comparable. The reason for this is that the core-decomposition consumes a significant part of running time.

\noindent \textbf{Exp-II: Effect of $r$.} The effect of $r$ is evaluated in this experiment. As demonstrated in Figure~\ref{fig:time vs r sum}, when $r$ increases, the running time of algorithms increases. The reason is that it needs to output more $k$-influential communities and the iterations of the algorithm increases.

\noindent \textbf{Exp-III: Impact of $\epsilon$.} We compare the approximated algorithms to the top-$r$ $k$-influential community search problem under different aggregation functions. $\epsilon$ ranges from $\{0.01, 0.05, 0.1, 0.2, 0.5\}$. The result is demonstrated in Figure~\ref{fig:time vs k eps sum} and Figure~\ref{fig:time vs r eps sum}. What stands out in these figures is the running time of approximated algorithm remains almost unaltered by varying $\epsilon$. As a result, the approximated algorithm is insensitive to $\epsilon$. The reason is that the top-$r$ $k$-influential community is always computed within $r$ iterations at the beginning.
\subsection{Approaches to Size-constrained Problems}
\begin{figure*}[!t]
	\centering
	\vspace{-1mm}
	\begin{small}
		\begin{tabular}{cccccc}
%			\multicolumn{6}{c}{\hspace{-6mm} \includegraphics[height=10mm]{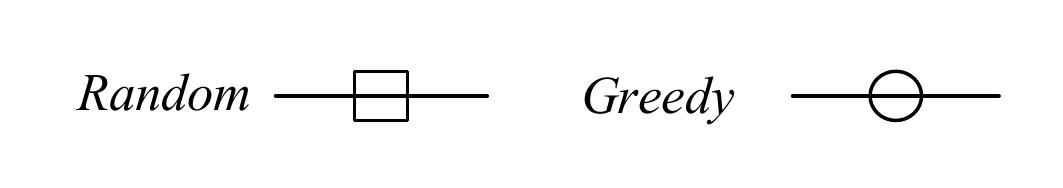}}  \\[-3mm]
			\hspace{-6mm} \includegraphics[height=22mm]{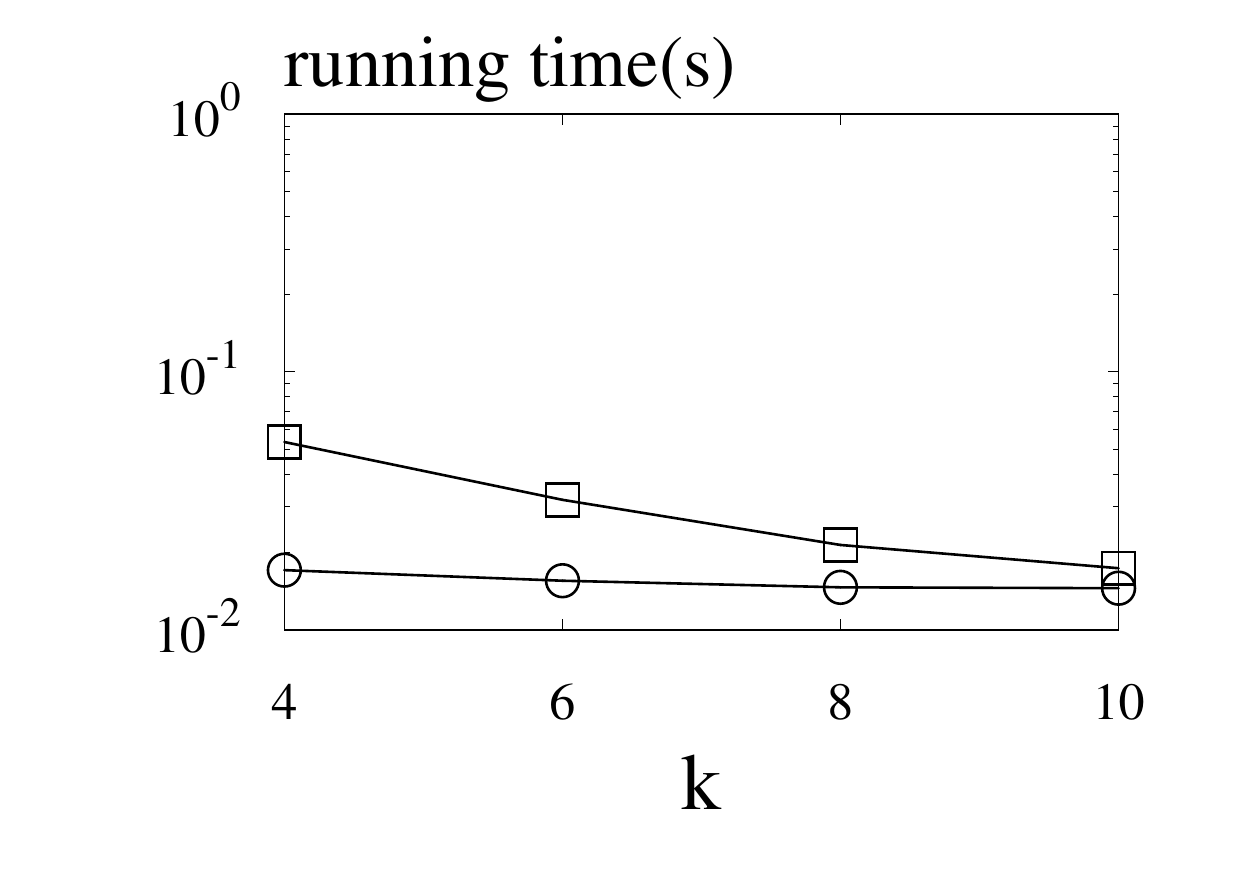} &
			\hspace{-6mm} \includegraphics[height=22mm]{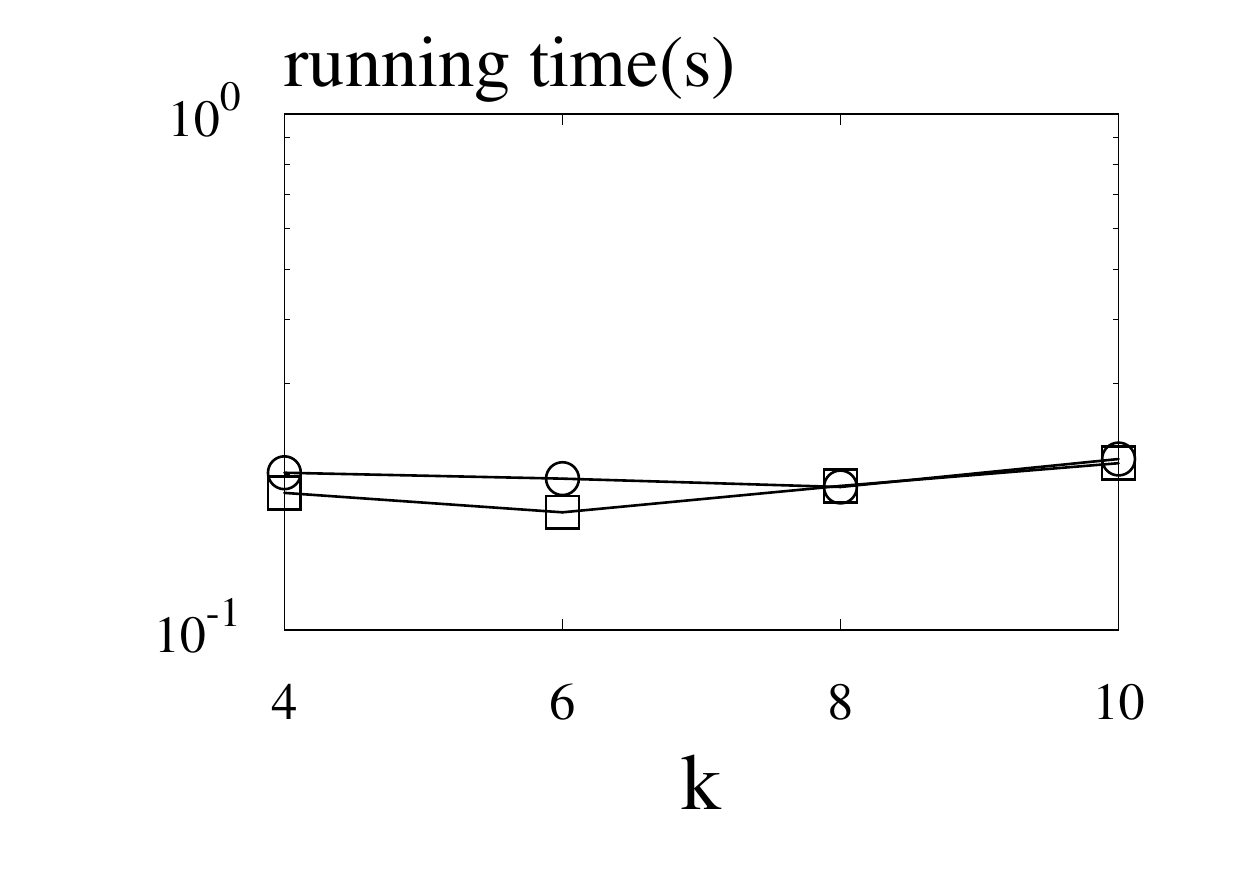} &
			\hspace{-6mm} \includegraphics[height=22mm]{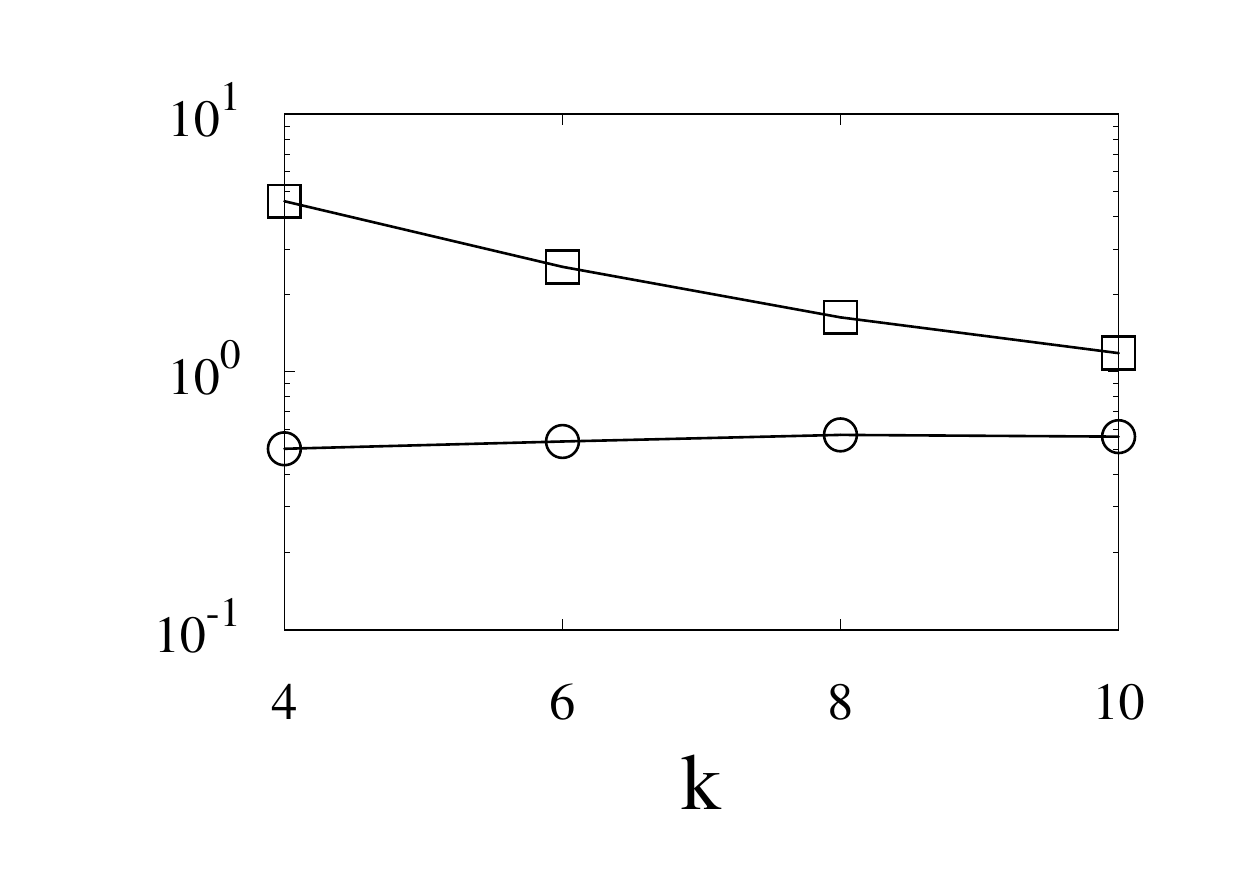} &
			\hspace{-6mm} \includegraphics[height=22mm]{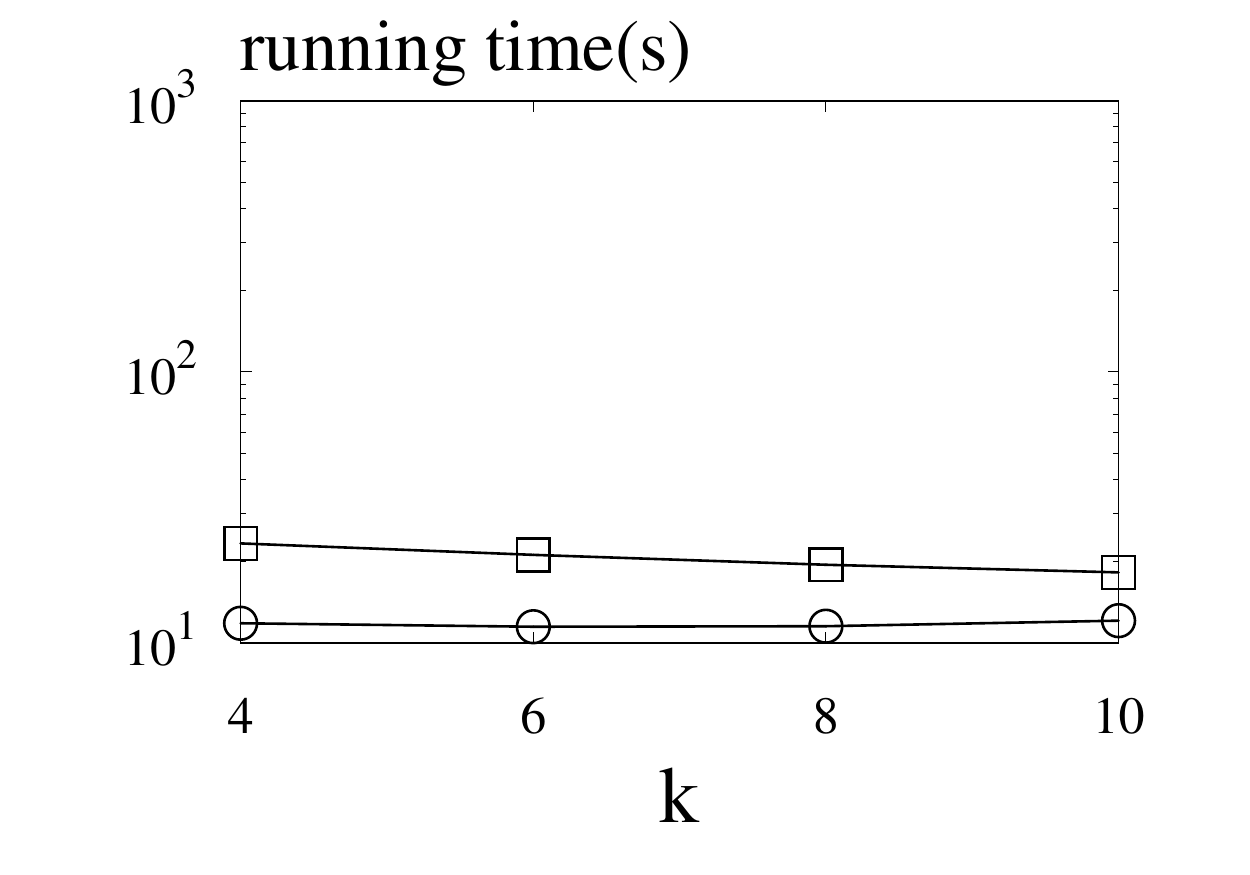} &
			\hspace{-6mm} \includegraphics[height=22mm]{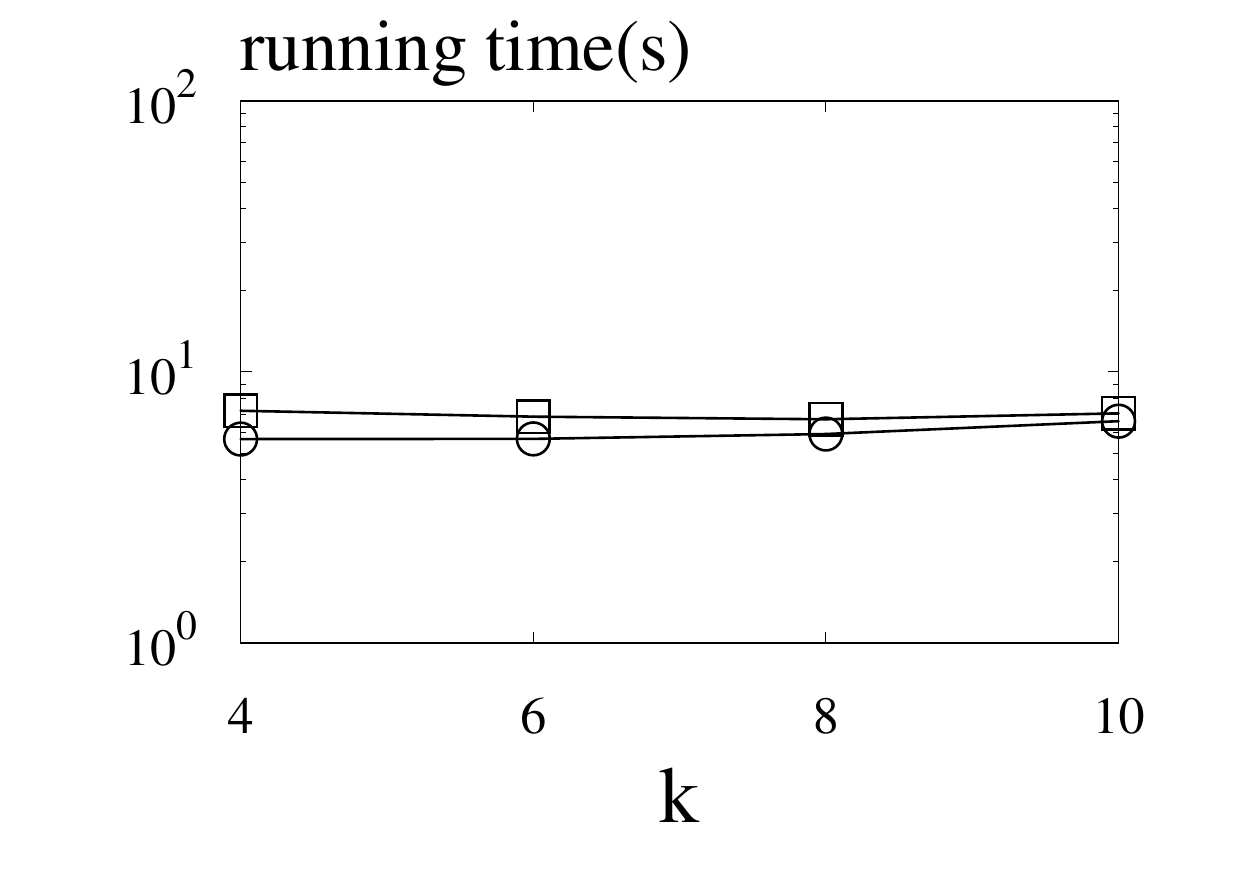} &
			\hspace{-6mm} \includegraphics[height=22mm]{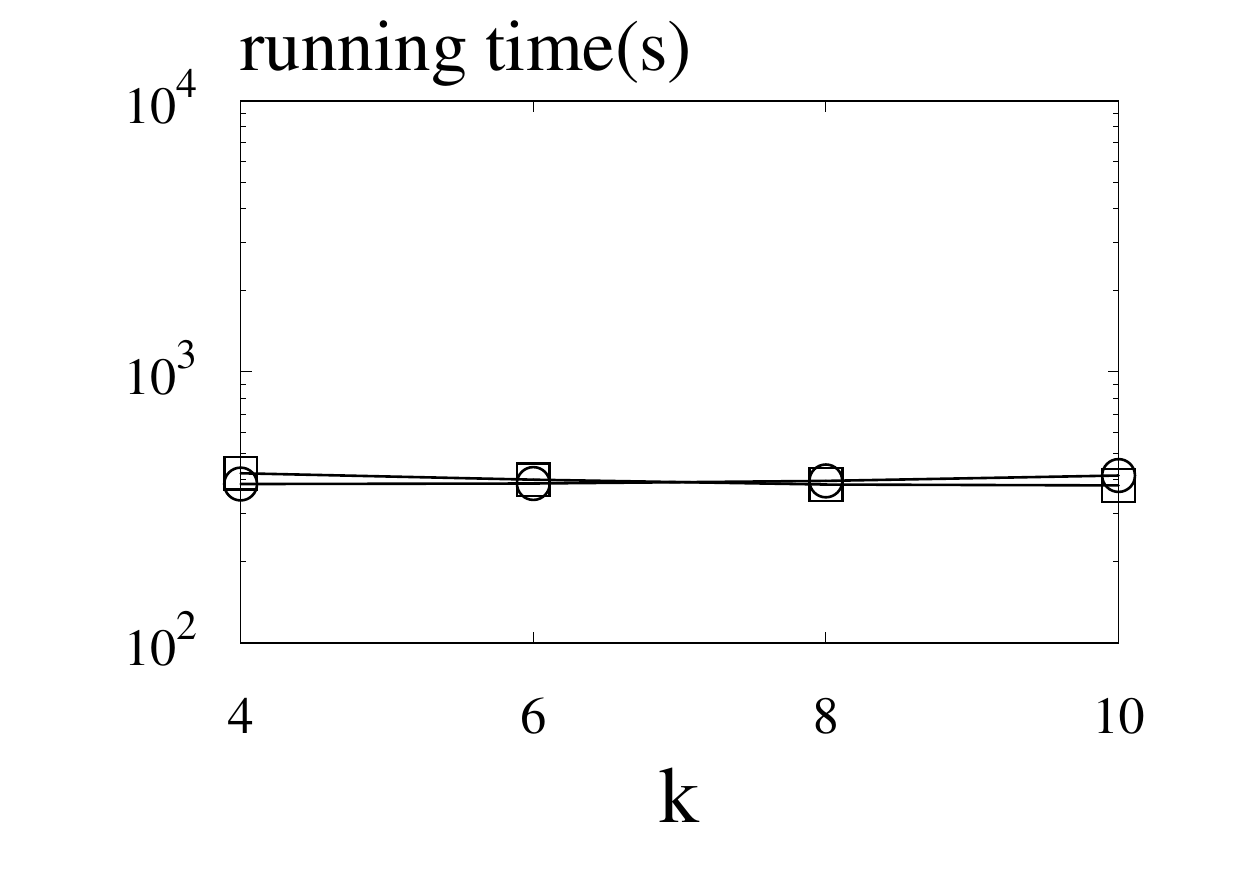}
			\\[-1mm]
			\hspace{-4mm} (a) Email &
			\hspace{-4mm} (b) DBLP &
			\hspace{-4mm} (c) Youtube &
			\hspace{-4mm} (d) Orkut &
			\hspace{-4mm} (e) Livejournal &
			\hspace{-4mm} (f) FriendSter \\[-1mm]
		\end{tabular}
		\vspace{-2mm}
		\caption{Running time vs. $k$ (sum, size-constrained)}
		\label{fig:time vs k size sum}
		\vspace{-5mm}
	\end{small}
\end{figure*}

\begin{figure*}[!t]
	\centering
	\vspace{-1mm}
	\begin{small}
		\begin{tabular}{cccccc}
			\multicolumn{6}{c}{\hspace{-6mm} \includegraphics[height=10mm]{size_legend.pdf}}  \\[-3mm]
			\hspace{-6mm} \includegraphics[height=22mm]{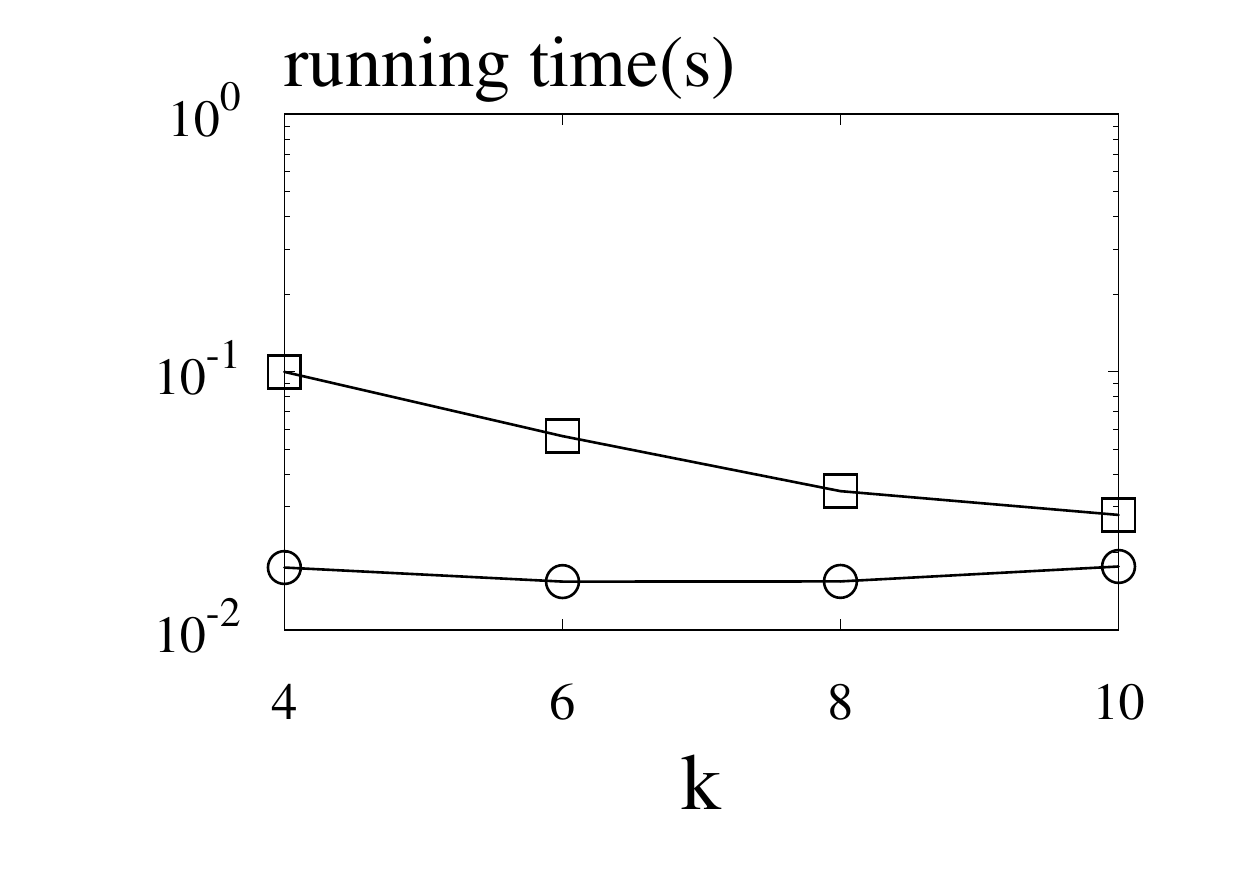} &
			\hspace{-6mm} \includegraphics[height=22mm]{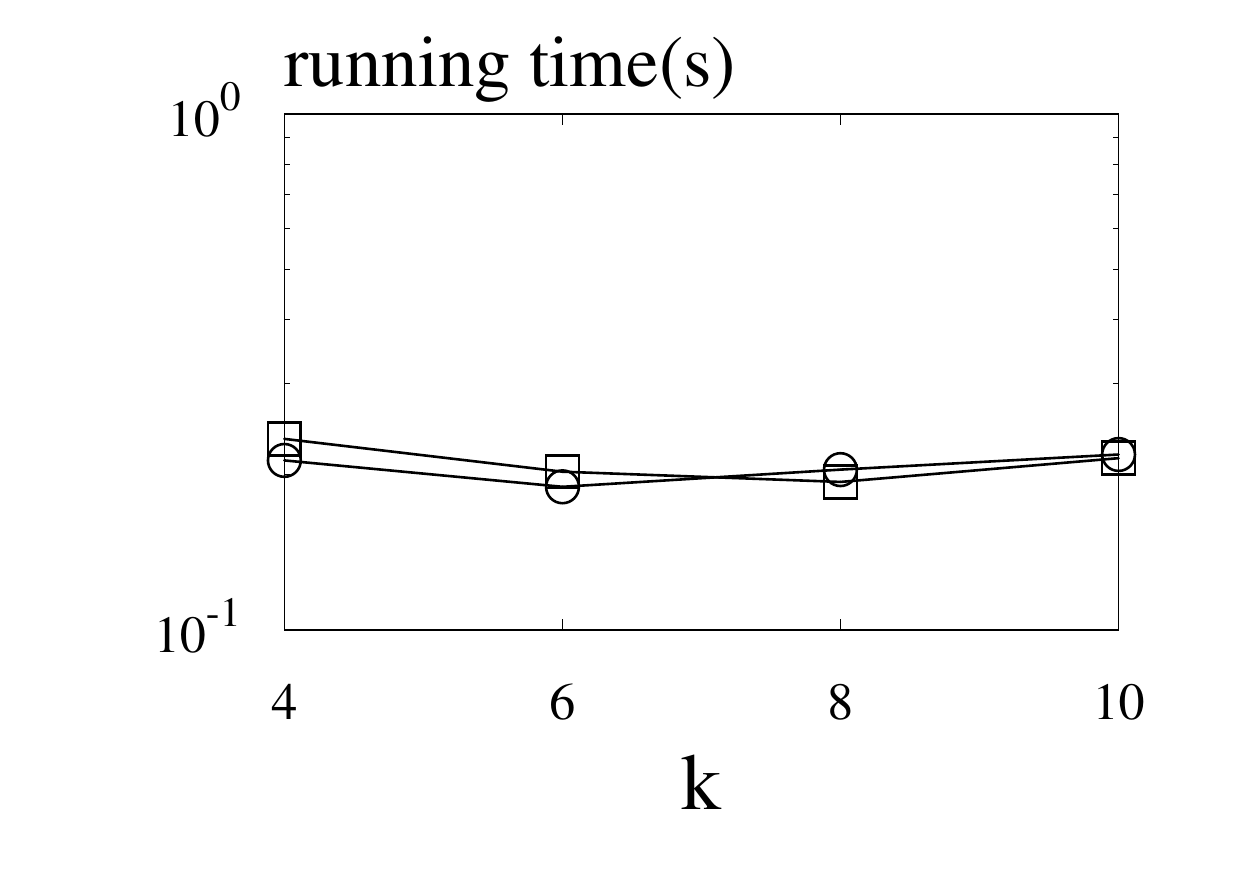} &
			\hspace{-6mm} \includegraphics[height=22mm]{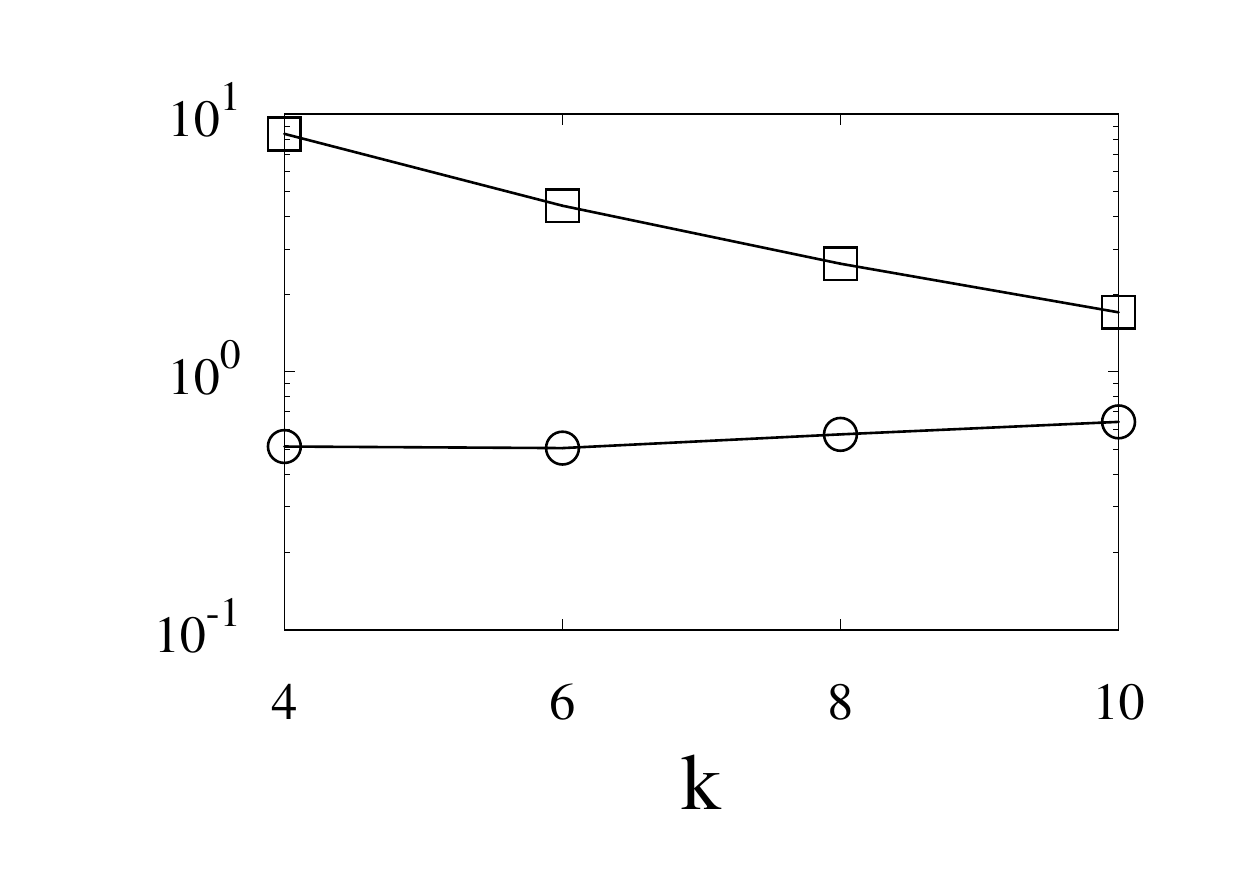} &
			\hspace{-6mm} \includegraphics[height=22mm]{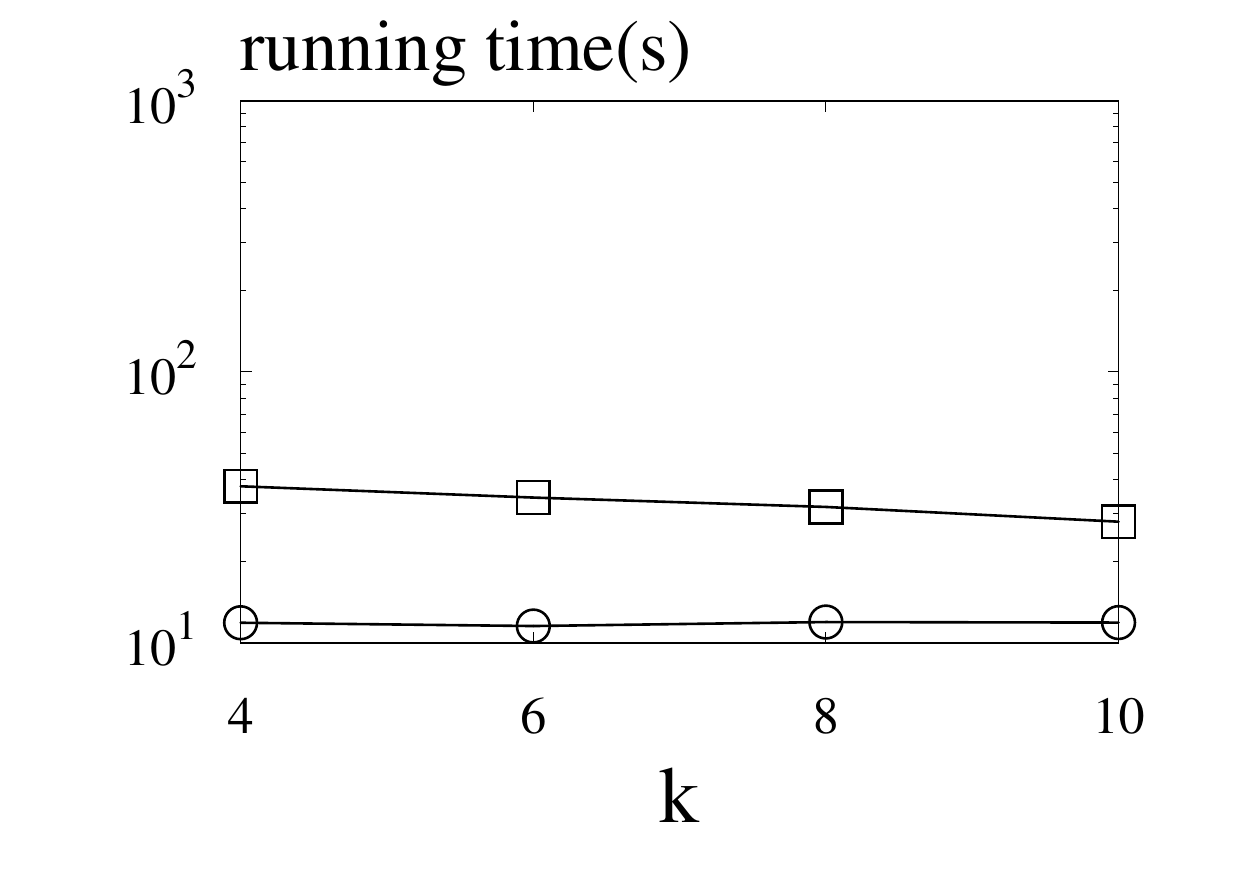} &
			\hspace{-6mm} \includegraphics[height=22mm]{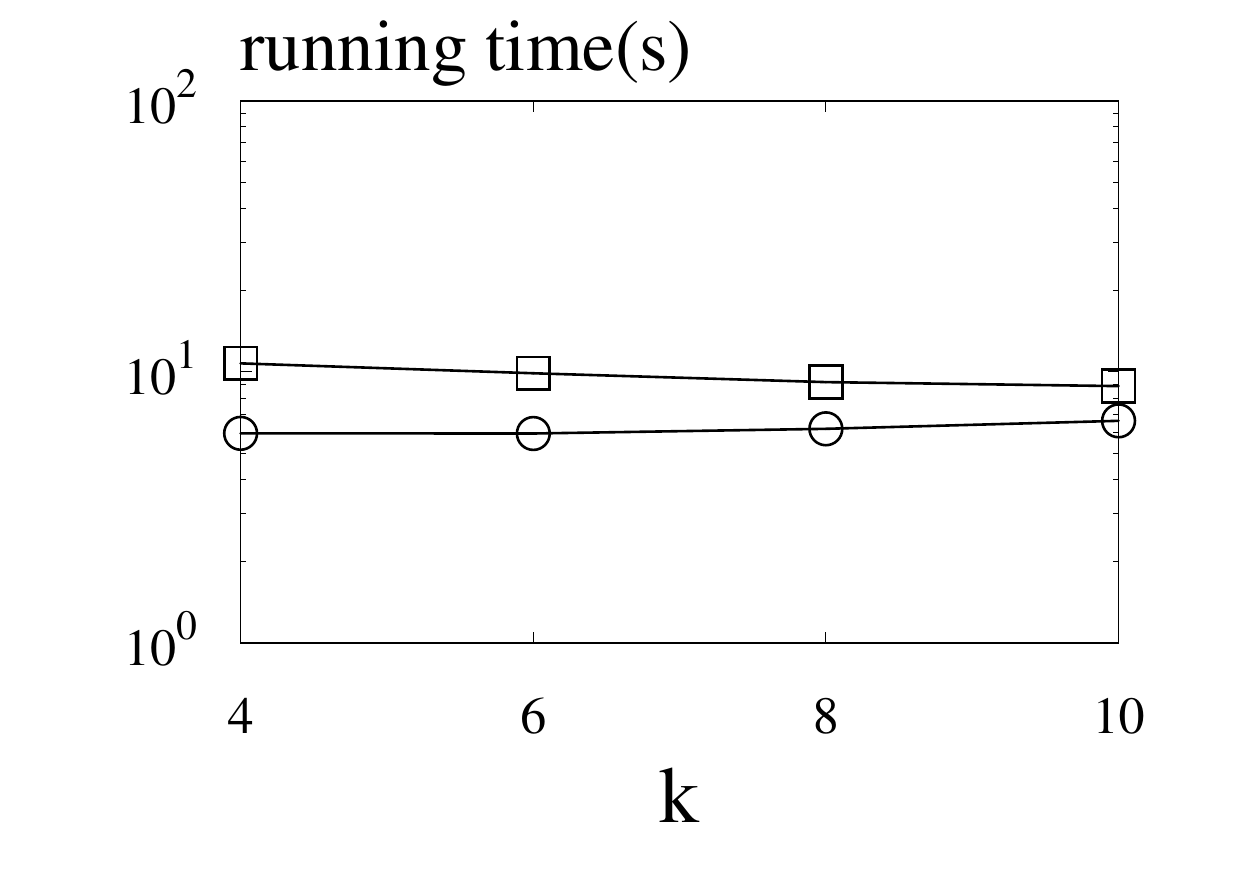} &
			\hspace{-6mm} \includegraphics[height=22mm]{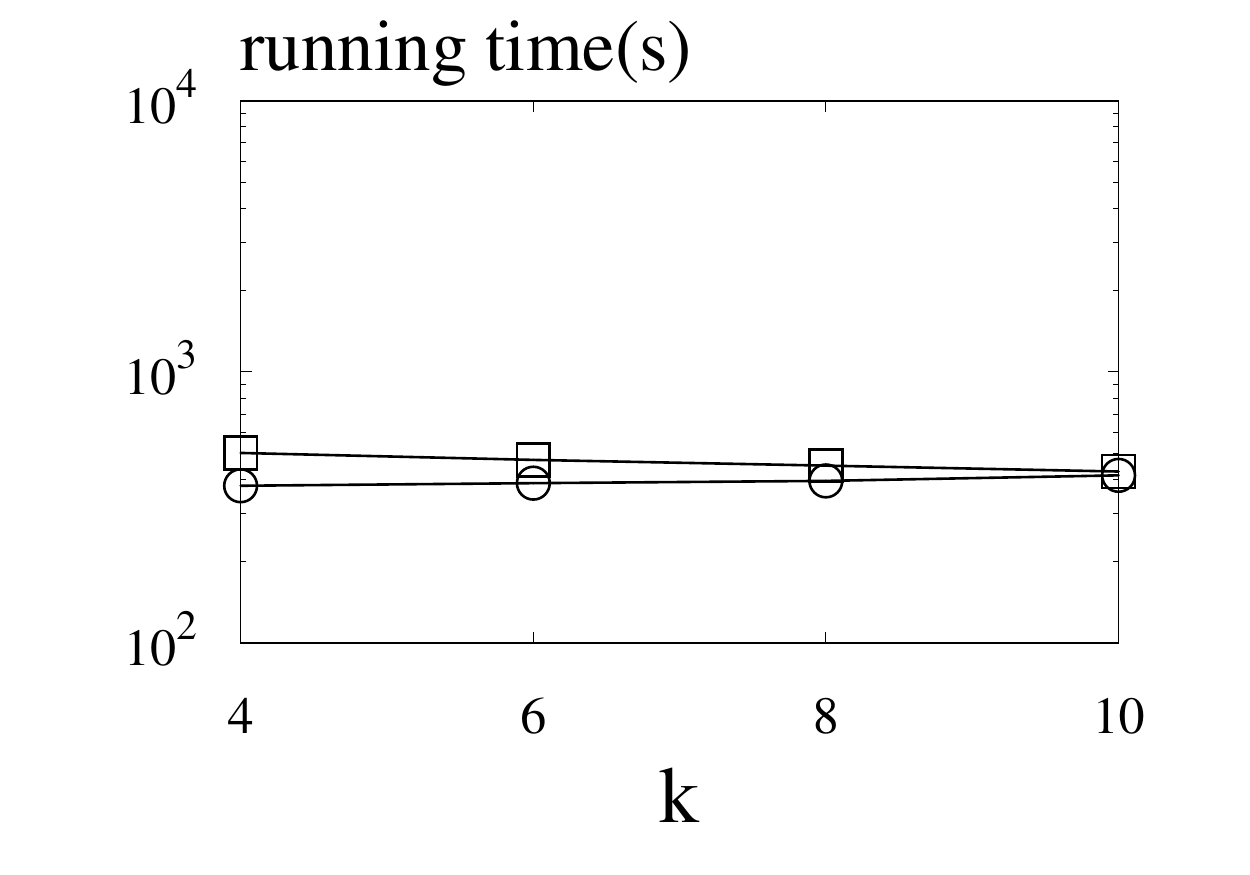}
			\\[-1mm]
			\hspace{-4mm} (a) Email &
			\hspace{-4mm} (b) DBLP &
			\hspace{-4mm} (c) Youtube &
			\hspace{-4mm} (d) Orkut &
			\hspace{-4mm} (e) Livejournal &
			\hspace{-4mm} (f) FriendSter \\[-1mm]
		\end{tabular}
		\vspace{-2mm}
		\caption{Running time vs. $k$ (avg, size-constrained)}
		\label{fig:time vs k size avg}
		\vspace{-3mm}
	\end{small}
\end{figure*}

\noindent \textbf{Exp-IV: Effect of $k$.} In this exp, we vary $k$ to evaluate the Local Search technique. Closer inspection of Figures~\ref{fig:time vs k size sum} and~\ref{fig:time vs k size avg} shows that the running time of algorithms decreases because the size of the graph decreases as $k$ increases. As the size of the graph decreases, the iteration of the local algorithm decreases as well.

\begin{figure*}[!t]
	\centering
	\vspace{-1mm}
	\begin{small}
		\begin{tabular}{cccccc}
%			\multicolumn{6}{c}{\hspace{-6mm} \includegraphics[height=10mm]{size_legend.pdf}}  \\[-3mm]
			\hspace{-6mm} \includegraphics[height=22mm]{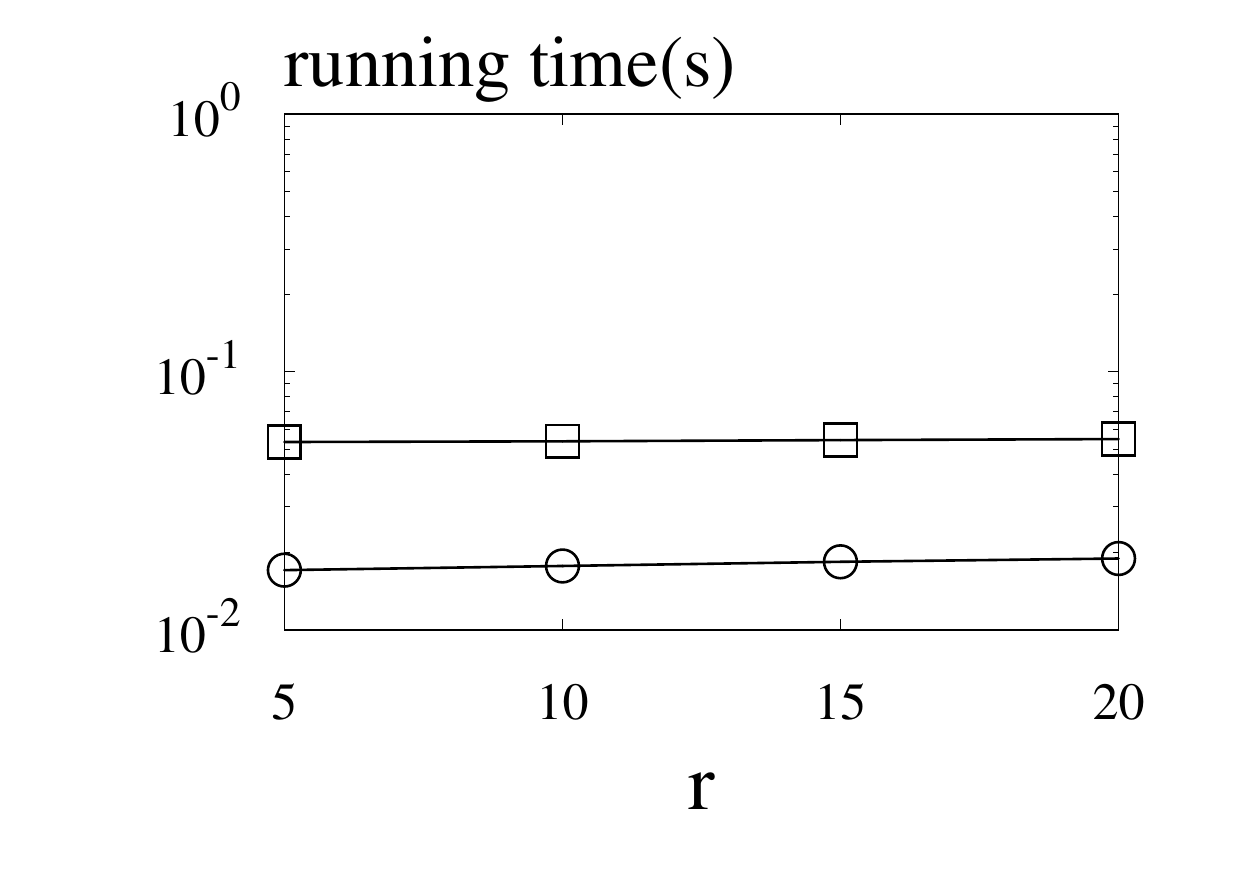} &
			\hspace{-6mm} \includegraphics[height=22mm]{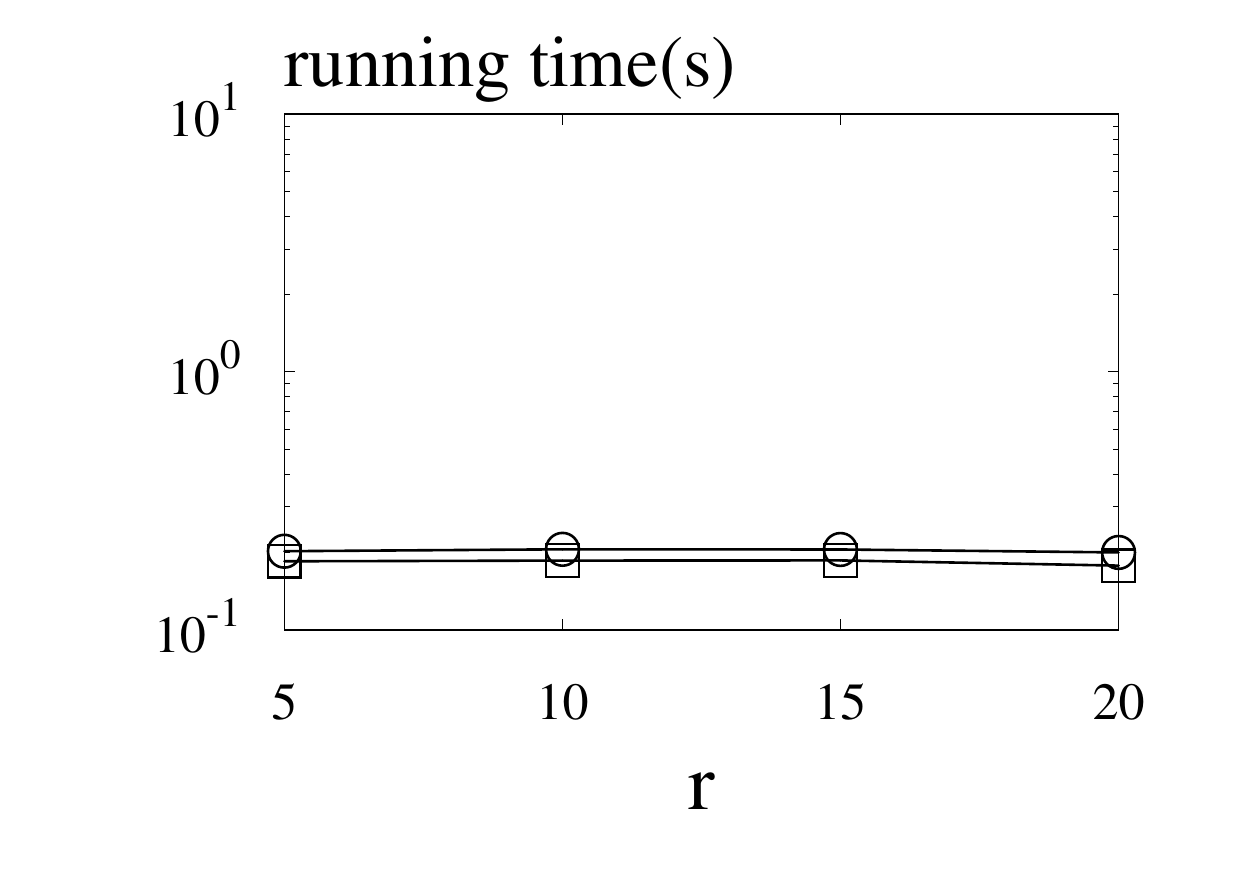} &
			\hspace{-6mm} \includegraphics[height=22mm]{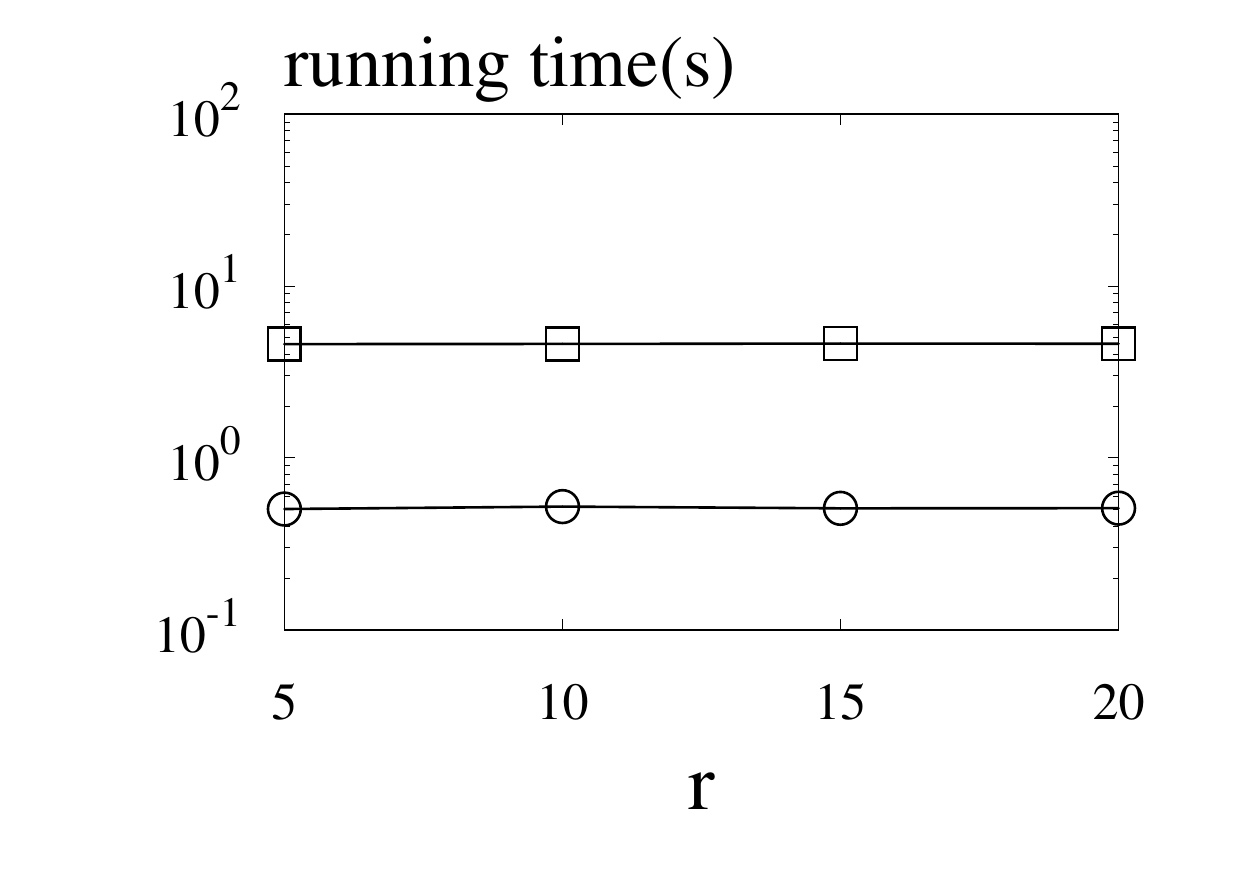} &
			\hspace{-6mm} \includegraphics[height=22mm]{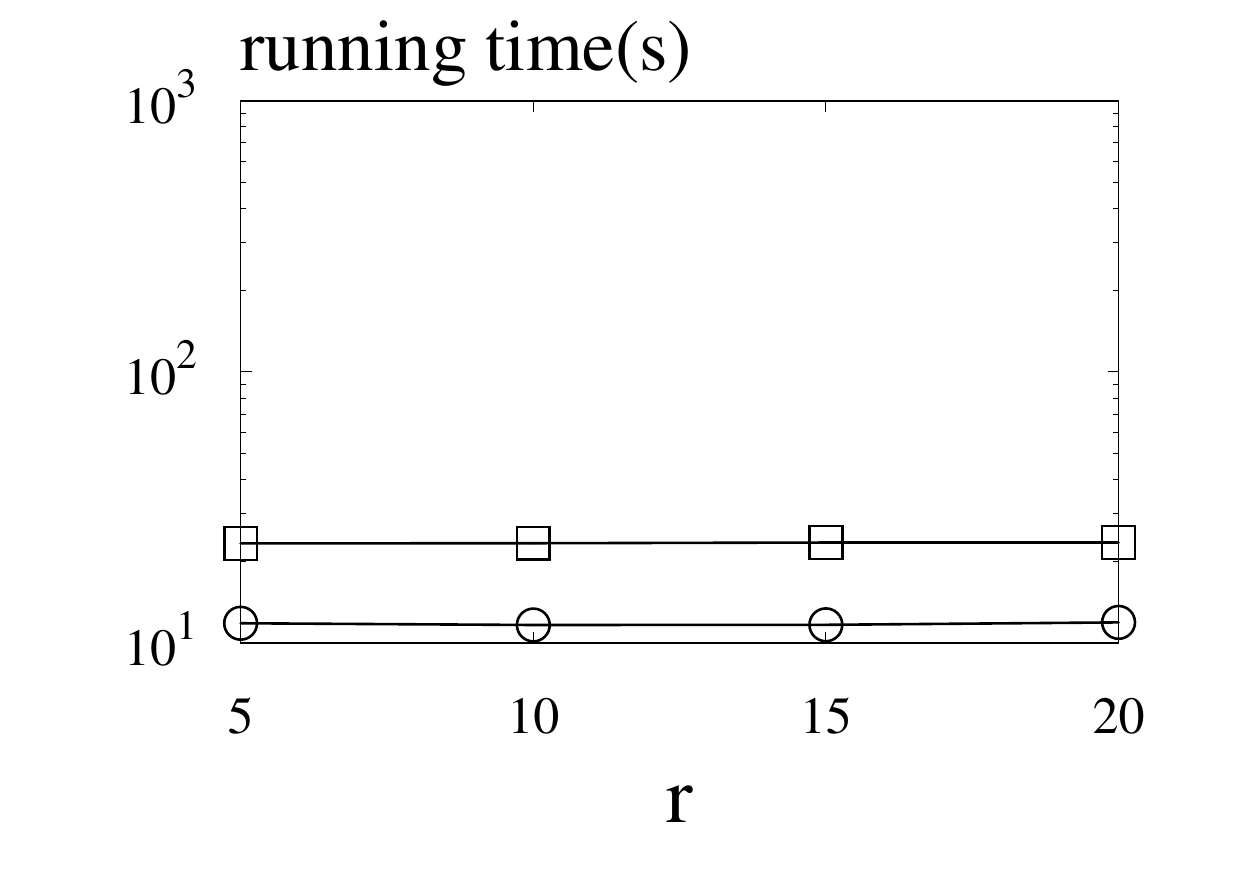} &
			\hspace{-6mm} \includegraphics[height=22mm]{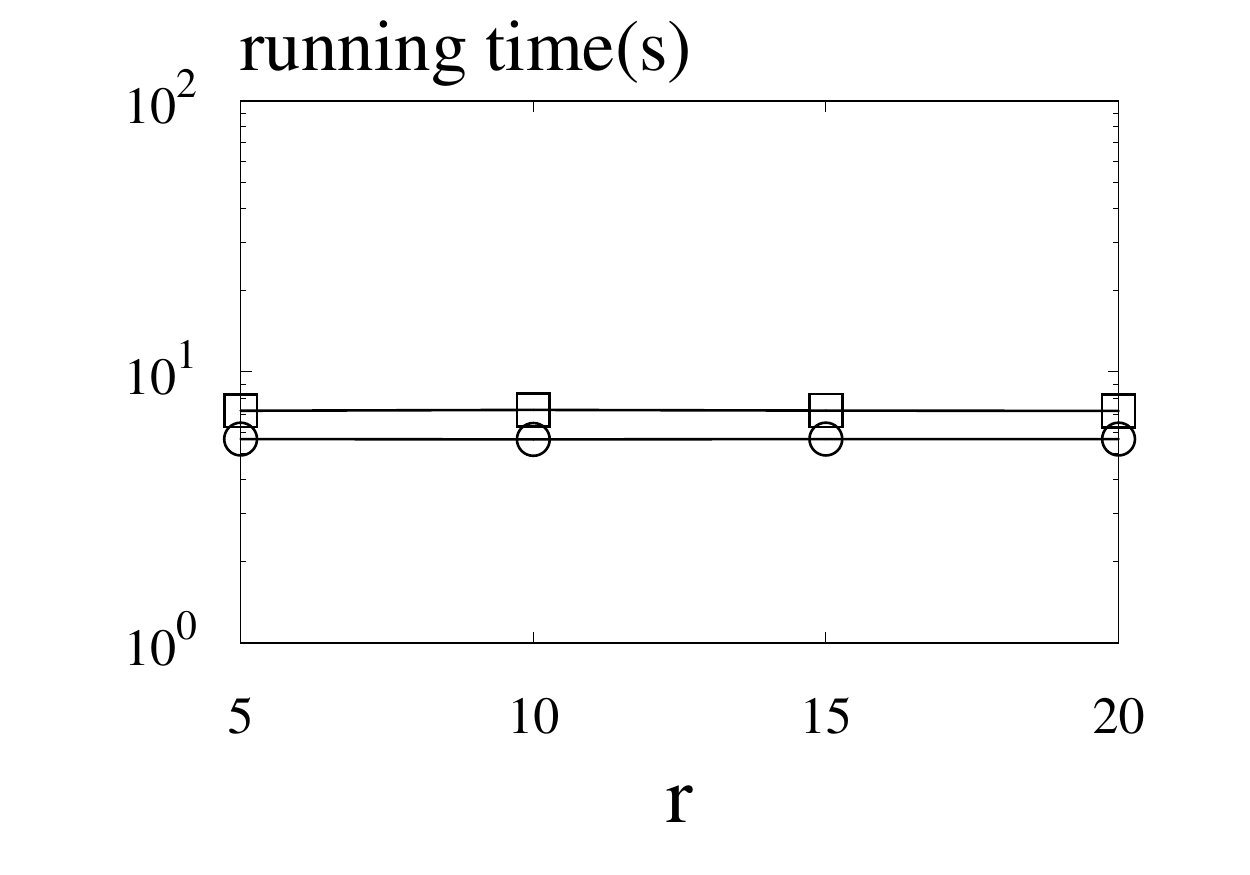} &
			\hspace{-6mm} \includegraphics[height=22mm]{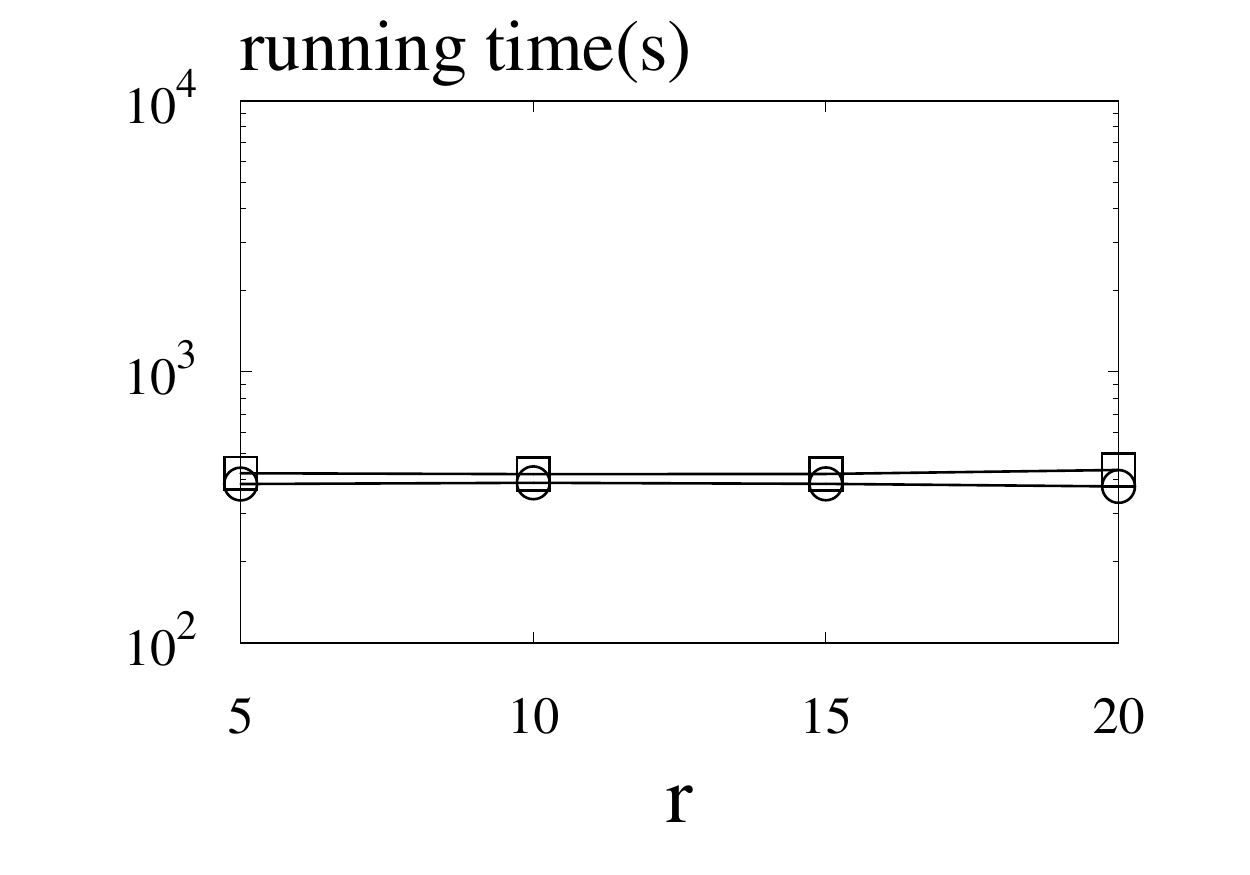}
			\\[-1mm]
			\hspace{-4mm} (a) Email &
			\hspace{-4mm} (b) DBLP &
			\hspace{-4mm} (c) Youtube &
			\hspace{-4mm} (d) Orkut &
			\hspace{-4mm} (e) Livejournal &
			\hspace{-4mm} (f) FriendSter \\[-1mm]
		\end{tabular}
		\vspace{-2mm}
		\caption{Running time vs. $r$ (sum, size-constrained)}
		\label{fig:time vs r size sum}
		\vspace{-4mm}
	\end{small}
\end{figure*}

\begin{figure*}[!t]
	\centering
	\vspace{-1mm}
	\begin{small}
		\begin{tabular}{cccccc}
			\multicolumn{6}{c}{\hspace{-6mm} \includegraphics[height=10mm]{size_legend.pdf}}  \\[-3mm]
			\hspace{-6mm} \includegraphics[height=22mm]{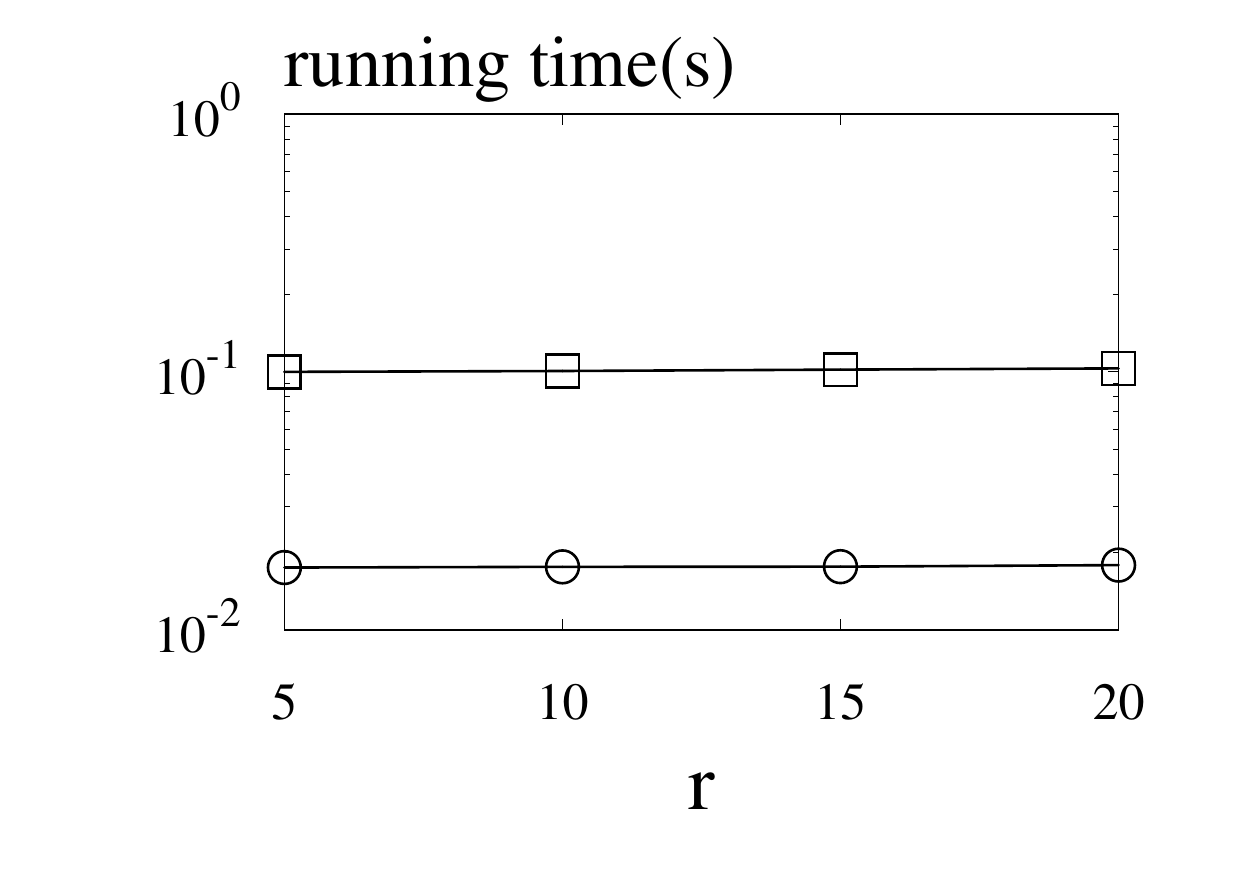} &
			\hspace{-6mm} \includegraphics[height=22mm]{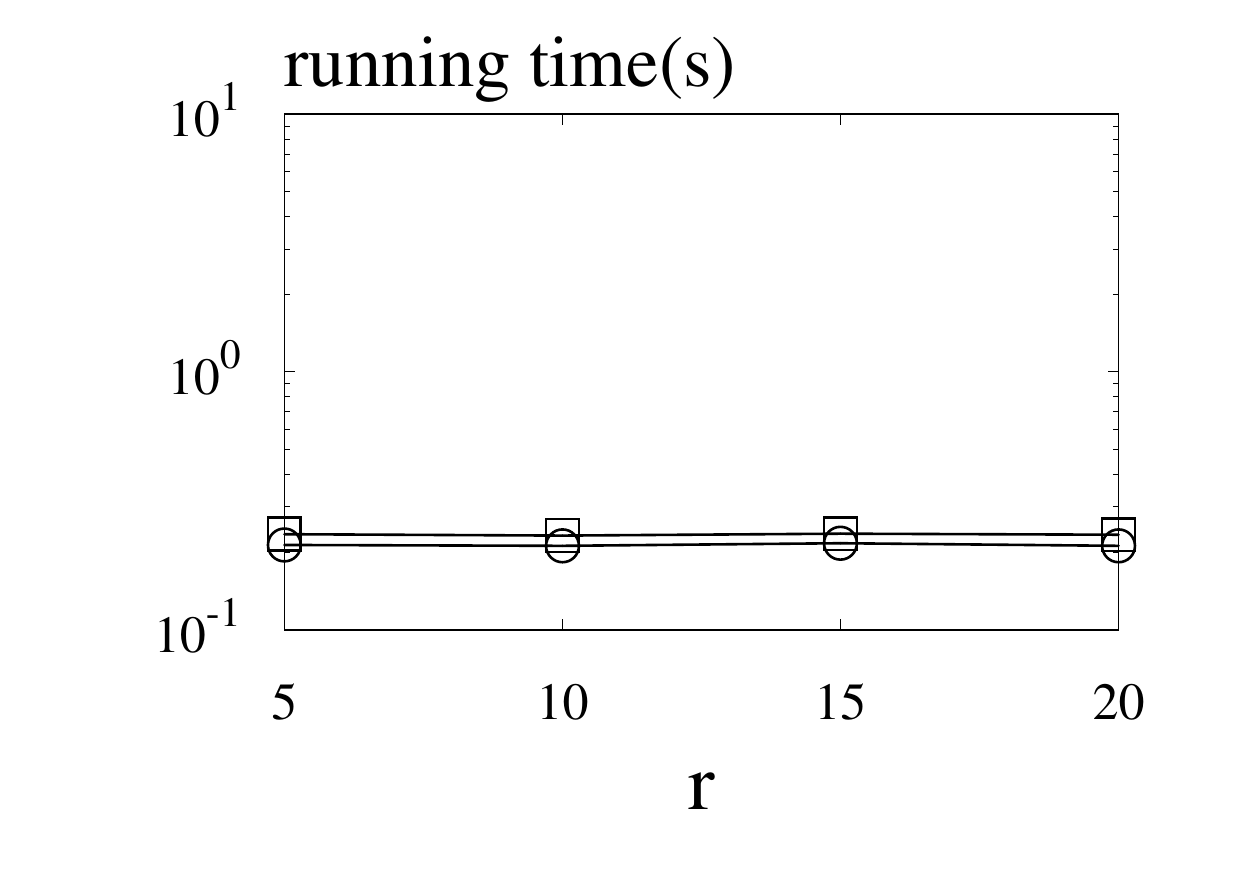} &
			\hspace{-6mm} \includegraphics[height=22mm]{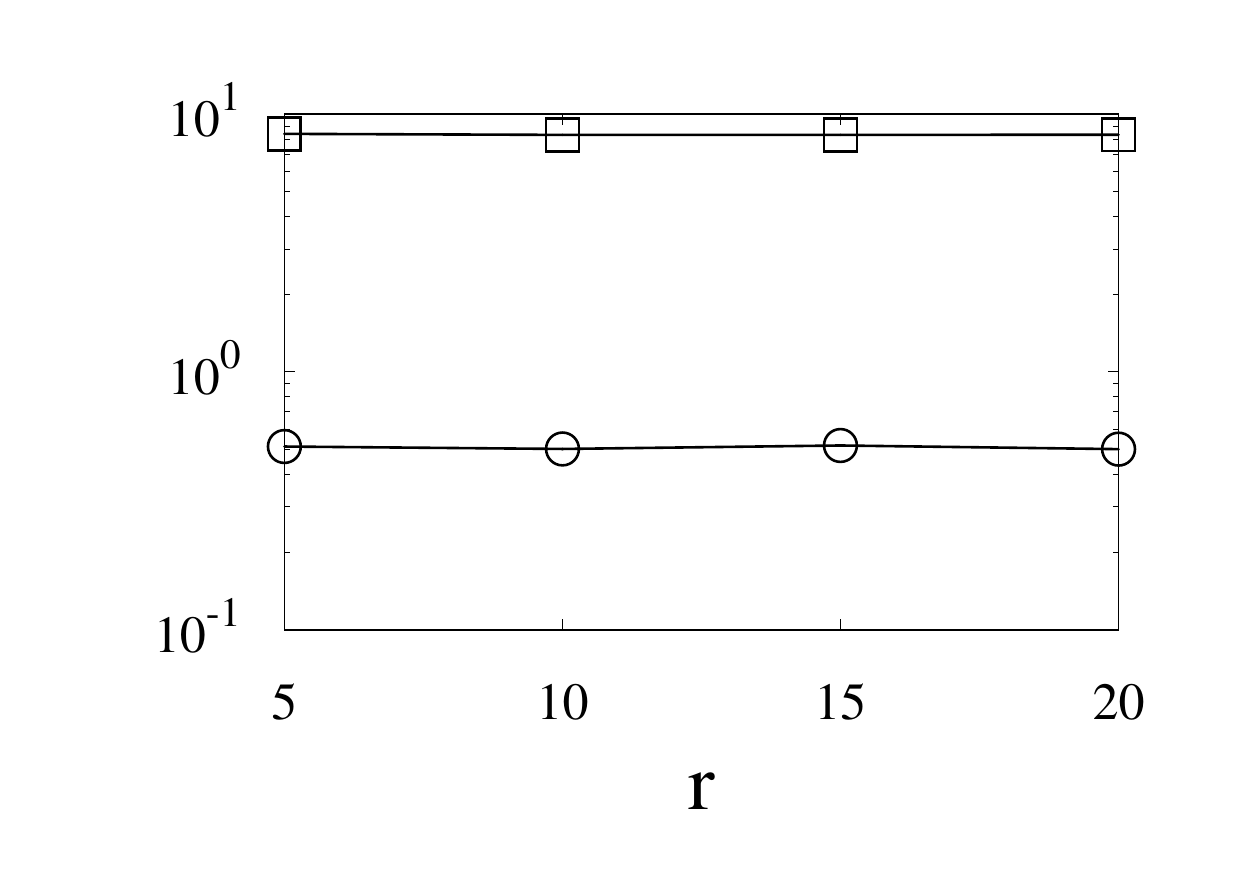} &
			\hspace{-6mm} \includegraphics[height=22mm]{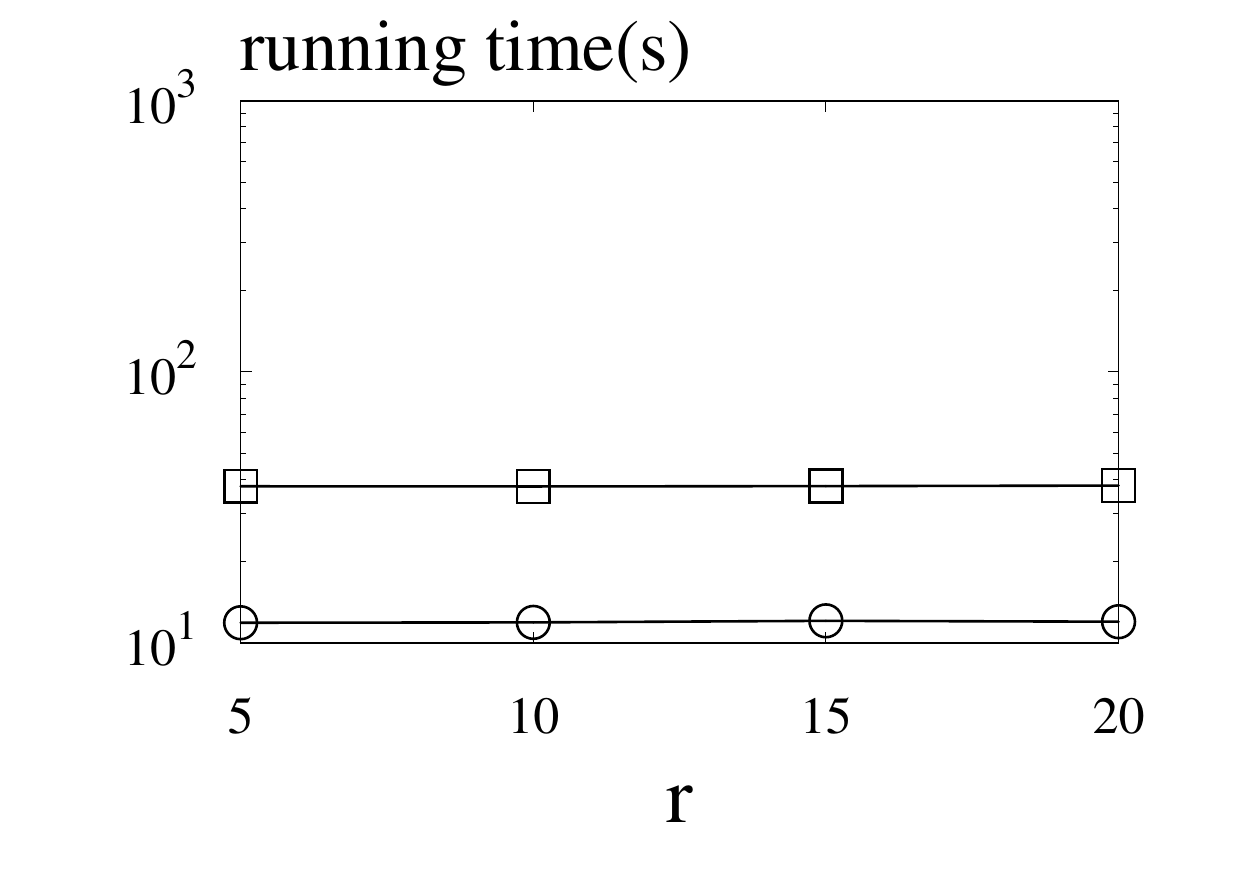} &
			\hspace{-6mm} \includegraphics[height=22mm]{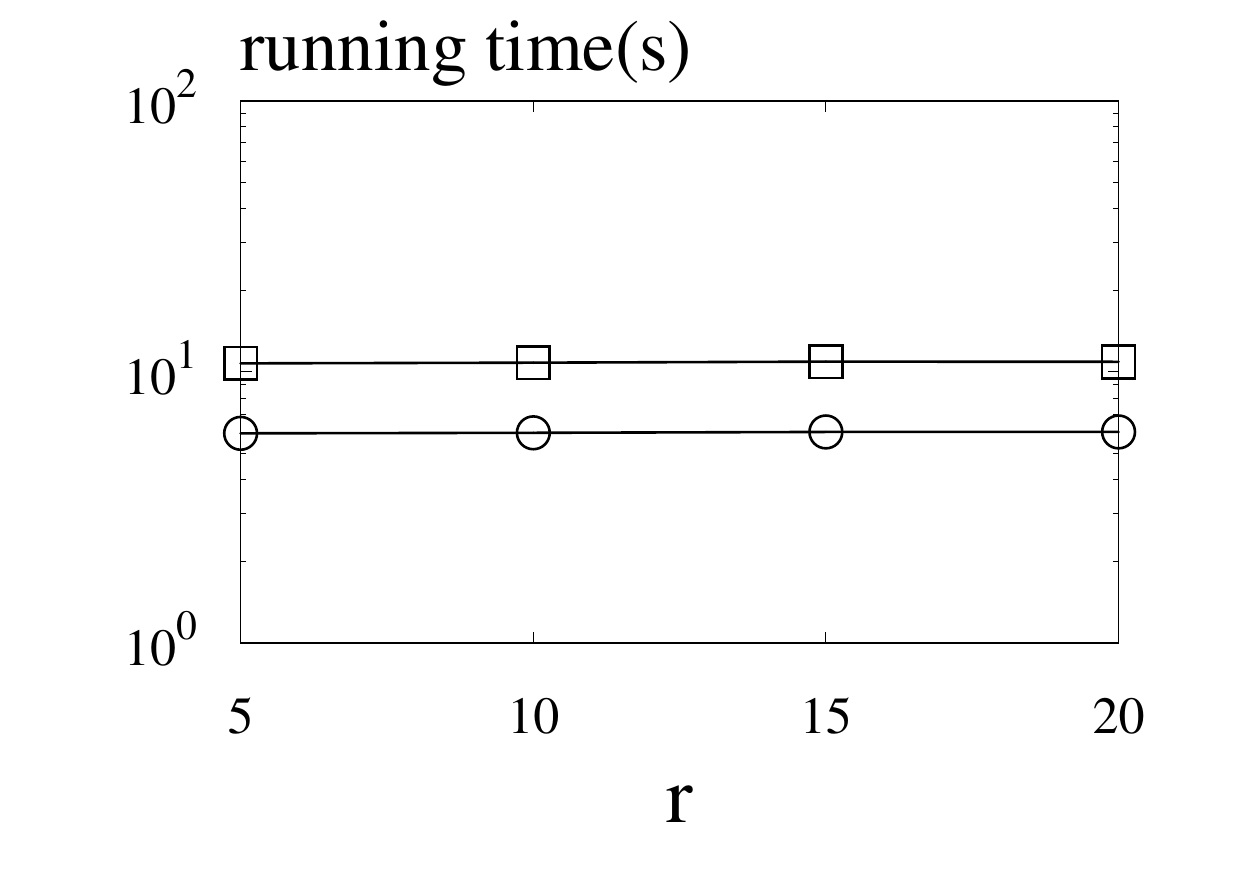} &
			\hspace{-6mm} \includegraphics[height=22mm]{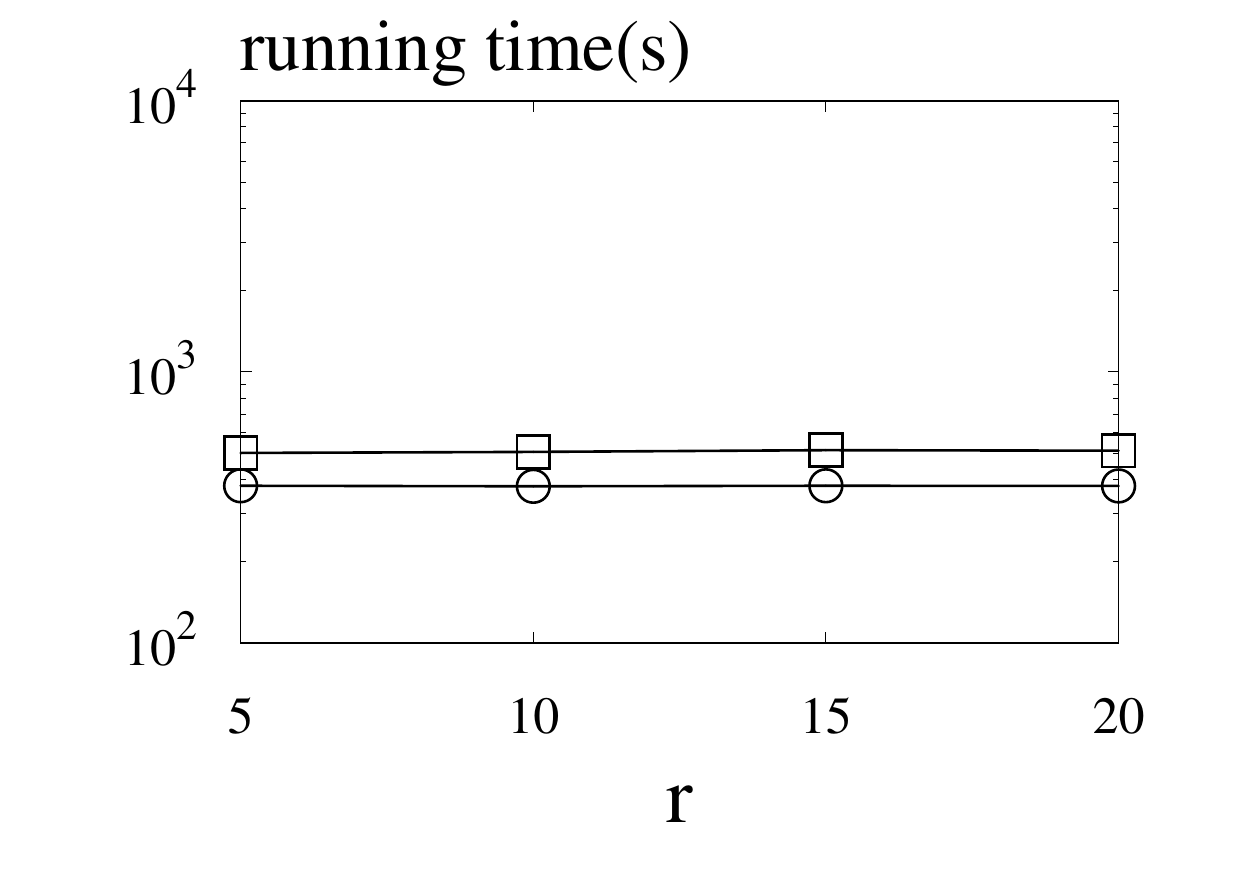}
			\\[-1mm]
			\hspace{-4mm} (a) Email &
			\hspace{-4mm} (b) DBLP &
			\hspace{-4mm} (c) Youtube &
			\hspace{-4mm} (d) Orkut &
			\hspace{-4mm} (e) Livejournal &
			\hspace{-4mm} (f) FriendSter \\[-1mm]
		\end{tabular}
		\vspace{-2mm}
		\caption{Running time vs. $r$ (avg, size-constrained)}
		\label{fig:time vs r size avg}
		\vspace{-4mm}
	\end{small}
\end{figure*}

\begin{figure*}[!t]
	\centering
	\vspace{-1mm}
	\begin{small}
		\begin{tabular}{cccccc}
			\multicolumn{6}{c}{\hspace{-6mm} \includegraphics[height=10mm]{size_legend.pdf}}  \\[-3mm]
			\hspace{-6mm} \includegraphics[height=22mm]{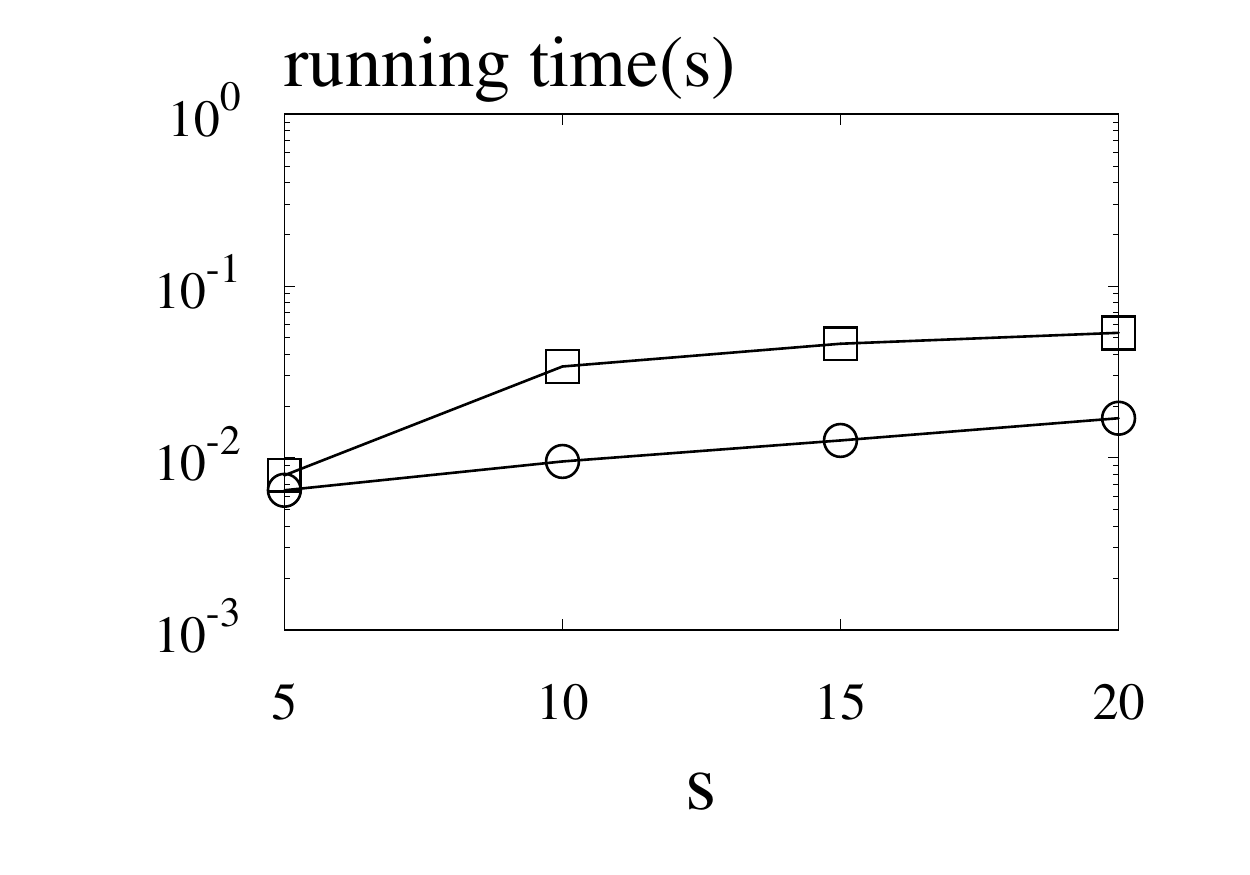} &
			\hspace{-6mm} \includegraphics[height=22mm]{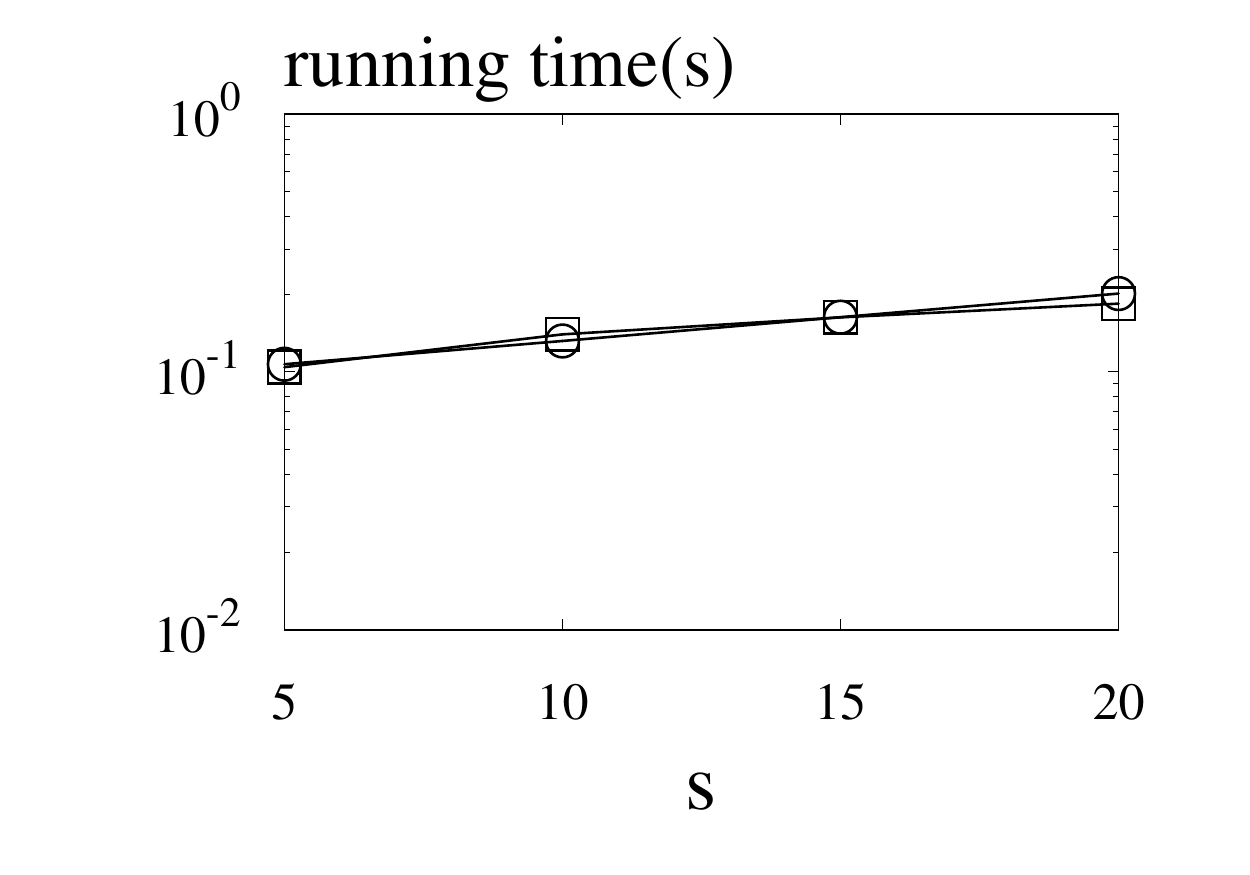} &
			\hspace{-6mm} \includegraphics[height=22mm]{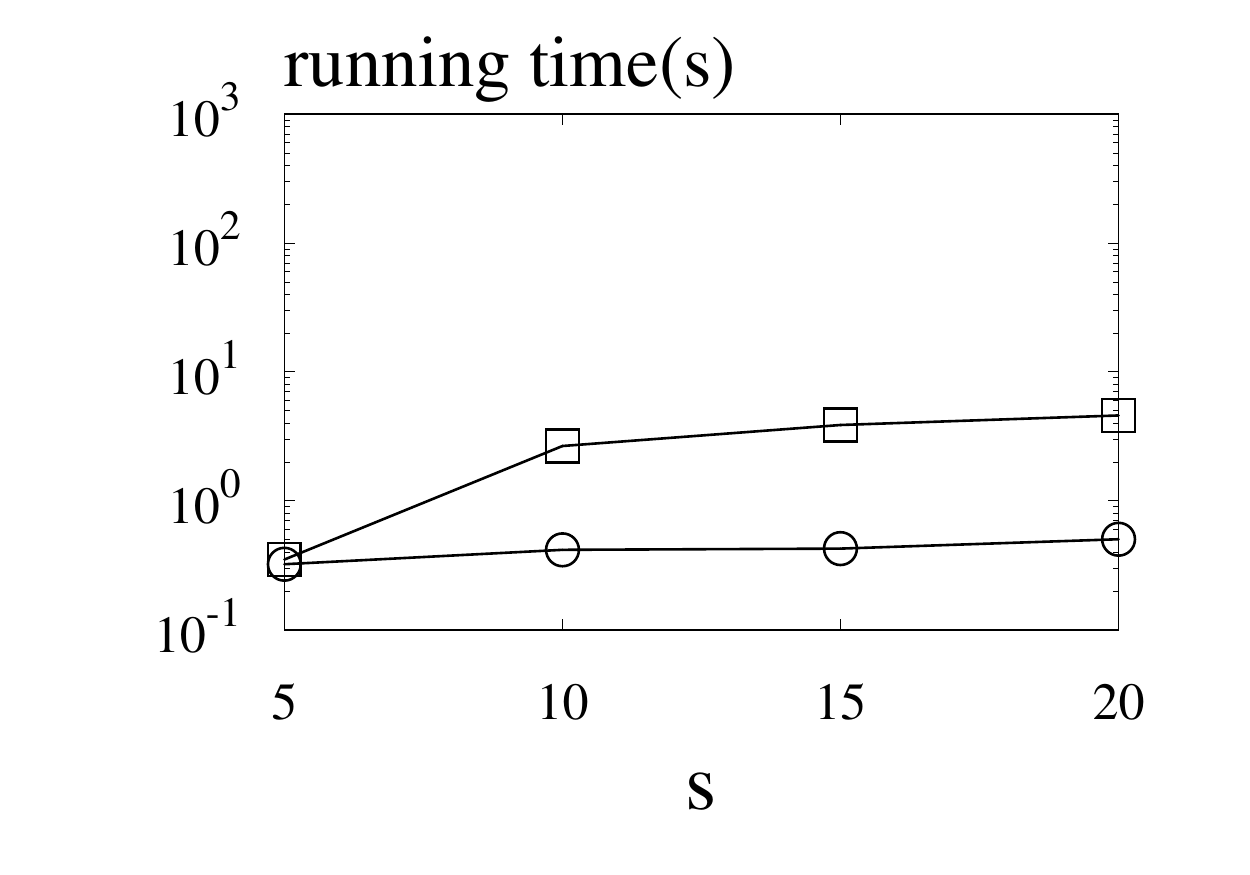} &
			\hspace{-6mm} \includegraphics[height=22mm]{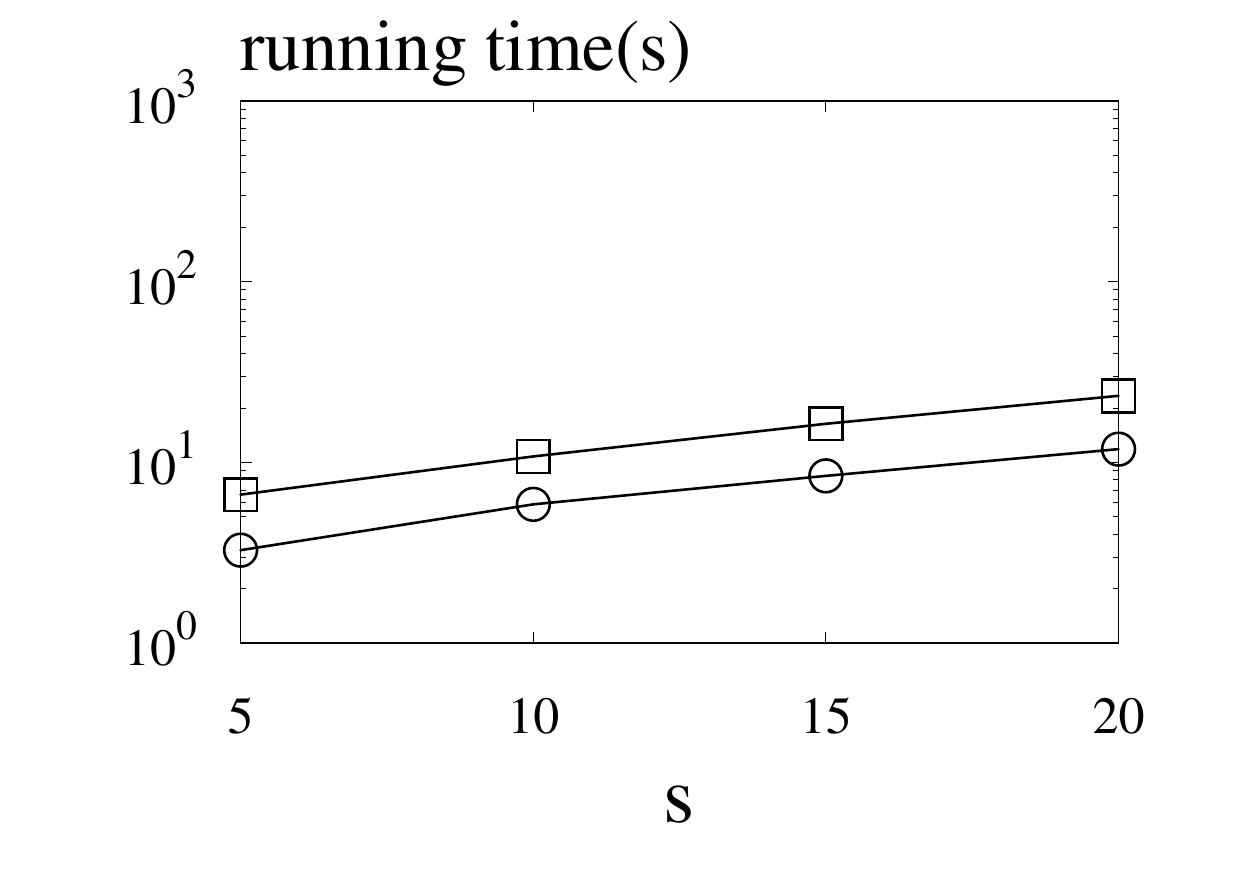} &
			\hspace{-6mm} \includegraphics[height=22mm]{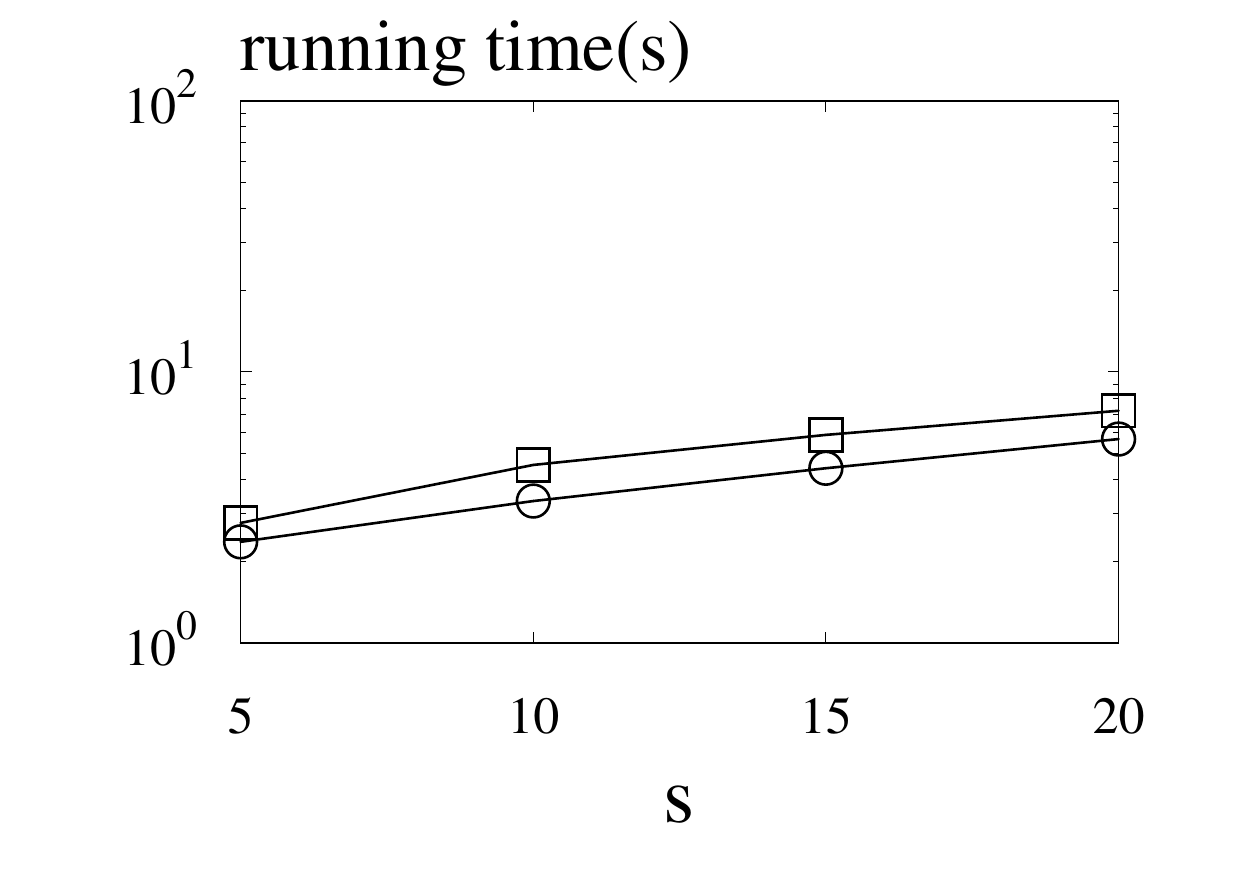} &
			\hspace{-6mm} \includegraphics[height=22mm]{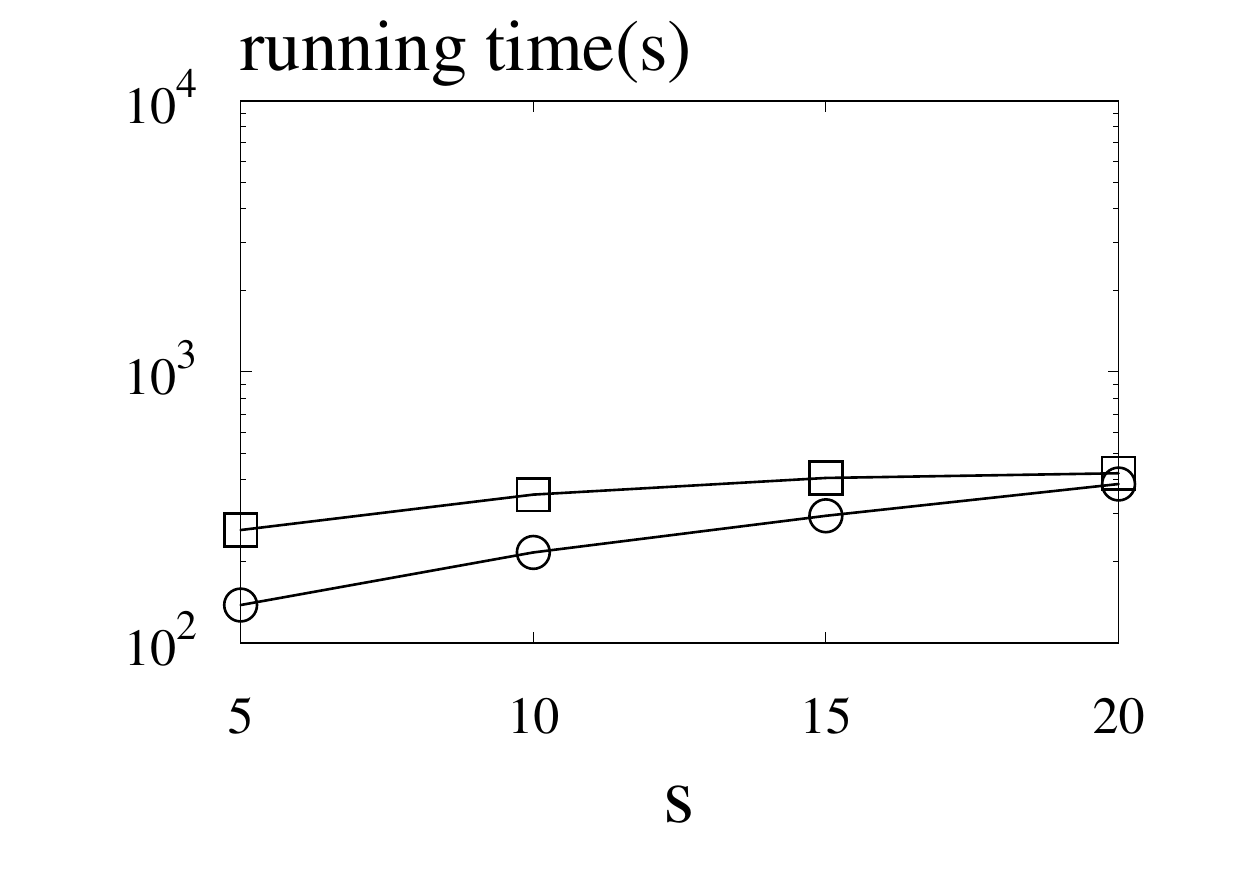}
			\\[-1mm]
			\hspace{-4mm} (a) Email &
			\hspace{-4mm} (b) DBLP &
			\hspace{-4mm} (c) Youtube &
			\hspace{-4mm} (d) Orkut &
			\hspace{-4mm} (e) Livejournal &
			\hspace{-4mm} (f) FriendSter \\[-1mm]
		\end{tabular}
		\vspace{-2mm}
		\caption{Running time vs. $s$ (sum, size-constrained)}
		\label{fig:time vs s size sum}
		\vspace{-4mm}
	\end{small}
\end{figure*}

\begin{figure*}[!t]
	\centering
	\vspace{-1mm}
	\begin{small}
		\begin{tabular}{cccccc}
			\multicolumn{6}{c}{\hspace{-6mm} \includegraphics[height=10mm]{size_legend.pdf}}  \\[-3mm]
			\hspace{-6mm} \includegraphics[height=22mm]{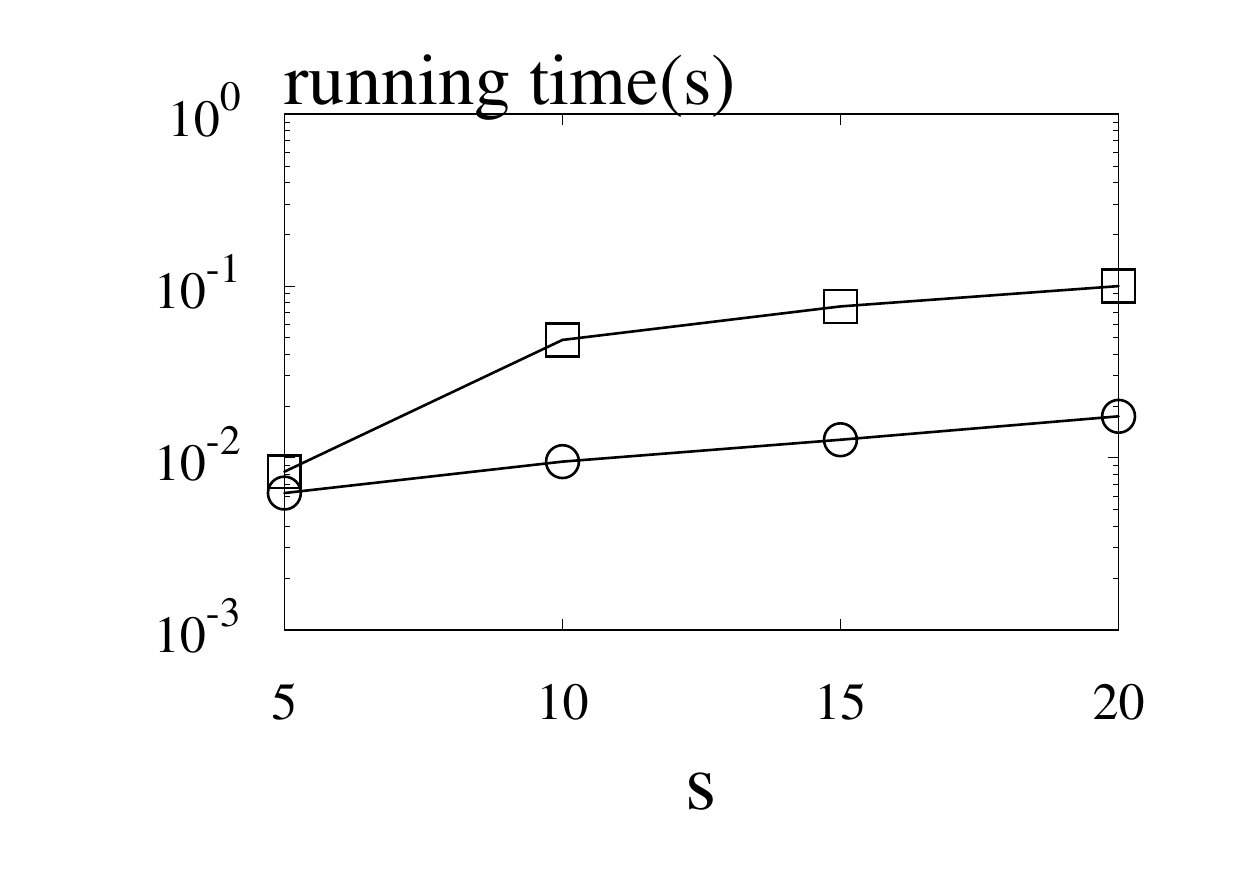} &
			\hspace{-6mm} \includegraphics[height=22mm]{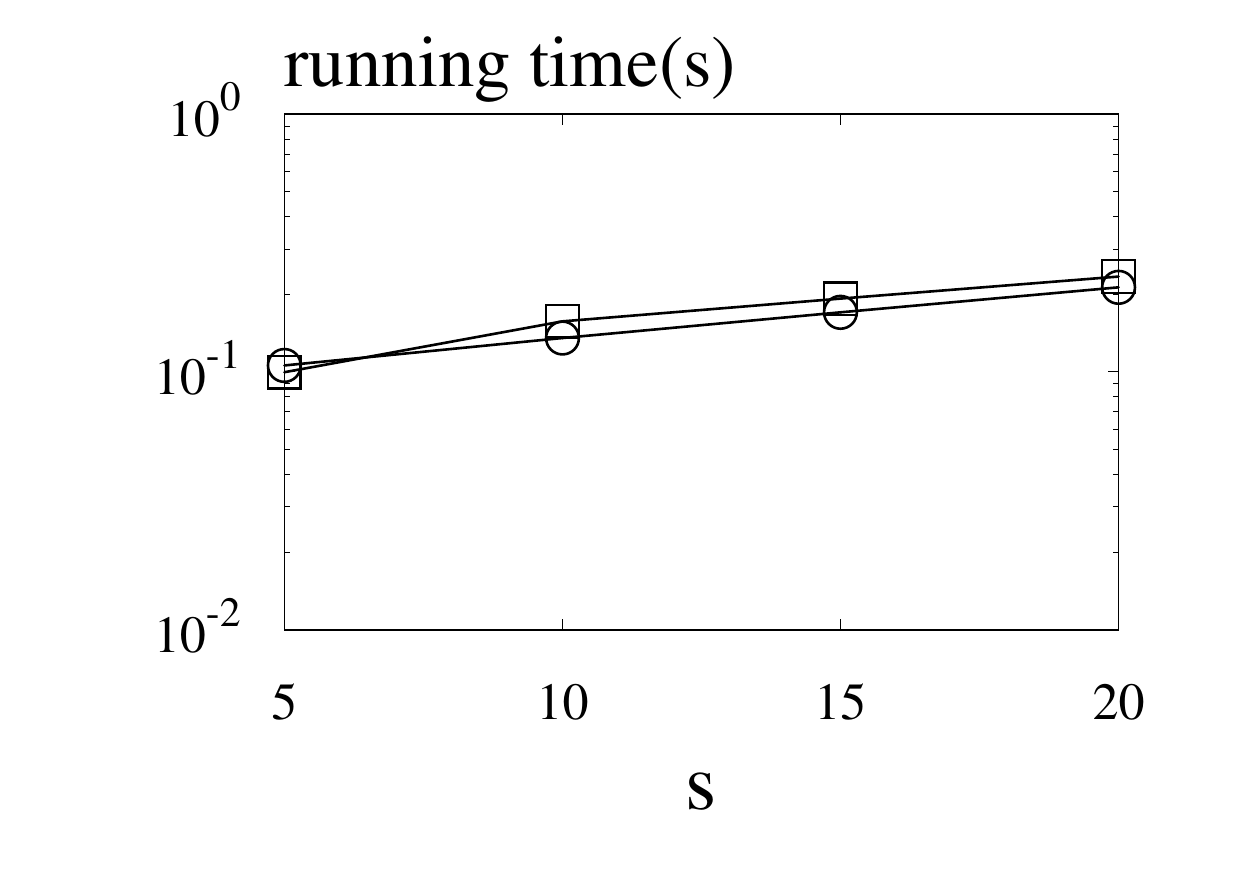} &
			\hspace{-6mm} \includegraphics[height=22mm]{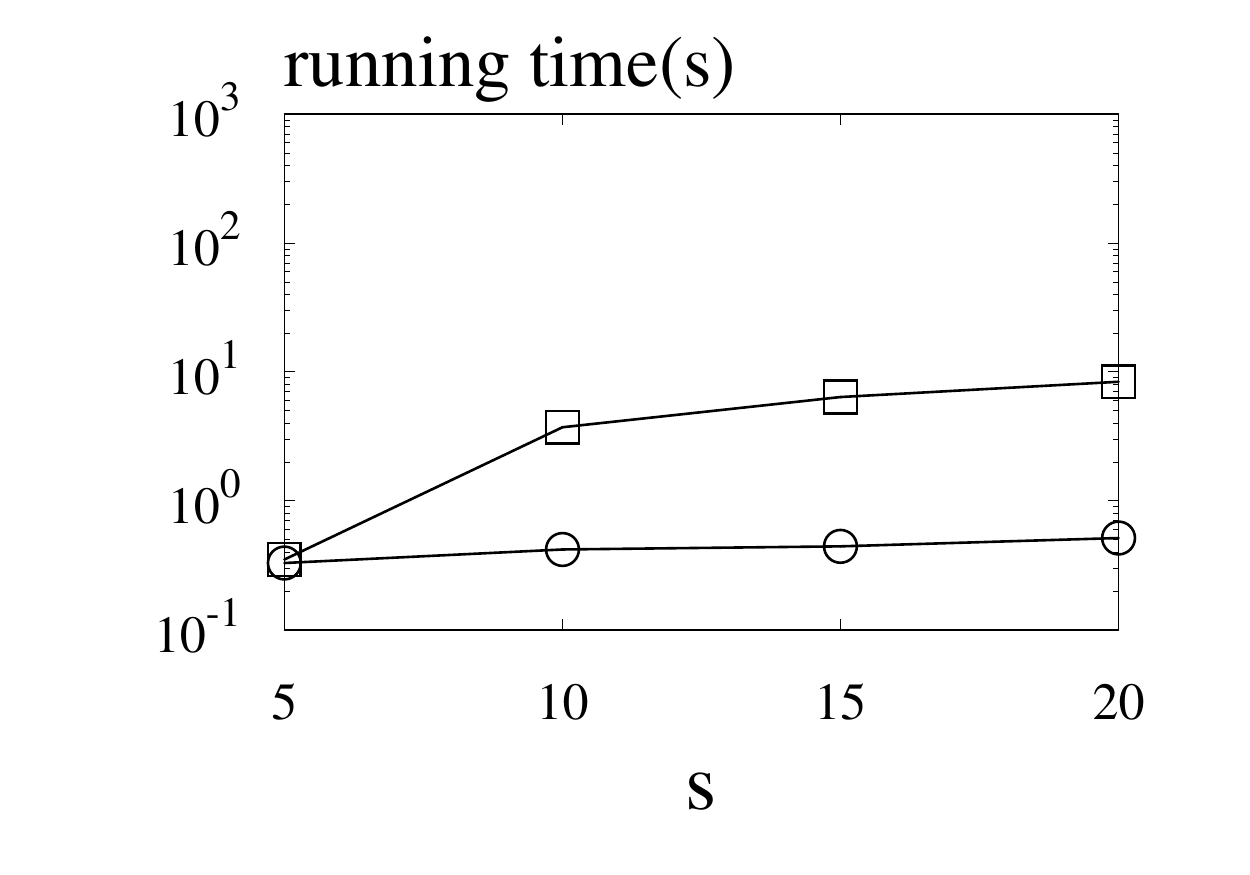} &
			\hspace{-6mm} \includegraphics[height=22mm]{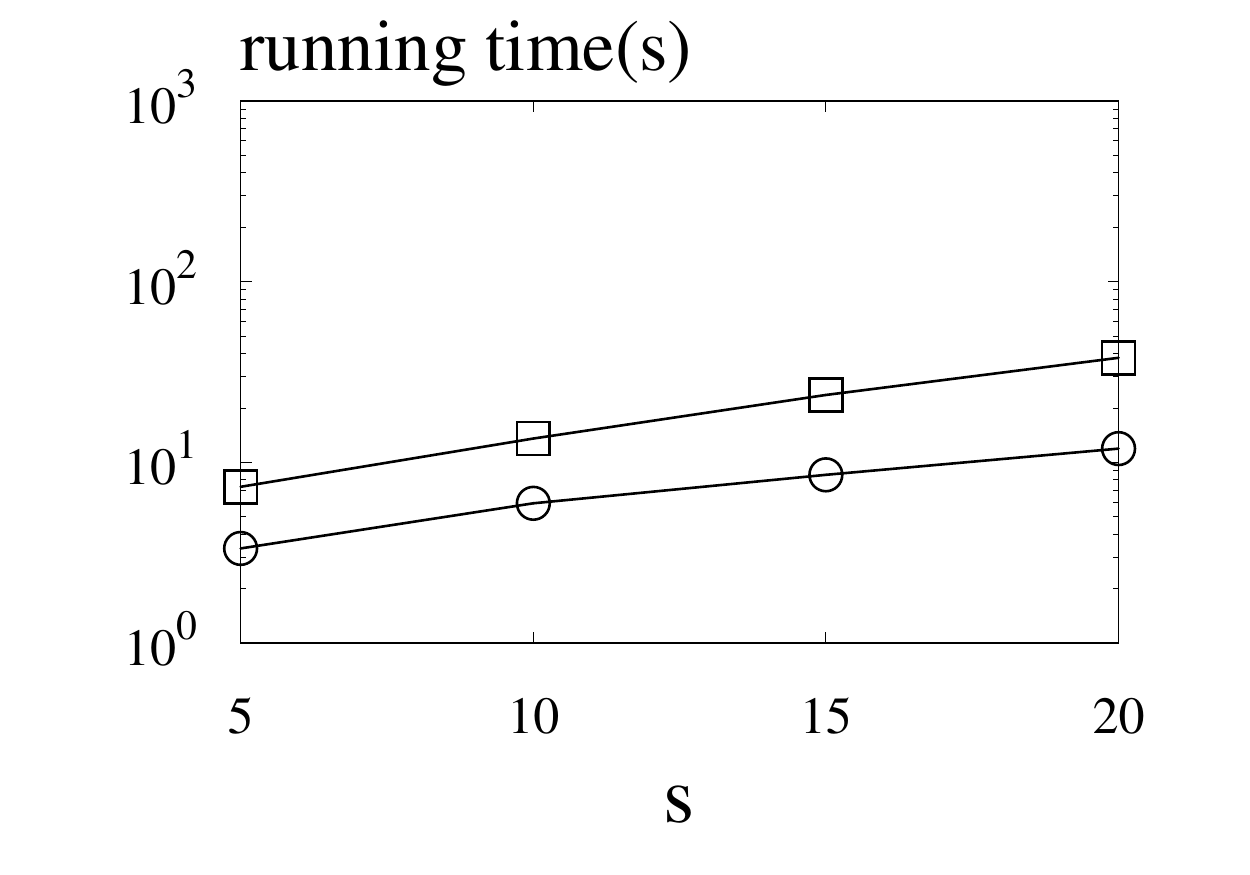} &
			\hspace{-6mm} \includegraphics[height=22mm]{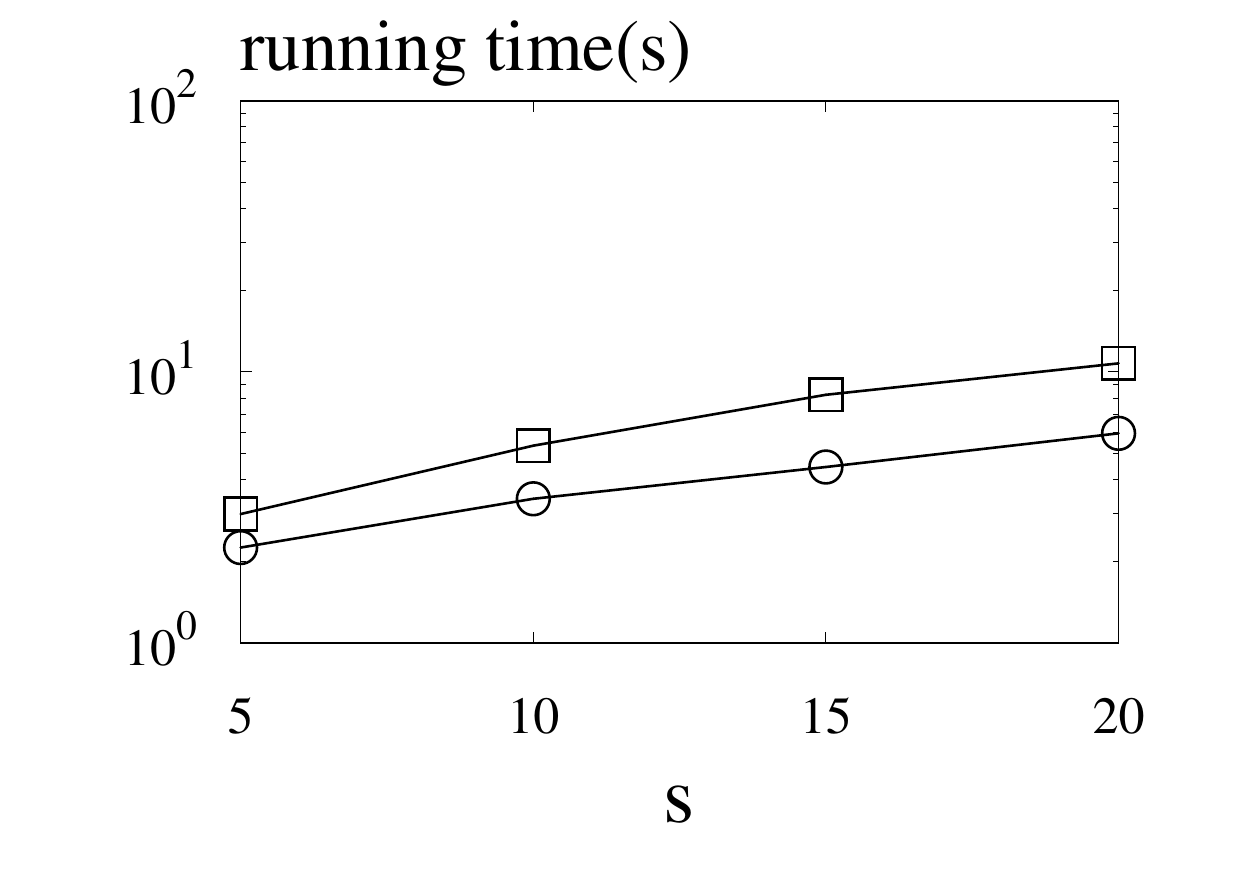} &
			\hspace{-6mm} \includegraphics[height=22mm]{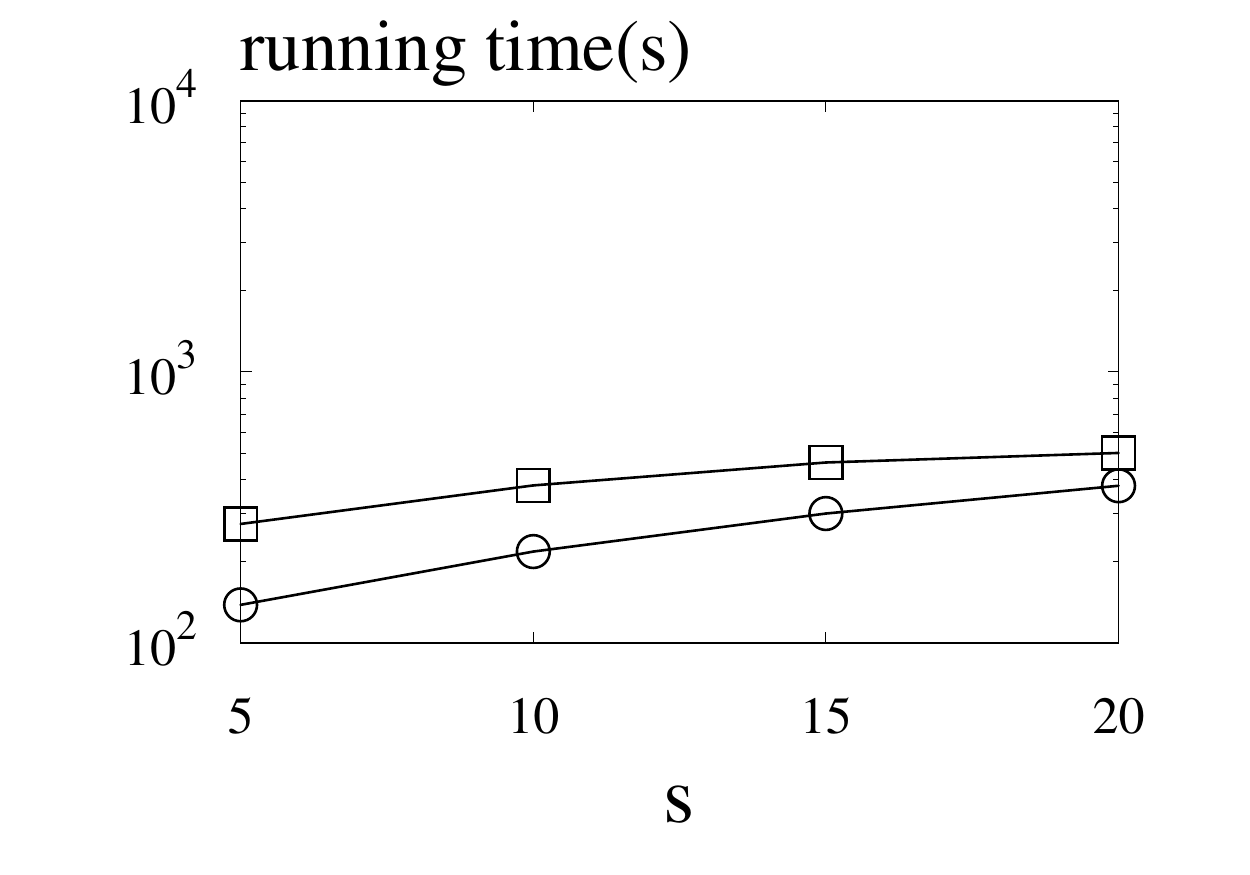}
			\\[-1mm]
			\hspace{-4mm} (a) Email &
			\hspace{-4mm} (b) DBLP &
			\hspace{-4mm} (c) Youtube &
			\hspace{-4mm} (d) Orkut &
			\hspace{-4mm} (e) Livejournal &
			\hspace{-4mm} (f) FriendSter \\[-1mm]
		\end{tabular}
		\vspace{-2mm}
		\caption{Running time vs. $s$ (avg, size-constrained)}
		\label{fig:time vs s size avg}
		\vspace{-2mm}
	\end{small}
\end{figure*}

\noindent \textbf{Exp-V: Effect of $r$.} We vary $r$ to evaluate the Local Search algorithm. Figure~\ref{fig:time vs r size sum} and Figure~\ref{fig:time vs r size avg} demonstrate that the performance of the Local Search algorithm is insensitive to $r$. Notably, $r$ would not be very large in reality since it is infeasible for uses to choose from numerous candidates. In such a scenario, the Local Search algorithm is insensitive to $r$, since the algorithm would always compute more than $r$ $k$-influential communities. Thus, when $r$ is not large, its value would not affect the performance of the algorithm.

\noindent \textbf{Exp-VI: Effect of $s$.} In this experiment, we vary $s$ to evaluate the efficiency of the Local Search algorithm. Figures~\ref{fig:time vs s size sum} and~\ref{fig:time vs s size avg} indicate that the running time of algorithms increases since we have to search more neighbor vertices at each iteration with the increase of $s$.

\begin{figure}[!t]
	\centering
	\vspace{-3mm}
	\begin{small}
		\begin{tabular}{c}
%			\multicolumn{1}{c}{\hspace{-6mm} \includegraphics[height=9mm]{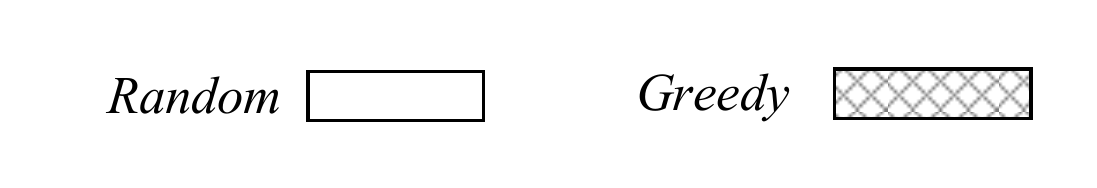}}  \\[-2.5mm]
			\hspace{-6mm} \includegraphics[height=25mm]{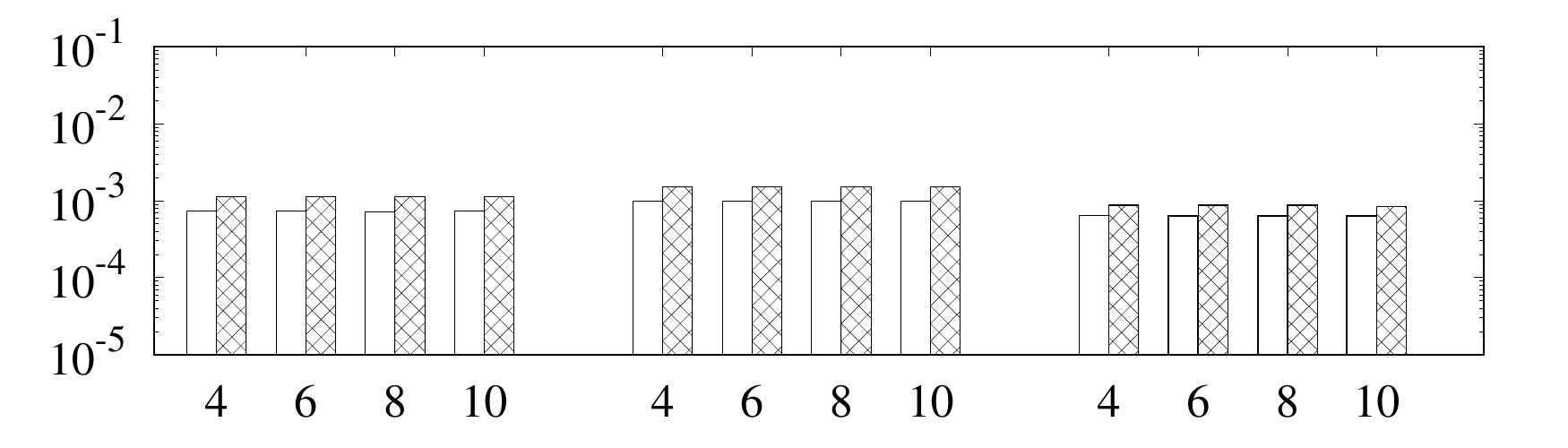}
			\\[-1mm]
			\hspace{1mm} (a) DBLP
			\hspace{8mm} (b) Orkut
			\hspace{8mm} (c) LiveJournal \\[-1mm]
		\end{tabular}
		\vspace{-2mm}
		\caption{$r$-th influence value (sum, size-constrained)}
		\label{fig:k inf sum}
		\vspace{-4mm}
	\end{small}
\end{figure}

\begin{figure}[!t]
	\centering
	\vspace{-3mm}
	\begin{small}
		\begin{tabular}{c}
			\multicolumn{1}{c}{\hspace{-6mm} \includegraphics[height=9mm]{two_}}  \\[-5mm]
			\hspace{-6mm} \includegraphics[height=25mm]{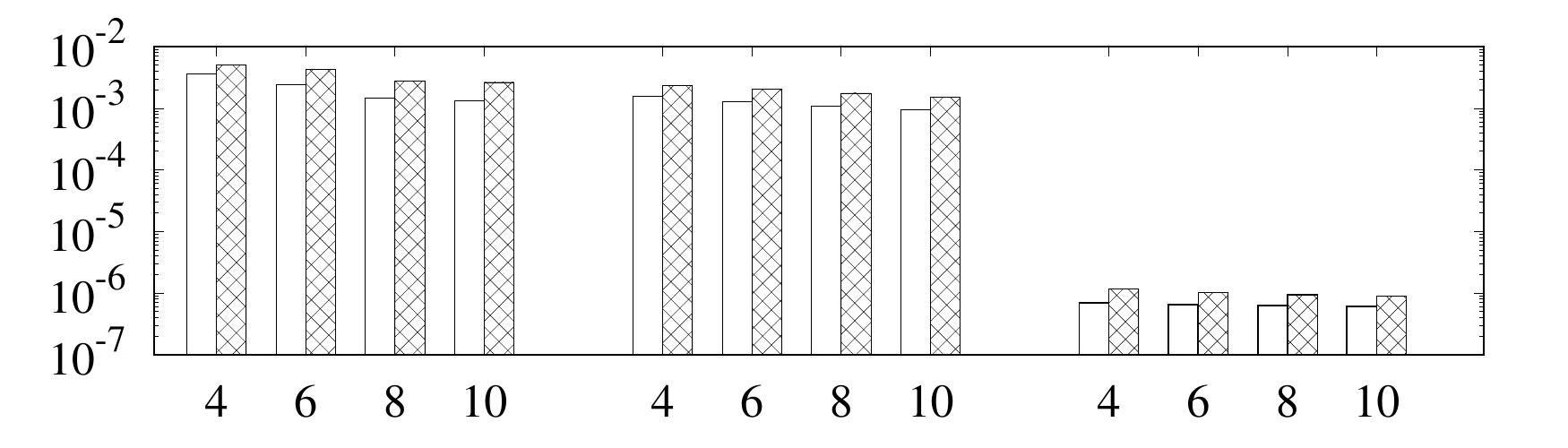}
			\\[-1mm]
			\hspace{1mm} (a) Email
			\hspace{8mm} (b) Youtube
			\hspace{8mm} (c) FriendSter \\[-1mm]
		\end{tabular}
		\vspace{-2mm}
		\caption{$r$-th influence value (avg, size-constrained)}
		\label{fig:k inf avg}
		\vspace{-4mm}
	\end{small}
\end{figure}

\noindent \textbf{Exp-VII: Effectiveness.} In this evaluation, we compare the greedy strategy with the random one. We fix $r=5$ and $s=20$, by varying $k$ from $\{4, 6, 8, 10\}$. Figure~\ref{fig:k inf sum} and Figure~\ref{fig:k inf avg} demonstrate that the influence value of $r$-th $k$-influential community obtained by greedy strategy is always larger than that computed by random strategy. This is because the size of the community is constrained. If we select the vertex with the largest influence value, a community with a larger influence value could be produced.

\subsection{Case study}
We evaluate a case study for the $k$-influentical community under various aggregation functions on a social network. The dataset could be downloaded from Aminer\footnote{https://www.aminer.org/data}, which is collected for the purpose of cross-domain recommendation. It includes five fields: Data Mining, Medical Informatics, Theory, Visualization, and Database. Each vertex represents a researcher, and the edge between two vertices indicates that they have co-authored at least $1$ publication. Figure~\ref{fig:case study} shows the top-$3$ non-overlapping $k$-influential community under different aggregation functions, when $k=4$. Since our algorithms are heuristic when the aggregation function is $sum$ or $avg$ for top-$r$ size-constrained $k$-influential community search problem. Thus, the result is not explicit. However, the result of the top-$r$ non-overlapping size-constrained $k$-influential community under aggregation functions, e.g., $sum$, $avg$, could satisfy different requirements in practice.

\begin{figure}[!t]
	\centering
	\begin{subfigure}[t]{0.15\textwidth}
		\centering
		\includegraphics[width=2.2cm]{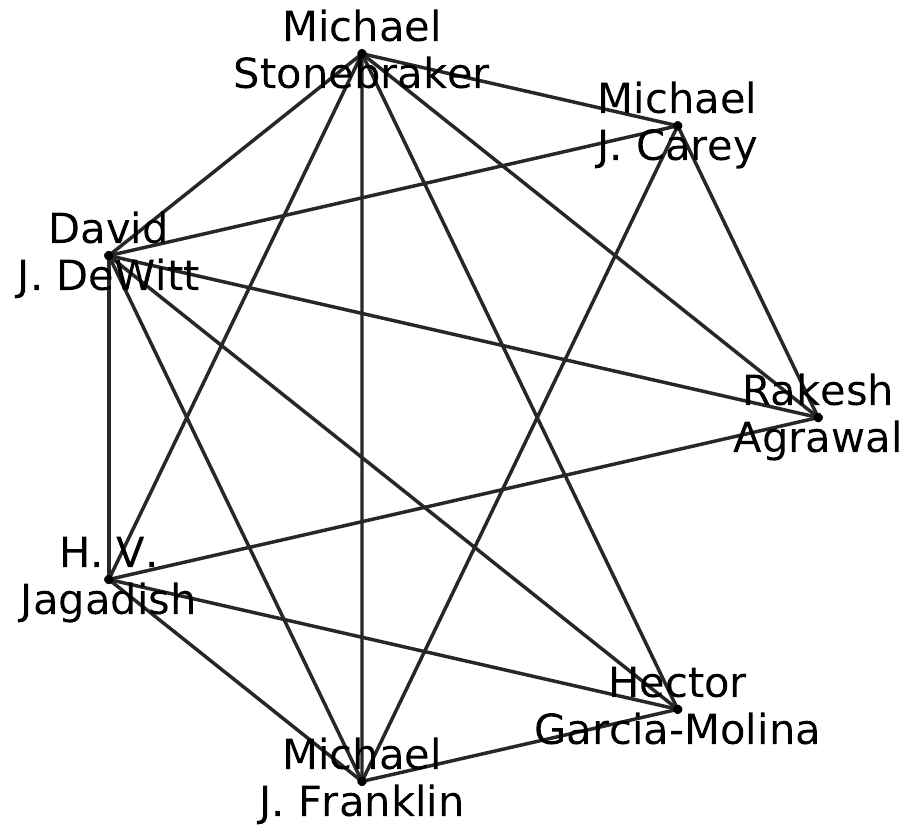}
		\caption{Min: top-$1$}
	\end{subfigure}%
	\begin{subfigure}[t]{0.15\textwidth}
		\centering
		\includegraphics[width=2.2cm]{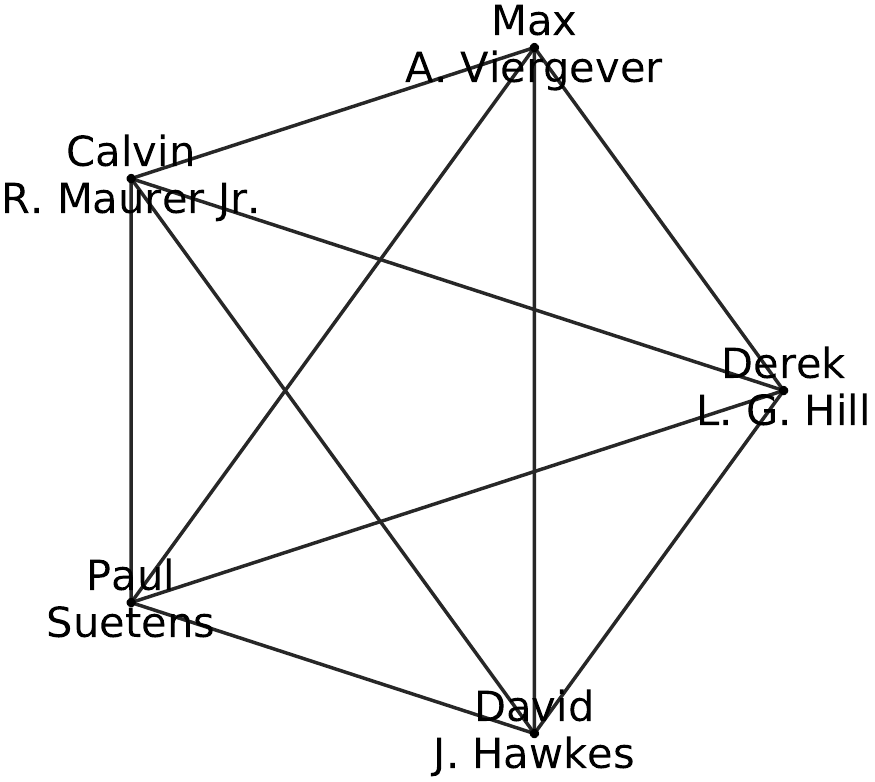}
		\caption{Min: top-$2$}
	\end{subfigure}%
	\begin{subfigure}[t]{0.15\textwidth}
		\centering
		\includegraphics[width=2.2cm]{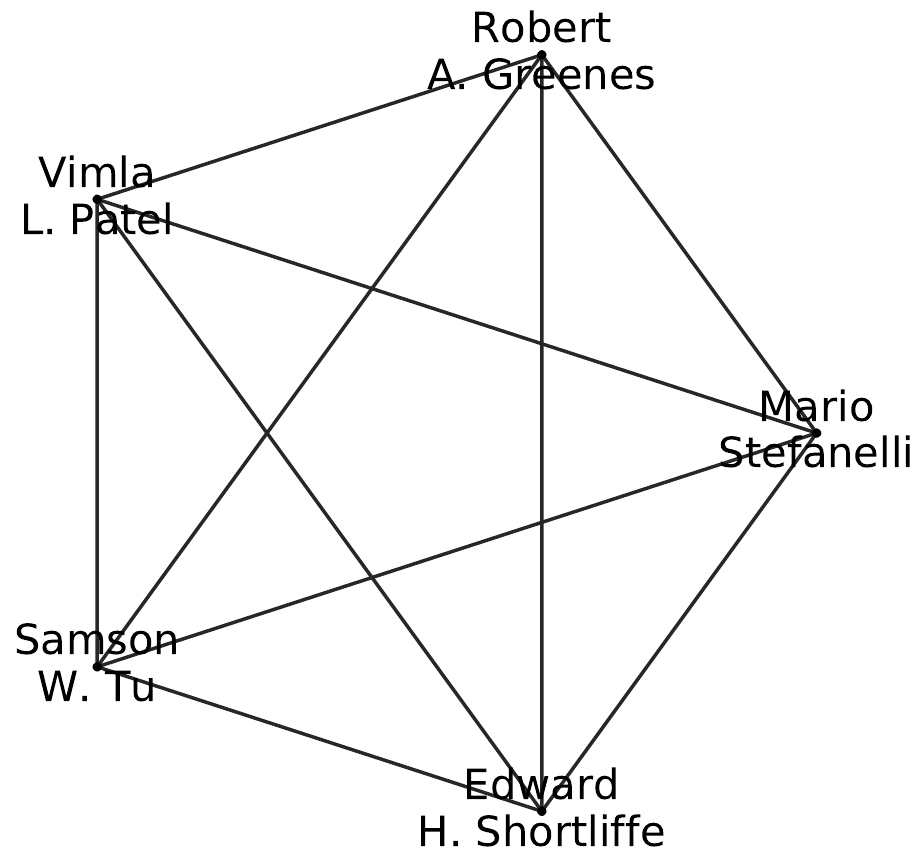}
		\caption{Min: top-$3$}
	\end{subfigure}%
	
	\begin{subfigure}[t]{0.15\textwidth}
		\centering
		\includegraphics[width=2.5cm]{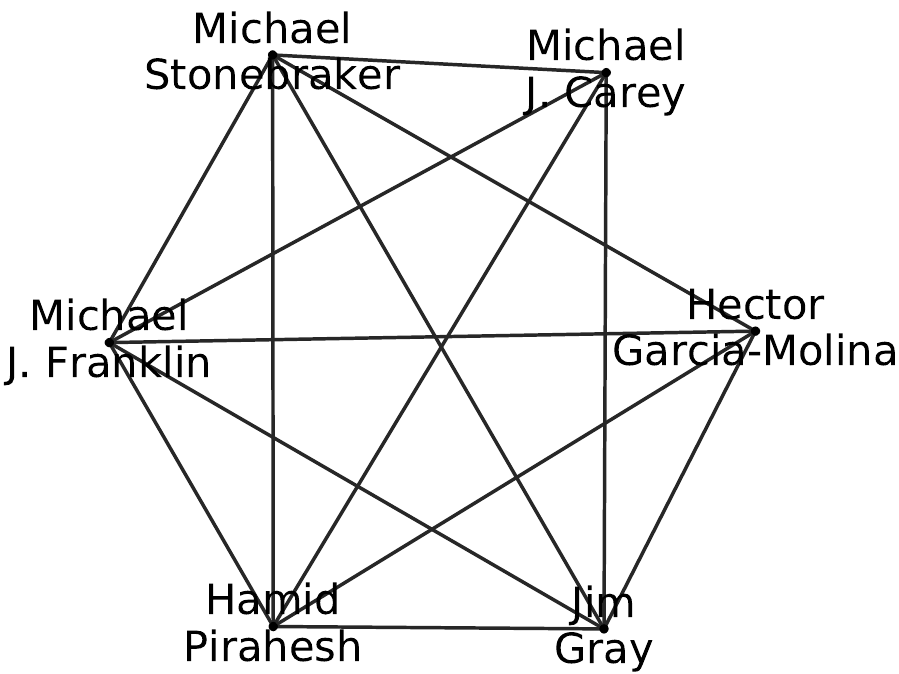}
		\caption{Avg: top-$1$}
	\end{subfigure}%
	\begin{subfigure}[t]{0.15\textwidth}
		\centering
		\includegraphics[width=2.5cm]{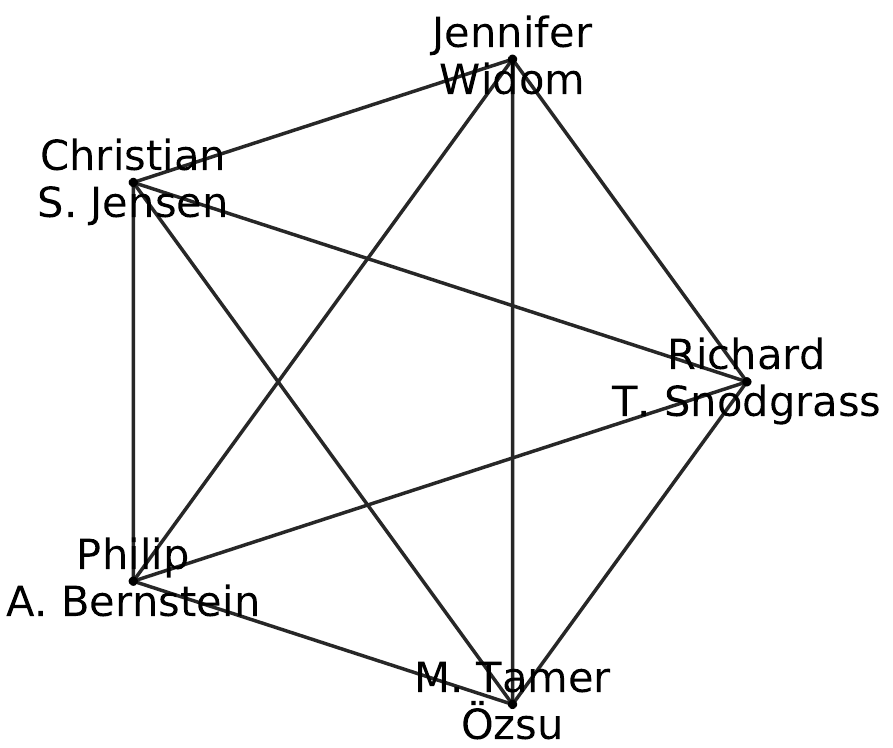}
		\caption{Avg: top-$2$}
	\end{subfigure}%
	\begin{subfigure}[t]{0.15\textwidth}
		\centering
		\includegraphics[width=2.2cm]{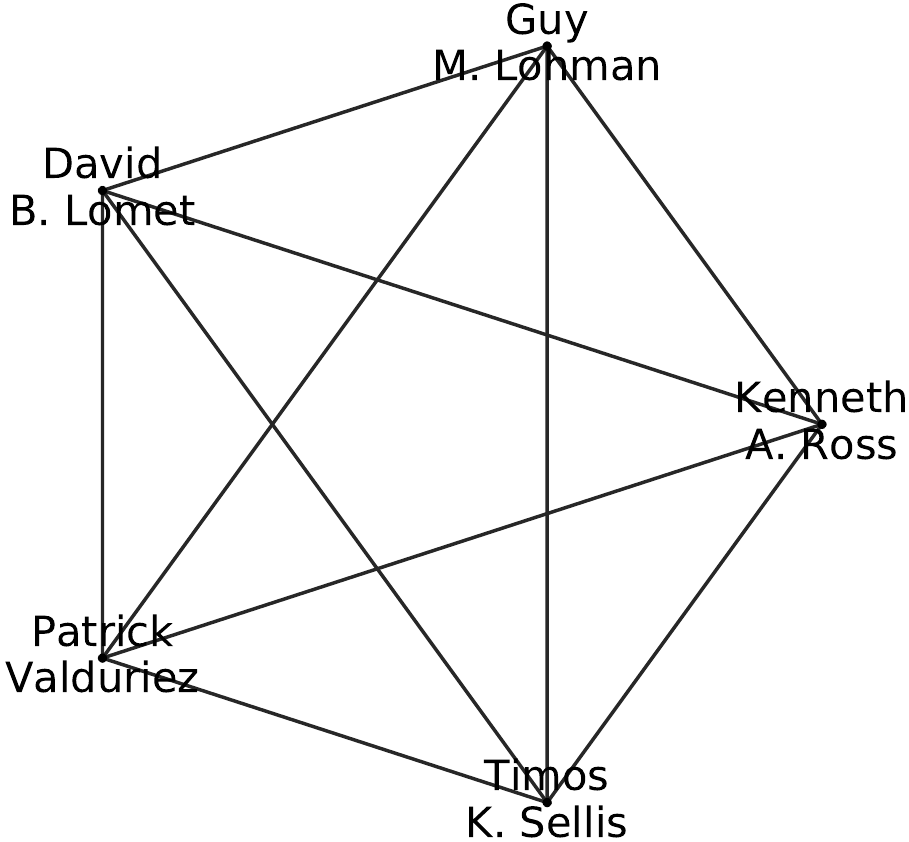}
		\caption{Avg: top-$3$}
	\end{subfigure}%
	
	\begin{subfigure}[t]{0.15\textwidth}
		\centering
		\includegraphics[width=2.5cm]{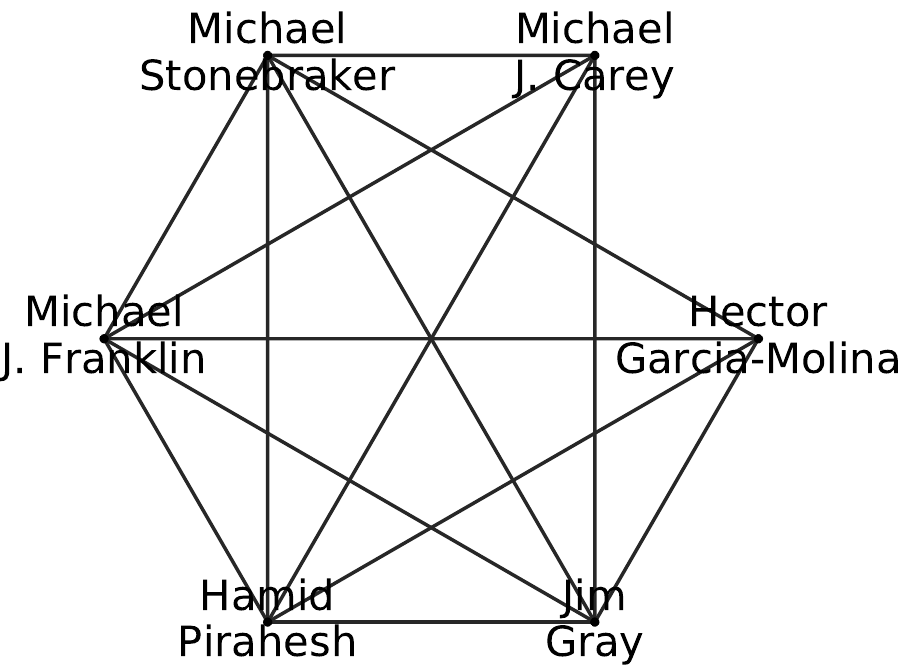}
		\caption{Sum: top-$1$}
	\end{subfigure}%
	\begin{subfigure}[t]{0.15\textwidth}
		\centering
		\includegraphics[width=2.5cm]{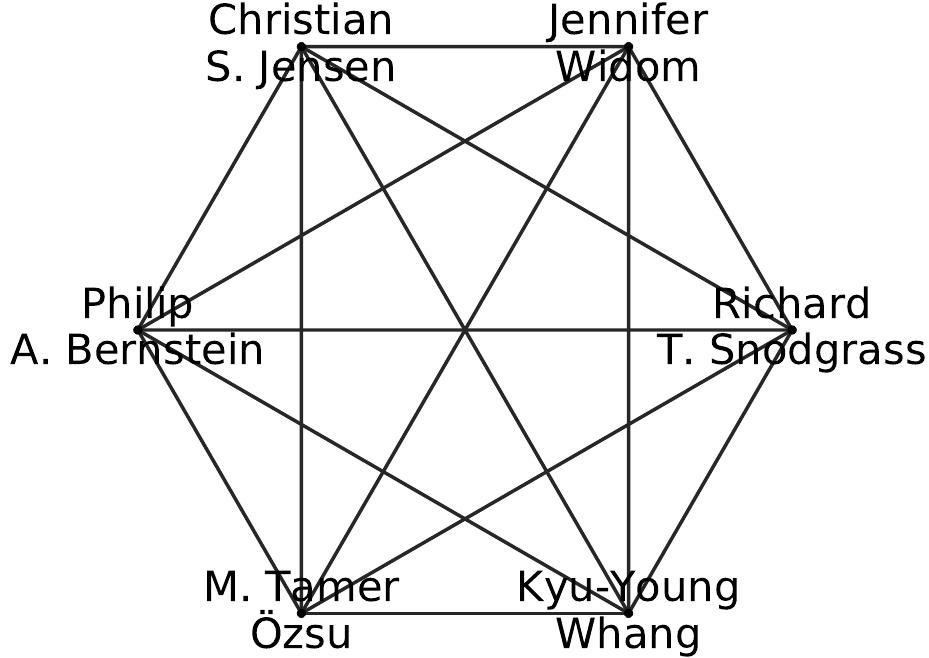}
		\caption{Sum: top-$2$}
	\end{subfigure}%
	\begin{subfigure}[t]{0.15\textwidth}
		\centering
		\includegraphics[width=2.5cm]{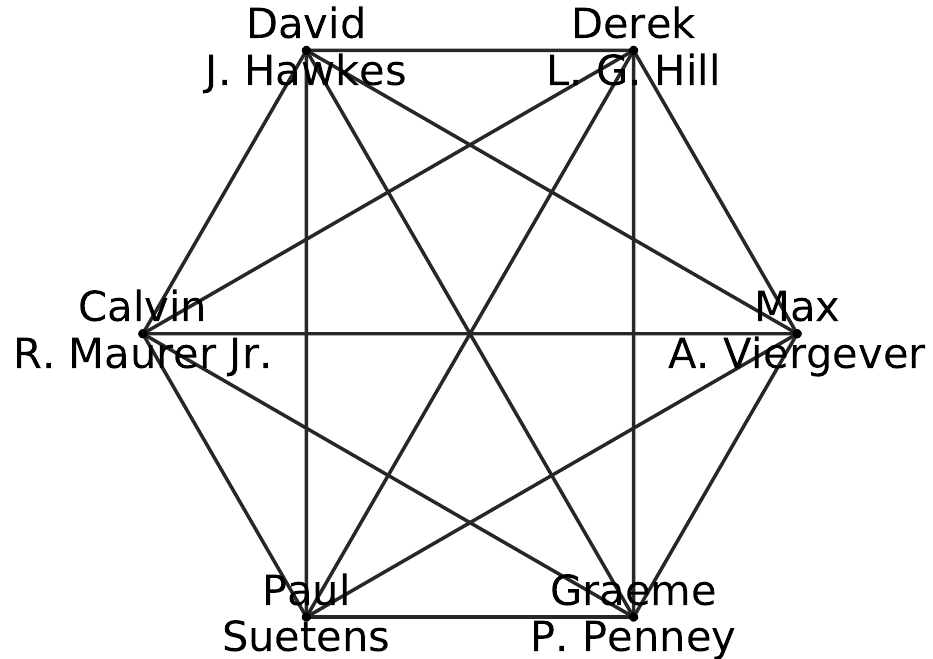}
		\caption{Sum: top-$3$}
	\end{subfigure}%
	\caption{Case Study: Aminer}
	\label{fig:case study}
\end{figure}

\section{related work}\label{sec:related}

\noindent \textbf{Community Detection.} Community detection has been studied for several decades. The goal of community detection is to retrieve all communities that fulfill constraints. It is first investigated in~\cite{gn02, jlc18}. Following that, many improved methods~\cite{kkks08} are proposed. Recently, some researchers study community detection over different kinds of graphs. Some investigate community detection by using machine learning techniques. For instance, Jian et al.~\cite{jwc20} present solutions about community detection over heterogeneous networks. Li et al.~\cite{lzhr20} investigate community detection in the presence of adversarial attacks.

\noindent \textbf{Community Search.} Community search has been widely studied by a large number of researchers~\cite{lai2021pefp,jin2021fast,peng2021efficient}. Sozio et al.~\cite{sg10} present a linear-time algorithm to find a maximal connected $k$-core that contains the set of query vertices. Next, Cui et al.~\cite{cxww14} provide more efficient algorithms for the above-mentioned problem. 
Recently, some scientists have concentrated their efforts on community search over various graphs or with various community models. For example, Fang et al. solve community search over spatial graphs in~\cite{fcll17}. 

Moreover, Fang et al.~\cite{fyzl20} propose effective and efficient algorithms for community search over heterogeneous graphs. Huang et al.~\cite{hl17}, Liu et al.~\cite{lzhx20} and Chen et al.~\cite{clzl18} study community search based on the $k$-truss community model. As for the influential community search problem, Li et al.~\cite{lqym15} firstly study the top-$r$ influential community search problem. Then, an improved online algorithm and a novel progressive method is proposed in~\cite{bclz17}. Both of them, however, ignore an essential point: in some cases, the aggregation function is not $min$, and existed technique cannot be applied directly.

\noindent \textbf{Cohesive Subgraph Mining.} Cohesive subgraphs discovery is a practical and fascinating problem in graph mining. There are some definitions to measure cohesive subgraphs. Among them, the maximal clique~\cite{ldwm19,zzzq19}, the $k$-core~\cite{zzzl20, seid83,peng2018efficient}, the $k$-truss~\cite{cohen08}, and the $k$-edge connected subgraphs~\cite{aiy13, cyql13} are widely-used models. Due to the widespread use of cohesive subgraphs in graph mining, an increasing number of people have focused their attention on this problem in recent years. To illustrate, $k$-core decomposition is studied in~\cite{ckco11, kbst15} and $k$-truss decomposition is investigated in~\cite{wc12, clsw20}.

\section{Future Works}
\label{sect:future}
\rev{
As for the size unconstrained problems that are NP-hard, there is no algorithms proposed. The main obstacle of this problem is the costly search space. For such a problem, a possible direction would be carefully design pruning rules and investigate approximation method. To speedup this process, a parallel or distributed context could also be investigated.}

\section{conclusions}\label{sec:conclusion}

In this paper, we investigate the problem of extracting the top-$r$ $k$-influential communities in social networks under various aggregation functions. As for the top-$r$ $k$-influential community search problem, if the aggregation function is $sum$, we propose an efficient algorithm. Furthermore, we prove the hardness of the problem under size constraint and provide some heuristic algorithms for the top-$r$ size-constraint $k$-influential community search problem. Finally, extensive experiments on $6$ real-world graphs indicate that our algorithms are efficient and effective. The case study reveals that our model has broad applications.

\section*{Acknowledgment}
This work is supported by Hong Kong RGC ECS grant (No. 24203419),
RGC CRF grant (No. C4158-20G), Hong Kong ITC ITF grant  (No. MRP/071/20X), and NSFC grant (No. U1936205). It is also supported by the Research Gants Council of Hong Kong, China under No. 14203618, No. 14202919 and No. 14205520.

%\balance
\newpage
\bibliographystyle{IEEEtran}
\bibliography{main}

%\input{appendix}
%\newpage
%\appendix
%already give the link for the full version
%\input{miss}

%\begin{thebibliography}{00}
%\bibitem{b1} G. Eason, B. Noble, and I. N. Sneddon, ``On certain integrals of Lipschitz-Hankel type involving products of Bessel functions,'' Phil. Trans. Roy. Soc. London, vol. A247, pp. 529--551, April 1955.
%\bibitem{b2} J. Clerk Maxwell, A Treatise on Electricity and Magnetism, 3rd ed., vol. 2. Oxford: Clarendon, 1892, pp.68--73.
%\bibitem{b3} I. S. Jacobs and C. P. Bean, ``Fine particles, thin films and exchange anisotropy,'' in Magnetism, vol. III, G. T. Rado and H. Suhl, Eds. New York: Academic, 1963, pp. 271--350.
%\bibitem{b4} K. Elissa, ``Title of paper if known,'' unpublished.
%\bibitem{b5} R. Nicole, ``Title of paper with only first word capitalized,'' J. Name Stand. Abbrev., in press.
%\bibitem{b6} Y. Yorozu, M. Hirano, K. Oka, and Y. Tagawa, ``Electron spectroscopy studies on magneto-optical media and plastic substrate interface,'' IEEE Transl. J. Magn. Japan, vol. 2, pp. 740--741, August 1987 [Digests 9th Annual Conf. Magnetics Japan, p. 301, 1982].
%\bibitem{b7} M. Young, The Technical Writer's Handbook. Mill Valley, CA: University Science, 1989.
%\end{thebibliography}
%\vspace{12pt}
%\color{red}
%IEEE conference templates contain guidance text for composing and formatting conference papers. Please ensure that all template text is removed from your conference paper prior to submission to the conference. Failure to remove the template text from your paper may result in your paper not being published.

\end{document}